\numberwithin{equation}{section}
\def \dis {\displaystyle}
\def \P {{\Psi}}
\def\R{\mathbb R}
\def\S{{\mathbb S}^2}
\def\N{\mathbb N}
\def\C{\mathcal C}
\def\D{\mathscr{D}}
\def\M{\mathcal M}
\def\E{\mathcal E}
\def \mS {\mathcal{S}}
\def \B{\mathsf{B}}
\def\d{\mathrm{d}}
\def\Q{\mathcal{Q}}
\def \V {\bm{V}_{N}}
\def \Vk {\bm{V}_{\ell}}
\def \Vku {\bm{V}_{\ell+1}}
\def \le {\leqslant}
\def \ge {\geqslant}
\def \leq {\leqslant}
\def \geq {\geqslant}
\def \Gk {\bm{\Gamma}_{k;k+1}}
\newtheorem{theo}{Theorem}[section]
\newtheorem{cor}[theo]{Corollary}
\newtheorem{lem}[theo]{Lemma}
\newtheorem{prp}[theo]{Proposition}
\newtheorem{rem}[theo]{Remark}
\newtheorem{ass}[theo]{Assumption}
\newtheorem{conj}[theo]{Conjecture}
\theoremstyle{definition}
\newtheorem{defi}[theo]{Definition}
\theoremstyle{example}
\begin{document}

\title{\textbf{A Kac model for kinetic annihilation}}

\author[1]{Bertrand \textsc{Lods} \thanks{\texttt{bertrand.lods@unito.it }}}
\author[2]{Alessia \textsc{Nota} \thanks{\texttt{nota@iam.uni-bonn.de }}}
\author[3]{Federica \textsc{Pezzotti} \thanks{\texttt{pezzotti@mat.uniroma1.it }}}

\affil[1] {\em Dipartimento ESOMAS, Universit\`a degli Studi di Torino \& Collegio Carlo Alberto, \em Corso Unione Sovietica, 218/bis, 10134 Torino, Italy}
\affil[2]{\em Institute for Applied Mathematics, University of Bonn, \em Endenicher Allee 60, D-53115 Bonn, Germany}  
\affil[3]{\em Dipartimento di Matematica "G. Castelnuovo", Sapienza, Universit\`{a} di Roma, Italy}

\date{\today}

\maketitle

\begin{abstract}
In this paper we consider the stochastic dynamics of a finite system of particles in a finite volume (Kac-like particle system) which annihilate with probability $\alpha \in (0,1)$ or collide elastically with probability $1-\alpha$. We first establish the well-posedness of the particle system which exhibits no conserved quantities. We rigorously prove that, in some thermodynamic limit, a suitable 
hierarchy of kinetic equations  is recovered for which tensorized solution to the homogenous Boltzmann with annihilation is a solution. For bounded collision kernels, this shows in particular that propagation of chaos holds true. Furthermore, we make conjectures about the limit behaviour of the particle system when hard-sphere interactions are taken into account. \end{abstract}

\tableofcontents

 \section{Introduction}\label{sec:intro}

\subsection{The kinetic annihilation equation}

In a kinetic framework, the behavior of a system of particles 
which annihilate with probability $\alpha \in (0,1)$ or collide elastically with probability $1-\alpha$ can be described (in a spatially homogeneous situation) by the so-called velocity distribution $f(t,v)$ which represents the probability density of particles with velocity $v \in \R^d$ $(d \geq 2)$ at time $t \geq 0.$  The time-evolution of the
one-particle distribution function \(f(t,v)\), \(v\in\R^d\),
\(t>0\) satisfies the following
\begin{equation}\label{BE}
\partial_t f(t,v)=(1-\alpha)\Q(f,f)(t,v) -\alpha \Q_-(f,f)(t,v)
\end{equation}
where $\Q=\Q_{+}-\Q_{-}$ is the quadratic Boltzmann collision operator. The gain part $\Q_{+}$ and loss part $\Q_{-}$ are defined by the bilinear symmetrized forms
 \begin{multline*}\label{bilin}
 \Q_{+}(g,f)(v) = \frac{1}{2}\,\int _{\R^3 \times \S } \B(v-v_*,\omega)
         \left(g'_* f' + g' f_* '\right) \, \d v_* \, \d\omega, \\
  \Q_{-}(g,f)(v) = \frac{1}{2}\,\int _{\R^3 \times \S } \B(v-v_*,\omega)
         \left(g_* f  + g  f_*  \right) \, \d v_* \d\omega,
 \end{multline*}
where we have used the shorthands $f=f(v)$, $f'=f(v')$, $g_*=g(v_*)$ and
$g'_*=g(v'_*)$ with post-collisional velocities $v'$ and $v'_*$  parametrized by
 \begin{equation}\label{eq:rel:vit}
 v'=v- \left[(v-v_{*}) \cdot \omega\right] \omega \qquad \text{ and } \quad v'_*=v_* + \left[(v-v_{*}) \cdot \omega\right] \omega, \qquad \omega   \in  {\S}
.\end{equation}

The equation above has been introduced recently in \cite{Ben-Naim,coppex04, coppex05, Kaprivsky, Piasecki,Trizac} as a peculiar kinetic model aiming to test the relevance of non-equilibrium statistical mechanics for systems of \textit{reacting particles}. Such systems of reacting particles are particularly challenging in particular because the lack of collisional invariants make the derivation of suitable hydrodynamics non trivial, even at a formal level. Notice indeed that the kinetic equation \eqref{BE} is \emph{highly disspative} since any reasonable solution $f(t,v)$  to \eqref{BE} is such that its mass and kinetic energy
$$n(t)=\int_{\R^{3}}f(t,v)\d v, \qquad E(t)=\int_{\R^{3}}f(t,v)|v|^{2}\d v$$
are decreasing in time, i.e. 
\begin{equation}\label{eq:denE}
\dfrac{\d}{\d t}n(t)=-\alpha \int_{\R^d}\Q_-(f,f)(t,v)\d v \leq 0, \qquad \text{ while } \quad \dfrac{\d}{\d t}E(t)=-\alpha \int_{\R^d} |v|^2\Q_-(f,f)(t,v)\d v \leq 0.\end{equation}
Because of this,  the only possible long time behaviour for $f(t,v)$ is  
$$\lim_{t\to \infty}f(t,v)=0.$$
The long time behavior of solutions to \eqref{BE} has been studied, at a mathematical level, in the recent papers \cite{BaLo,BaLo2,ABL} providing the existence, uniqueness and stability of suitable self-similar profile associated to \eqref{BE} which captures its asymptotic behaviour for long time in a more accurate way.  

With the aforementioned contributions, the qualitative behaviour of solutions to \eqref{BE} is by now quite well understood but a rigorous derivation of the equation from a physically grounded model of interacting particles is still missing. Filling this blank is the purpose of the present paper. 

\subsection{Justification of the kinetic model  \eqref{BE}: Kac-like system}\label{sec:kac}

As we discussed in the previous section, the goal of this paper is to provide a physical ground to the spatially homogeneous Boltzmann equation with annihilation \eqref{BE}. In order to do this, we provide an approximating Kac-like particle system from which we aim to recover \eqref{BE} in a suitable scaling limit where the number of particles, as well as the volume, goes to infinity but the number density is finite. 

In particular, one of the main question in the justification of the model is to prove the validity of the so-called ``Boltzmann's molecular chaos assumption'' (\emph{Stosszahlansatz}). More specifically, the unkown $f(t,v)$ of Eq. \eqref{BE} represents the  distribution of a typical particle  with velocity $v$ at time $t \geq0 $ and, as well-known, the right-hand side of \eqref{BE} is a bilinear operator which models the change of velocity of a particle  -- due to collisions with another particle or annihilation. In particular, the loss part $\Q_{-}$ of the collision operator  takes into account the disappearance of particles with velocity $v$ due to the encounter with particles with velocities $v_{*}$. Typically,  the involved distribution appearing there, should be the joint distribution $f_{2}(v,v_{*})$ of finding 2 particles with velocities $(v,v_{*})$. In \eqref{BE}, this joint distribution is replaced by the tensorized distribution $f(v) \otimes f(v_{*})$ which is related to some no-correlation assumption. 

Proving the propagation of chaos is roughly speaking, justifying this no-correlation assumption and it is a very challenging problem for collisional kinetic equation whose story can be traced back to Boltzmann himself. The literature on the subject is extremely wide: we mention the seminal paper by O. E. Lanford \cite{lanford} concerning the derivation of the spatially inhomogeneous Boltzmann equation from a particles system in the \emph{Boltzmann-Grad limit}, see also the more recent contributions \cite{GTSR,chiara}. We also mention the recent analysis of the correlation error for Boltzmann equation in \cite{simonella} which, as \cite{lanford}, works in the \emph{\textbf{grand canonical ensemble formalism}} which is the one adopted here. 

For the spatially homogeneous case the justification of kinetic equations  from a particles system is usually performed through a so-called \emph{\textbf{mean-field limit}} following the program suggested by M. Kac who, in a seminal paper \cite{K}, proposed  a first attempt of clarifying some aspects of the transition from an
$N$-particle system to a one-particle kinetic description. In this paper (see also the modification of the model in \cite{kean}), he introduced a fundamental \emph{stochastic particle model} which  consists in a system of $N$
particles with associated velocities $\bm{V}_N=(v_1, \dots,v_N)\in\R^{3N}$, whose dynamics is the following stochastic process: 
at a random time chosen accordingly to a Poisson process, pick a pair of particles, say $i$ and $j$, and perform
the transition 
$$v_i,v_j\to v^{\prime}_i,v^{\prime}_j,$$
where $v^{\prime}_{i},v^{\prime}_{j}$ are given in \eqref{eq:vivjprime}.  The forward Kolmogorov equation associated to this Kac's stochastic model is then called the master equation whose unknown $F_{N}(t,\bm{V}_{N})$ represents the density of particles having the velocities $\bm{V}_{N} \in \R^{3N}$ at time $t \geq 0$ and the proof of propagation of chaos for the spatially homogeneous Boltzmann equation consists in showing that marginals of $F_{N}(t,V_{N})$ converge (as $N \to \infty$) to tensorization of the solution to the Boltzmann equation. This question has been addressed in a series of papers, the first ones dealing with bounded collision kernels and  culminating in the contribution \cite{MM} where the propagation of chaos was obtained in a quantitative way for general initial data and both hard-sphereand maxwellian molecules. We refer to \cite{MM} for an account of the literature on the subject. We just aim to mention here that suitable modifications of original Kac's stochastic model have been proposed in the literature to handle different kinds of kinetic-like equations, including the Landau equation \cite{fournier, miot} or models with quantum interactions \cite{fede}. The literature for mean field limit for Vlasov-like equations (including $2D$ Euler vortex model) is even more abundant, we mention here only \cite{spohn, jabin, mischler} among important contributions to the field.

To take into account the annihilation of particles, we introduce here a modification of the original Kac model which, in particular, includes the possibility of mass dissipation.  We can already mention here that the scaling limit we shall perform is not stictly speaking of \textit{mean-field} type but has rather to be seen as a \textit{\textbf{thermodynamic limit}}.  
More precisely, we consider an $N$-particle system in a region $\mathcal{D}$ with finite volume $\Lambda=|\mathcal{D}|$ whose state space is $\R^{3N}$. The evolution is the following. 
Suppose that we have two clocks that at an exponential time pick a pair of particles. At a first collection of times $\{t_k\}_{k\geq 1}$  which are 
separated exponentially at rate $r_1=(1-\alpha)r$ 
with independent increments $t_k-t_{k-1}$ two particles (say $i$ and $j$) are chosen, uniformly and at random, to collide. This gives the jump process 
 $\bm{V}_N=\{v_1,\dots, v_i,\dots,v_j,\dots,v_N\}\to \bm{V}^{i,j}_N=\{v_1,\dots, v^{\prime}_i,\dots,v^{\prime}_j,\dots,v_N\}$ 
where 
\begin{equation}\label{eq:vivjprime}
v_i'=v_i- [ (v_i-v_j)\cdot \omega] \omega \qquad \text{ and } \quad v'_j=v_j +[ (v_i-v_j)\cdot \omega]  \omega\end{equation}
are the outgoing velocities arising from an elastic collision with scattering vector $\omega$. 
While, at a second collection of times $\{\tilde{t}_k\}_{k\geq 1}$ which are 
separated exponentially at rate $r_2=\alpha r$ two particles (say $i$ and $j$) are chosen uniformly and at random annihilate, disappearing from the system. 

The probability of such a transition is assumed to be a function of the modulus of the 
relative velocities of the two particles involved in the collision and of the angle between their relative velocity and the unit scattering vector $\omega\in \S$ (see Section \ref{sec:colker}). 

The \emph{master equation} for this model is the Kolmogorov equation associated to the Markov
process we are considering. We now describe it with more details. Since we want to describe a finite system of particles, contained in a finite volume $\mathcal{D}\subset \R^3$, whose number $N$ is  \textit{not fixed through time}, it appears  convenient to use the \emph{\textbf{grand canonical ensemble}} formalism for which the sample space is defined as
\begin{equation}
\Omega=\bigcup_{N=0}^{\infty}\left (N, \mathscr{P}_N(\mathbb{R}^3)\right)\subset  \mathbb{N}\times  \mathscr P({\mathbb R}^3),  \label{S2E1}\end{equation}
where $\mathscr P_N(\mathbb{R}^3)$ is the family of subsets of $\mathbb{R}^3$ with cardinality $N$. The sample set $\Omega$ is therefore the set of all the pairs $(N, \omega _N)$ where $N\in \mathbb{N}$ and $\omega _N$ is any finite subset of $N$ three dimensional vectors  $v_1, v_2, \cdots, v_{N}$.  For every time $t\ge 0$ the probability distribution of each state $(N, (v_1, \dots,  v_{N}))\in \Omega$ is 
$$\dfrac{1}{N!}\P_N(t, v_1, \dots, v_{N})$$ where $\left\{\P_N(t)\right\}_{N\in  \mathbb{N}}$ is a sequence of non negative functions $\P_N(t)=\P_N(t, \cdot)$, each of them  defined  on $\mathscr P_N({\mathbb R}^3)$ and normalized according to
\begin{eqnarray}
\label{S2E2}
\sum_{N=0}^\infty \frac{1}{N!}\int_{\R^{3}}\d  v_1\cdots \int_{\R^{3}}\d v_{N}  \P_N(t, v_1, \cdots, v_{N})=1.
\end{eqnarray}
The functions $\P_N(t, v_1, \dots, v_N)$ are assumed to be symmetric with respect to any permutation of the indices  $1,\dots, N$ and no  restrictions are imposed on  the range of velocity values.\medskip

For any $N \in \N$ and $\V=(v_1,\ldots,v_{N}) \in \R^{3N}$, $t \geq 0$,  $\P_N(t,\V)=\P_N(t, v_1, \dots, v_{N})$ is 
the velocity distribution  function of the $N$-particle configuration $(v_1, \dots, v_{N})$. The evolution equation for the velocity distribution function $\P_N(t,\V)$, 
i.e.~the forward Kolmogorov equation for the stochastic process described above, is the following master equation:
\begin{align}\label{PNVN}
\partial_t \P_N(t,\V) =& \, (1-\alpha)\sum_{1 \leq i < j \leq N}\int_{\S} \B_{\Lambda}(v_i-v_j,\omega) \left[\P_N\left(t,\V^{i,j}\right)-\P_N\left(t,\V\right)\right]\d\omega \nonumber \\&
+\frac{\alpha}{2} \int_{\R^3} \d v_{N+1}\int_{\R^3}\d v_{N+2}\int_{\S} \B_{\Lambda}(v_{N+1}-v_{N+2},\omega)\P_{N+2}\left(t,\V, v_{N+1},v_{N+2}\right)\d\omega \nonumber \\&
 -\alpha \sum_{1 \leq i < j \leq N}\int_{\S} \B_{\Lambda}(v_i-v_j,\omega)\P_N\left(t,\V\right)\d\omega,
\end{align}
supplemented with initial datum $\P_N(0,\V)\in L^1(\R^{3N})$. Here, for any scattering vector $\omega \in \S$, we denoted by
$$\V^{i,j}=(v_1,\ldots,v_{i-1},v_i',v_{i+1},\ldots,v_{j-1},v_j',v_{j+1},\ldots,v_{N})$$
where
$$v_i'=v_i- [ (v_i-v_j)\cdot \omega] \omega \qquad \text{ and } \quad v'_j=v_j +[ (v_i-v_j)\cdot \omega]  \omega$$
are the post-collisional velocities.

The collision kernel $\B_{\Lambda}$ is a suitable modification of the kernel $\B$ in \eqref{BE} which takes into account the fact that, in \eqref{PNVN}, particles are enclosed in a   bounded region $\mathcal{D} \subset \R^{3}$ with finite volume $\Lambda=|\mathcal{D}|$ whereas, in \eqref{BE}, they are distributed in the whole space $\R^{3}$ (see Assumption \ref{hyp2}). More precisely, even if both \eqref{BE} and \eqref{PNVN} are spatially homogeneous, spatial effects are implicitly taken into account: the spatially homogeneous assumption in \eqref{PNVN}  means that the distribution function $\P_{N}(t,x_{1},\ldots,x_{N},v_{1},\ldots,v_{N})$ of $N$ particles at time $t \geq0$ having position $(x_{1},\ldots,x_{N}) \in \mathcal{D}^{N}$ with velocities $v_{1},\ldots,v_{N} \in \R^{3N}$ is the same for all $x_{i} \in \mathcal{D}$ which allows to drop the dependency with respect to the space variables $x_{1},\ldots,x_{N}$. Similar considerations hold for \eqref{BE}.

\begin{rem}\label{rem:alpha=0}Notice that the first term on the right hand side of \eqref{PNVN} is the usual term of the standard Kac master equation. This means that, neglecting the annihilation, i.e. for $\alpha=0$, we would obtain only this term. The third one on the right hand side of \eqref{PNVN} is the loss term due to particle annihilation, while the second one takes into account the gain term due to particle annihilation and we have the factor $\frac{1}{2}$ to avoid double counting.
\end{rem} 

\subsection{About the collision kernel and the initial distribution} \label{sec:colker}

Our general assumption on the collision kernel $\B$ appearing in \eqref{BE} is the following
\begin{ass}\label{hyp1} 
The collision kernel $\B(\cdot,\cdot)$ is a measure and nonnegative mapping $\B\: :\:\R^{3} \times \S \to \R^{+}$ for which there exist $\gamma \in [0,1]$ and $C_{\B} >0$ such that
\begin{equation}\label{eq:growth}
\Sigma_{\B}(v-v_{*}) \leq C_{\B}|v-v_{*}|^{\gamma}, \qquad \forall v,v_{*} \in \R^{3}\end{equation}
where we denote by $\Sigma_{\B}$ the {collision frequency}: 
$$\Sigma_{\B}(z)=\int_{\S}\B(z,\omega)\d\omega, \qquad z \in \R^{3}.$$
\end{ass}

\begin{rem} The case
$$\B(v-v_*,\omega)=2\,|\S|^{-1}\,|(v-v_{*}) \cdot \omega|=\tfrac{1}{2\pi}|(v-v_{*}) \cdot \omega |$$ corresponding to hard-sphere interactions is the model usually considered in the physics literature \cite{Maynar1,Maynar2,Trizac}. Here and in \eqref{eq:rel:vit}, the dot symbol $\cdot$ denotes the usual inner product between three dimensional vectors. In this case
$$\Sigma_{B}(v-v_{*})=|v-v_{*}|, \qquad v,v_{*} \in \R^{3}.$$

The case $\Sigma_{\B}$ constant, say $\Sigma_{B}(v-v_{*})=1$,  corresponds to the so-called Maxwellian interactions for which $\gamma=0$ and 
$$\B(v-v_{*},\omega)=\frac{1}{2\pi}\frac{\left|(v-v_{*})\cdot \omega\right|}{|v-v_{*}|}, \qquad v \neq v_{*} \in \R^{3}, \quad \omega \in \R^{3}.$$
Notice that, in this case, the kinetic annihilation equation \eqref{BE} is \emph{equivalent} to the classical Boltzmann equation (for which $\alpha=0$), see \cite{BaLo} and references therein for details. \end{rem}
\begin{rem} In the sequel, for most of the paper, 
we will consider a general collision kernel $\B$ satisfying Assumption \ref{hyp1}. We will have to restrict to \emph{bounded} collision frequency $\Sigma_{\B}$ only in the last part of the paper to recover the uniqueness of the solution to the \emph{annihilated Boltzmann hierarchy} and, as such, the propagation of chaos for \eqref{BE}. This is a severe restriction but we notice that assuming $\Sigma_{{\B}}$ to be bounded, i.e. $\gamma=0$, we cover more general situation than the Maxwellian interactions case.
\end{rem}

We observe that, under the growth condition \eqref{eq:growth}, it is possible to show that the model \eqref{PNVN} is well posed (cf. Section \ref{sec:many}, Theorem \ref{th:wellposPS}) under suitable assumptions on the collision kernel $\B$ and on the initial data. More precisely, we will assume the following volume dependence of the collision kernel:
\begin{ass}\label{hyp2} 
Let be $\Lambda=|\mathcal{D}|> 0$ with $\mathcal{D} \subset \R^{3}$. We set 
$$\B_{\Lambda}(z,\omega)=\frac{1}{\Lambda}\B(z,\omega)\qquad \forall z \in \R^{3}, \omega \in \S,$$
where $\B$ is the collision kernel satisfying Assumption \ref{hyp1}.
\end{ass}

The volume dependence of $\B_{\Lambda}$ required in Assumption \ref{hyp2}  takes into account the fact that the collision rate decreases as the proportion of the volume
occupied  by  the  particles,  with  respect  to  the  total  volume $\Lambda$, increases. We adopted  the simplest scenario for which the  dependence is inversely  proportional  to $\Lambda$, namely $\frac 1 \Lambda$.

We now make precise our assumption on the initial datum for \eqref{PNVN}: to justify the propagation of chaos, we will start with well-prepared initial data which are already tensorized:
\begin{ass}\label{hyp:indata}  
We assume that the initial datum $\{\P_N(0)\}_{N \geq 1}$ is given by
\begin{equation}\label{inDATA}
\P_N(0,\V)=
\begin{cases}
N_0 ! f_0(v_1)\ldots f_0(v_{N_0})=:N_0 ! f_0^{\otimes_{N_0}}(V_{N_0}) \qquad &\text{ if  } N=N_0\\
0 \qquad &\text{ if } N \neq N_0
\end{cases}
\end{equation}
for some $N_0 \geq 1$ and some non negative probability distribution $f_0$ satisfying
\begin{equation}\label{normalization}
\int_{\R^3}f_0(v)\d v=1, \qquad \int_{\R^{3}}|v|^{2}f_{0}(v)\d v=E_{0} <\infty \quad \text{ and } \qquad \int_{\R^{3}}|v|^{3}\,f_{0}(v)\d v < \infty.
\end{equation}
Actually, for simplicity, we shall assume that
$$N_0=2n_0, \qquad n_0 \geq 1.$$
\end{ass}

\subsection{Main results}

Our main result concerns the limit of the finite particle system, when the volume $\Lambda$, and the initial number of particles 
$N_0$ go to infinity in such a way that: 
\begin{eqnarray}
\label{S5thermod}
 \lim_{\Lambda, \,N_0 \to +\infty}\frac{N_0}{\Lambda}=1\in (0, +\infty).
\end{eqnarray}
As already observed, this limit has to be interpreted as a \textit{\textbf{thermodynamic limit}} (see \cite[Chapters 2 \& 3]{ruelle}). 
Setting for simplicity 
$$\varepsilon=\Lambda^{-1},$$ we introduce then the rescaled correlation functions
\begin{equation}\label{rescaledcorr}
f_{\ell}^{\varepsilon}(t,\Vk)=\sum_{N=\ell}^{\infty}\dfrac{\varepsilon^{\ell}}{(N-\ell)!}\int_{\R^{3(N-\ell)}} \P_{N}(t,\V)\d v_{\ell+1}\ldots\d v_{N}, \qquad \ell \geq 1.\end{equation}

Our first main result can be summarized as follows: 
\begin{theo}\label{main0}
 For any $\varepsilon >0$ and $N_{0} \in \N$ even, let $\{f_{\ell}^{\varepsilon}(t)\}_{\ell=1,\ldots,N_{0}}$ be the rescaled correlation functions associated to the unique solution $\{\P_{N}(t)\}_{N}$ to \eqref{PNVN} with initial datum \eqref{inDATA}. Then, for any $\ell \geq 1$ and any $t \geq 0$, there exists some positive measure $\bm{\mu}_{\ell}(t)\in \M(\R^{3\ell})$ and a subsequence (still denoted $\{f_{\ell}^{\varepsilon}(t)\}_{\varepsilon >0,N_{0} \in \N}$) such that
\begin{equation*}\lim_{\substack{\varepsilon \to 0,\,N_0 \to +\infty\\ \varepsilon\,{N_0}\to 1}}\big \langle f_{\ell}^{\varepsilon}(t), \Phi_{\ell}\big\rangle_{\ell}
=\big \langle \bm{\mu}_{\ell}(t), \Phi_{\ell} \rangle_{\ell} \qquad \forall \Phi_{\ell} \in \mathcal{C}_{0}(\R^{3\ell}).\end{equation*}
Moreover, the family $\{\bm{\mu}_{\ell}(\cdot)\}_{\ell \geq 1}$ is a weak solution to some suitable Annihilated Boltzmann Hierarchy of equations. 
Here, $\M(\R^{3\ell})$ denotes the space of signed Radon measures on $\R^{3\ell}$.
\end{theo}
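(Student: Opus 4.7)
My plan is to derive a BBGKY-like hierarchy for $\{f_\ell^\varepsilon\}$, to establish uniform a priori bounds on it, to extract a weak-$*$ convergent subsequence by diagonal extraction, and then to pass to the limit in the hierarchy. Multiplying the master equation \eqref{PNVN} by $\varepsilon^\ell/(N-\ell)!$, integrating over $(v_{\ell+1},\ldots,v_N)$ and summing over $N\geq \ell$, the symmetry of $\P_N$ lets me split each collision sum $\sum_{1\leq i<j\leq N}$ into three pieces: pairs with $j\leq \ell$ contribute $O(\varepsilon)$, since only $\binom{\ell}{2}$ terms are coupled with $\B_\Lambda=\varepsilon\B$; pairs $\ell<i<j$ produce a gain-loss cancellation under the involutive change of variables $(v_i,v_j)\to (v_i',v_j')$; and the ``straddling'' pairs $i\leq \ell<j$ gather a combinatorial factor $(N-\ell)$ which converts $\varepsilon^\ell/(N-\ell)!$ into $\varepsilon^{-1}\cdot \varepsilon^{\ell+1}/(N-\ell-1)!$, compensating the $\varepsilon$ in $\B_\Lambda$ to produce the expected $f_{\ell+1}^\varepsilon$-term of order one. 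For the annihilation part the gain term generates an $\varepsilon^{-1}$-prefactor against $f_{\ell+2}^\varepsilon$, which is \emph{exactly} cancelled by the both-outside pairs $\ell<i<j$ of the annihilation loss sum, whose binomial $\binom{N-\ell}{2}$ produces the same singular structure with opposite sign. The surviving equation is
\begin{equation*}
\partial_t f_\ell^\varepsilon(\Vk) = \sum_{i=1}^\ell \int_{\S\times \R^3}\B(v_i-v_{\ell+1},\omega)\bigl[(1-\alpha)\, f_{\ell+1}^\varepsilon(\Vku^{i,\ell+1}) - f_{\ell+1}^\varepsilon(\Vku)\bigr]\, \d\omega\,\d v_{\ell+1} + \mathcal{R}_\ell^\varepsilon,
\end{equation*}
with $\mathcal{R}_\ell^\varepsilon=O(\varepsilon)$, which is the candidate Annihilated Boltzmann Hierarchy.

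Positivity of $\P_N$ yields $\|f_\ell^\varepsilon(t)\|_{L^1}=\sum_N \varepsilon^\ell N(N-1)\cdots (N-\ell+1)\, M_N(t)\leq (\varepsilon N_0)^\ell$, where $M_N(t)=\frac{1}{N!}\int \P_N(t,\V)\,\d \V$ satisfies $\sum_N M_N(t)\leq 1$ and $M_N(t)=0$ for $N>N_0$ because the particle number can only decrease under the dynamics. Moments $\sup_t \int |v_i|^k f_\ell^\varepsilon(t)\,\d \Vk$ for $k\in\{2,3\}$ follow from a particle-level analogue of \eqref{eq:denE} together with the moment bound in \eqref{normalization}. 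Therefore $\{f_\ell^\varepsilon(t)\}_\varepsilon$ is uniformly bounded in total variation and relatively compact in $\M(\R^{3\ell})$ for the weak-$*$ topology induced by $C_0(\R^{3\ell})$. A diagonal extraction over a countable dense family of test functions in $C_0$, a countable dense set of times, and $\ell\geq 1$ produces a single subsequence $\varepsilon_k\to 0$ along which $\langle f_\ell^{\varepsilon_k}(t_j),\Phi\rangle$ converges for every rational $t_j$, every $\Phi$ in the dense family, and every $\ell$. The hierarchy of Step~1 further yields equi-Lipschitz continuity of $t\mapsto \langle f_\ell^\varepsilon(t),\Phi_\ell\rangle$ uniformly in $\varepsilon$, so the convergence extends to all $t\geq 0$ and defines $\bm{\mu}_\ell(t)\in \M(\R^{3\ell})$.

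Finally, I pass to the limit in the mild (time-integrated) form of the hierarchy tested against $\Phi_\ell\in C_0(\R^{3\ell})$. The remainder $\mathcal{R}_\ell^\varepsilon$ vanishes by Step~1. For the main terms, the principal obstacle is the unbounded $v_{\ell+1}$-integration combined with the possibly unbounded $|v_i-v_{\ell+1}|^\gamma$ growth of $\Sigma_\B$: the integrand is not continuous with compact support. I resolve this by truncating the $v_{\ell+1}$-integral to a ball of radius $R$, controlling the tail uniformly in $\varepsilon$ via the third-moment bound of Step~2 (which provides an integrable majorant since $\gamma\leq 1$), and transferring the change of variables $(v_i,v_{\ell+1})\to (v_i',v_{\ell+1}')$ onto the test function $\Phi_\ell$ so that, for each fixed $R$, the integrand against $f_{\ell+1}^\varepsilon$ is a genuine $C_0(\R^{3(\ell+1)})$ function. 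Weak-$*$ convergence of $f_{\ell+1}^{\varepsilon_k}$ delivers the limit, sending $R\to \infty$ afterwards. The two main difficulties, ranked by importance, are: (i) the exact algebraic cancellation of the two $\varepsilon^{-1}$-singular terms arising from the annihilation gain and the both-outside annihilation loss, without which the whole scheme would blow up in the limit; and (ii) the tightness of the $v_{\ell+1}$-integration, which forces the moment bounds of Step~2 to be truly uniform in $\varepsilon$ and requires transferring the collisional change of variables onto the test function before invoking weak-$*$ convergence.
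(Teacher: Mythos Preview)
Your proposal is correct and follows essentially the same route as the paper: derive the rescaled BBGKY hierarchy with the exact cancellation of the $\varepsilon^{-1}$ annihilation terms (the paper's Proposition~\ref{prp:bbgky}), obtain uniform mass and energy bounds yielding weak-$\star$ compactness (Proposition~\ref{prp:energy} and Corollary~\ref{cor:energy}), and pass to the limit in the mild formulation after transferring the collisional change of variables onto the test function (Theorem~\ref{main}). The only technical variant is in the limit passage: where you truncate the $v_{\ell+1}$-integral and control the tail by moment bounds, the paper instead proves tightness of the weighted measures $(1+\E)^{s}f_\ell^\varepsilon$ via Prokhorov (Corollary~\ref{cor:converg2}) so that convergence holds directly against the unbounded test functions $\bm{\Gamma}_{\ell,\ell+1}^\alpha\Phi_\ell\in\mathcal{W}_{\ell+1}^{\gamma/2}$; note also that second moments already suffice for this step since $\gamma\le 1$, so your invocation of third moments is unnecessary (and would require a Povzner-type argument you have not supplied).
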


We refer to Sections \ref{rescaled} -- \ref{sec:ABH} for the definition of the Annihilated Boltzmann Hierarchy (cf.~\eqref{sol:BHweak}) and a more precise statement, as well as the proof, of Theorem \ref{main0}(cf.~Propositions \ref{prp:conv} and Theorem \ref{main}). Moreover, we notice that the Annihilated Boltzmann Hierarchy admits as peculiar solution the tensorized function $\left\{f^{\otimes\ell}(\cdot)\right\}_{\ell}$, i.e. 
$$f^{\otimes\ell}(t,\bm{V}_{\ell})=f(t,v_{1})\ldots f(t,v_{\ell}), \qquad \quad \bm{V}_{\ell}=(v_{1},\ldots,v_{\ell}) \in \R^{3\ell}\;;\;t \geq 0,\;\ell \geq 1,$$
where $f(t)$ is the unique solution to \eqref{BE}.

We further notice that, strictly speaking, Theorem \ref{main0} does not provide a justification of \eqref{BE} since the measure solution $\bm{\mu}_{\ell}(t)$ may differ from the peculiar solution $f(t)^{\otimes\ell}.$ However, for bounded cross-sections, one can prove that the Annihilated Boltzmann Hierarchy admits a \textit{\textbf{unique}} solution, yielding a validation of \eqref{BE}, which is the content of our second main result:
\begin{theo}[\textit{\textbf{Propagation of Chaos}}]\label{main1}
Assume that $\Sigma_{\B}$ is bounded, i.e.~there exists $C_{\B} >0$ such that
$$\int_{\S} \B(v-v_{*},\omega)\d \omega \leq C_{\B} \qquad \forall v,v_{*} \in \R^{3}.$$
For any $\varepsilon >0$ and $N_{0} \in \N$ even, let $\{f_{\ell}^{\varepsilon}(t)\}_{\ell=1,\ldots,N_{0}}$ be the rescaled correlation functions associated to the unique solution $\{\P_{N}(t)\}_{N}$ to \eqref{PNVN} with initial datum \eqref{inDATA}. Then, for any $\ell \geq 1,$ we have
$$f_{\ell}^{\varepsilon}(t) \underset{N_{0}\varepsilon \to 1}{\underset{N_{0}\to \infty, \varepsilon \to 0}{\longrightarrow}} f(t)^{\otimes\ell}$$
in the weak-$\star$ topology of $\M(\R^{3\ell})$ and 
$f(t)$ is the unique solution to \eqref{BE}. The above convergence is uniform with respect to $t$ in any compact set.
\end{theo}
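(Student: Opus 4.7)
The overall strategy is to leverage Theorem \ref{main0}, which provides subsequential weak-$\star$ limits of the rescaled correlations, and to upgrade this to a full limit by identifying the limit uniquely. The single new ingredient, compared to Theorem \ref{main0}, is a uniqueness theorem for the Annihilated Boltzmann Hierarchy (ABH) valid when $\Sigma_{\B}$ is bounded; once uniqueness is available, the identification of the limit with the tensorized reference family $\{f(t)^{\otimes\ell}\}_\ell$ follows effortlessly.

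\emph{Identification of a candidate limit.} By Theorem \ref{main0}, along a subsequence, $f_{\ell}^{\varepsilon}(t)\to \bm{\mu}_{\ell}(t)$ weak-$\star$ in $\M(\R^{3\ell})$ for every $\ell$ and $t$, and $\{\bm{\mu}_{\ell}(\cdot)\}_\ell$ is a weak solution of the ABH. Tensorization of \eqref{inDATA} yields $\bm{\mu}_{\ell}(0) = f_{0}^{\otimes \ell}$. On the other hand, if $f(t)$ denotes the unique solution of \eqref{BE} with initial datum $f_{0}$, a direct chain-rule computation using \eqref{BE} shows that $\{f(t)^{\otimes \ell}\}_\ell$ is also a weak solution of the ABH with the same initial data. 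The theorem thus reduces to the identification $\bm{\mu}_\ell(t) = f(t)^{\otimes\ell}$ for every $\ell, t$.

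\emph{Uniqueness of the ABH under bounded $\Sigma_{\B}$.} Let $\{\bm{\mu}_\ell\},\{\bm{\mu}_\ell'\}$ be two weak solutions with identical initial data and total mass uniformly bounded by $1$ (inherited from \eqref{S2E2} via \eqref{rescaledcorr}), and put $\delta_\ell(t):=\bm{\mu}_\ell(t)-\bm{\mu}_\ell'(t)$. After peeling off the diagonal-in-$\ell$ contraction semigroup $S_\ell(\cdot)$ generated by the self-coupling (elastic collisions and annihilation among the first $\ell$ variables), the ABH reads in Duhamel form, schematically,
$$\delta_\ell(t) = \int_0^t S_\ell(t-s)\bigl[\mathcal{L}_\ell^{+}\,\delta_{\ell+1}(s) + \mathcal{L}_\ell^{++}\,\delta_{\ell+2}(s)\bigr]\,\d s,$$
where $\mathcal{L}_\ell^{+}$ collects the coupling from level $\ell+1$ (with operator norm $\lesssim \ell\, C_{\B}$ on $\C_0$ test functions) and $\mathcal{L}_\ell^{++}$ is the annihilation-gain coupling from level $\ell+2$ (with norm $\lesssim C_{\B}$). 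Iterating $k$ times, the $k$-fold time integral contributes $t^{k}/k!$, the product of operator norms is bounded by $C_{\B}^{k}\,k!\,\binom{\ell+k-1}{k}$, and there are at most $2^k$ words of length $k$ in $\{+,++\}$. Combined with $\|\delta_{\ell+m}(s)\|_{\M}\leq 2$, one obtains for $\Phi\in\C_0(\R^{3\ell})$
$$\bigl|\langle \delta_\ell(t),\Phi\rangle_\ell\bigr| \;\leq\; 2\,\|\Phi\|_{\infty}\,(2 C_{\B} t)^{k}\binom{\ell+k-1}{k}.$$
For $t$ below a threshold $t_\star$ depending only on $C_\B$, the right-hand side vanishes as $k\to\infty$, since $\binom{\ell+k-1}{k}$ grows only polynomially in $k$ for fixed $\ell$ while $(2C_{\B} t)^k$ decays geometrically. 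Hence $\delta_\ell\equiv 0$ on $[0,t_\star]$, and a bootstrap in time extends uniqueness to $[0,\infty)$.

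\emph{Full convergence, local uniformity in $t$, and main obstacle.} Uniqueness forces $\bm{\mu}_\ell(t) = f(t)^{\otimes\ell}$ for every subsequential limit, so the full net converges weakly-$\star$ by the standard subsequential argument. Local uniformity in $t$ follows from equicontinuity: directly from \eqref{PNVN} and \eqref{rescaledcorr}, for fixed $\Phi\in\C_0(\R^{3\ell})$ the curve $t\mapsto \langle f_\ell^\varepsilon(t),\Phi\rangle_\ell$ has derivative uniformly bounded in $\varepsilon$ (by a constant depending only on $\ell$, $\|\Phi\|_\infty$, and $C_\B$), and an Arzel\`a--Ascoli/diagonal-extraction argument in the weak-$\star$ topology promotes pointwise-in-$t$ convergence to locally uniform convergence. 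The main obstacle of the proof is the ABH uniqueness: the annihilation gain term couples level $\ell$ to level $\ell+2$, so the hierarchy is not lower-triangular in $\ell$ and the iterated Duhamel expansion features combinatorial factors $\binom{\ell+k-1}{k}$ that must be absorbed by the $t^{k}/k!$ decay. Boundedness of $\Sigma_{\B}$ is exactly what renders the $\mathcal{L}_\ell^{+}$-bound $\ell$-linear rather than moment-dependent, and this is precisely the obstruction to extending the present argument to the hard-sphere regime $\gamma>0$, consistently with the remark preceding the statement.
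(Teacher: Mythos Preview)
Your overall strategy is correct and matches the paper's: combine Theorem~\ref{main0} (subsequential limits solve the ABH), the observation that $\{f(t)^{\otimes\ell}\}_\ell$ solves the ABH, and a uniqueness theorem for the ABH under bounded $\Sigma_{\B}$, then upgrade to full convergence by the usual subsequence argument. Your treatment of local uniformity in $t$ via equicontinuity is also in line with what the paper establishes (Corollary~\ref{cor:converg2}).

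However, your description of the ABH structure in the uniqueness step is inaccurate and makes the argument more complicated than it needs to be. In the limit hierarchy \eqref{sol:BHweak}--\eqref{sol:BHweak2} there is \emph{no} self-coupling term among the first $\ell$ variables (the $\varepsilon$-prefactored diagonal term in the rescaled BBGKY vanishes as $\varepsilon\to 0$), so there is no contraction semigroup $S_\ell$ to peel off; and there is \emph{no} coupling from level $\ell$ to level $\ell+2$: the three $f_{\ell+2}$-contributions arising in the derivation of the BBGKY hierarchy (loss, elastic gain, annihilation gain) cancel exactly, cf.\ \eqref{BBGKYloss}--\eqref{gainN+2b}. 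The ABH couples level $k$ only to level $k+1$ through the single operator $\Gk^{\alpha}$ of Remark~\ref{rem:Gk}. Consequently the paper's uniqueness proof (Theorem~\ref{unique}) iterates \eqref{sol:BHweak2} directly: one obtains
\[
\|\bm{\beta}_k(t)\|_{1,k}\leq \frac{t^{n+2}}{(n+2)!}\bigl((2-\alpha)\|\Sigma_{\B}\|_\infty\bigr)^{n+1}\frac{(k+n)!}{(k-1)!}\sup_{s}\|\bm{\beta}_{k+n+1}(s)\|_{1,k+n+1},
\]
with no sum over words in $\{+,++\}$ and no semigroup factor. Your bound $(2C_{\B}t)^k\binom{\ell+k-1}{k}$ is of the same flavour and would also yield uniqueness, but it is built on a misreading of the hierarchy. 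This is not a fatal gap---your extra terms are either absent or zero, so the estimate survives---but you should correct the structural picture before writing up: the ABH is strictly upper-triangular with step one, and this is precisely what makes the iteration clean.
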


Though very partial, the above result is, up to our knowledge, the first rigorous mathematical justification of the kinetic equation \eqref{BE}. This result is restricted to bounded cross-sections and it would be of course more satisfactory to prove it for hard-sphere interactions. We explain in Section \ref{sec:HS} the main difficulty we face in proving such a result. We wish nevertheless to emphasize that the high dissipative nature of \eqref{BE} makes the analysis quite challenging. 

While the study of mean-field limit for spatially homogeneous equations reached already a mature level and the literature on this topic is very vast (as discussed in Section \ref{sec:kac}), the mathematical literature on \emph{dissipative kinetic-like equations} is rather scarse. We mention here the paper \cite{wen}, in the spirit of \cite{MM}, concerning the propagation of chaos for the inelastic Boltzmann equation (when the energy is dissipated but not the mass). In \cite{wen} suitable reservoirs are added to prevent too strong dissipation. 
The model with strong dissipation where mass is decreasing (and therefore, at particles level, the number of particles is non constant) which received more attention is the \emph{Smoluchowski equation} for coalescing particles. Probabilistic approaches to the justification of Smoluchowski  equation from a particle system have been discussed in \cite{HR}, \cite{norris} and \cite{wagner}. {For an analytical approach based on a suitable adaptation of the BBGKY Hierarchy, which is the inspiration for the present work, we refer to \cite{fede-esco}.} 
Notice here that, for the kinetic annihilation model considered here (cf.~\eqref{BE}--\eqref{PNVN}), probabilistic treatment does not seem easy to adapt since \eqref{BE} does not exhibit \emph{any} conserved quantity.  
We mention also some recent contributions on the derivation of a linear version of the Smoluchowski equation from a mechanical particle system (cf. \cite{NV, NNTV}) and the monograph \cite{kolok} for a comprehensive study of Smoluchowski or Boltzmann equation. 

As mentioned earlier, our approach is inspired by \cite{fede-esco} in particular for what concerns the use of suitable hierarchies of equations, the BBGKY Hierarchy and the Annihilated Boltzmann Hierarchy. The proof of Theorem \ref{main0} is obtained through a suitable compactness argument. The novelty of the approach, with respect to \cite{fede-esco}, is that we work here with measure solutions to the involved hierarchies. The main advantage of dealing with such a functional framework is that weak compactness is, somehow, equivalent to tightness and relies on some suitable estimates for the mass and the energy (uniformly with respect to the parameters $N,\varepsilon$). The proof of such weak compactness and partial convergence results are given in Section \ref{rescaled}, Proposition \ref{prp:conv}. Then, one can prove that any limiting point of the family of rescaled correlation functions $\{f_{\ell}^{\varepsilon}(t)\}_{\varepsilon >0,N_{0} \in \N}$ has to be measure solution to some suitable Annihilated Boltzmann Hierarchy introduced in Section \ref{sec:ABH}. As it is well known for this kind of problem, the major difficulty is to prove the uniqueness of solution to such limit hierarchy (see \cite{caprino} and \cite[Section 8]{MM}). We are able to prove such uniqueness result for bounded collision kernel, resulting in Theorem \ref{main1}. The extension of Theorem \ref{main1} to unbounded collision kernel -- mainly the hard-sphere collision kernel -- is the project of future work by the authors. The present contribution somehow paving the ground to the rigorous derivation of \eqref{BE}. We discuss in Section \ref{sec:HS} several possible paths to handle hard-sphere interactions and the technical difficulties associated to them. We can already anticipate that, for hard-sphere kernels, the exact rate of mass dissipation is not known and it appears difficult then to provide a suitable modification of \eqref{BE} where a suitable reservoir would prevent the convergence to zero of $f(t,v)$ (as in \cite{wen}). Another related difficulty is that it does not seem possible to adapt the approach introduced in \cite{caprino} and revisited in \cite[Section 8]{MM} and based upon De Finetti's Theorem since we are not dealing here with probability measures. We refer to Section \ref{sec:HS} for more details.

The organization of the paper is the following. In Section \ref{sec:many} we introduce the notations and function spaces we are dealing with in the sequel. We also prove the well-posedness of the $N$-particle system associated to \eqref{PNVN} owing to suitable substochastic semigroups techniques (see Theorem \ref{th:wellposPS}). In Section \ref{SMarginals} we define the (rescaled) correlation functions and the BBGKY hierarchy associated to \eqref{PNVN} while, in Section \ref{rescaled}, we establish our  main compactness result and the convergence part of Theorem \ref{main0}. In Section \ref{sec:ABH}, we derive the limiting hierarchy ABH and prove the second part of Theorem \ref{main0}, namely the fact that any limit point of the sequence of rescaled correlation functions is a suitable solution to this limiting hiearchy. We also prove Theorem \ref{main1} in Section \ref{sec:ABH} which relies on some uniqueness result for the hierarchy. Section \ref{sec:HS} describes the possible extension of the present results to the more relevant case of hard-sphere interactions, including several perspectives and discussions on the difficulties associated to this kind of interactions. In Appendix \ref{appendix1}, we prove a technical result concerning the semigroup generation properties used in Section \ref{sec:many}.

\section{{Well-posedness of the many-particles system \eqref{PNVN}}}\label{sec:many}
\subsection{Notations and function spaces.} For any $k \geq 2,$ unless otherwise specified, we will always assume that the functions $\Phi_{k}=\Phi(\bm{V}_{k})$, depending on $k$ velocities $v_{1},\ldots,v_{k}$, are symmetric with respect to permutations, i.e.
$$\Phi_{k}(v_{\pi(1)},\ldots,v_{\pi(k)})=\Phi_{k}(v_{1},\ldots,v_{k}) \qquad   \text{ for any permutation } \pi \in \mathfrak{S}_{k}$$
where $\mathfrak{S}_{k}$ is the permutation group of the set $\{1,\ldots,k\}$. For any $k \geq 1$, we denote by $\mathscr{M}(\R^{3k})$  the space of (bounded) signed Radon measures over $\R^{3k}$ endowed with the \emph{total variation} norm
$$\|\mu\|_{1,k}:=\int_{\R^{3k}}\left|\mu\right|(\d \bm{V}_{k}).$$
When no ambiguity can occur, we simply use $\|\mu\|_{1}$ for the above total variation norm. We see then $L^{1}(\R^{3k})$ as a (closed) subspace of $\mathscr{M}(\R^{3k})$ and still denote by $\|f\|_{1}$ the norm of a function $f \in L^{1}(\R^{3k})$. 
We refer to \cite{bogachev,evans} for   results and terminology about Radon measures. We notice that $\mathscr{M}(\R^{3k})$  -- endowed with the $\|\cdot\|_{1}$ norm is an $AL$-space (i.e a Banach lattice whose norm is additive on the positive cone), see \cite[Example 3, p. 218]{schaef}. The positive cone of $\mathscr{M}(\R^{3k})$ will be denoted by $\mathscr{M}^{+}(\R^{3k}).$ As we did for the functions $\Phi_{k}=\Phi(\bm{V}_{k})$, we also assume that the measures we consider in the sequel are symmetric with respect to permutations, i.e. $\mu \circ \pi^{-1}=\pi$  for any permutation $\pi \in \mathfrak{S}_{k}$.

For any $k \geq 1$, we denote by $\C_{b}(\R^{3k})$ (resp. $\C_{0}(\R^{3k})$) the space of continuous and bounded functions (resp. vanishing at infinity) over $\R^{3k}$ endowed with the sup-norm.

Given $k \geq 1$ and $\ell \in \{1,\ldots,k\}$, we define the marginal operator
$$\bm{\Pi}_{\ell}\::\:\mu_{k} \in \mathscr{M}(\R^{3k}) \longmapsto  \bm{\Pi}_{\ell}\,\mu_{k}:=\int_{\R^{3(k-\ell)}}\mu_{k}(\cdot,\d v_{\ell},\ldots,\d v_{k}) \in \mathscr{M}(\R^{3\ell})$$
i.e. $\bm{\Pi}_{\ell}\mu_{k}$ is defined in weak-form as 
$$\langle \bm{\Pi}_{\ell}\mu_{k}, \Phi_{\ell}\rangle_{\ell}=\int_{\R^{3k}}\Phi_{\ell}(\Vk)\mu_{k}(\d \bm{V}_{k}), \qquad \forall \Phi_{\ell} \in \C_{0}(\R^{3\ell}).$$
Given a measurable non-negative mapping $g_{k}\::\:\R^{3k}\to\R^{+}$ and $\mu_{k} \in \mathscr{M}(\R^{3k})$ we define the ``product'' $g_{k}(\cdot)\mu_{k}$ as the (possibly unbounded) Borel measure over $\R^{3k}$ defined by
$$\langle g_{k}(\cdot)\mu_{k},\Phi_{k}\rangle_{k}=\int_{\R^{3k}}g_{k}(\bm{V}_{k})\Phi_{k}(\bm{V}_{k})\mu_{k}(\d\bm{V}_{k}), \qquad \forall \Phi_{k} \in \C_{0}(\R^{3k}).$$

\subsection{Reformulation of the problem \eqref{PNVN}}\label{sec:reform}

It will be convenient to reformulate \eqref{PNVN} in a more compact form. 
We introduce, for any $N \geq 1$, the generalized collision frequency \begin{equation}\label{sigma}
\sigma_N \::\:\V=(v_1,\ldots,v_{N}) \in \R^{3N} \longmapsto {\sigma}_N(\V)=\sum_{1 \leq i < j \leq N} \int_{\S} \B(v_i-v_j,\omega)\d\omega
\end{equation}
and, introducing the functional space
$$L^{1}_{N}(\R^{3N})=\left\{f \in L^{1}(\R^{3N})\;;\;\|f\|_{L^1_{{N}}(\R^{3N})}:=\int_{\R^{3N}} f(\V)\sigma_N(\V)\d \V< \infty\right\}$$
we can define the following operators:
\begin{equation}\label{guadagno}
\begin{split}
\mathbf{G}_N\::\:L^1_N(\R^{3N}) &\longrightarrow L^1(\R^{3N})\\
 \Phi_N &\longmapsto \mathbf{G}_N (\Phi_N)(\V)= \sum_{1 \leq i < j \leq N}\int_{\S} \B
 (v_i-v_j,\omega) \Phi_N\left(\V^{i,j}\right)\d\omega \end{split}
 \end{equation}
 and  let $\mathbf{\Gamma}_{N+2}\::\:L^1_{N+2}(\R^{3N+6}) \longrightarrow L^1(\R^{3N})$ be defined by
\begin{equation}\label{op.gamma}
 \mathbf{\Gamma}_{N+2} (\Phi_{N+2})(\V)= \int_{\R^3}\d v_{N+1}\int_{\R^{3}}\d v_{N+2}\int_{\S} \B
(v_{N+1}-v_{N+2},\omega)\Phi_{N+2}\left(t,\V,v_{N+1},v_{N+2}\right)\d\omega
\end{equation}
for any $\Phi_{N+2} \in L^1_{N+2}(\R^{3N+6})$ and any $\V \in \R^{3N}$.
\bigskip

One can easily check that for any nonnegative $\Phi_N \in L^1_{{N}}(\R^{3N})$ and $\Phi_{N+2} \in L^1_{{N+2}}(\R^{3N+6})$ it holds:
\begin{itemize}
\item[i)] $\displaystyle \|\mathbf{G}_N[\Phi_N]\|_{L^1(\R^{3N})}=\|\Phi_N\|_{L^1_{{N}}(\R^{3N})}$; \vspace{1.5mm}
\item[ii)] $\displaystyle \|\mathbf{\Gamma}_{N+2}[\Phi_{N+2}]\|_{{L^1(\R^{3N})}} \leq \frac{2}{(N+1)(N+2)} \|\Phi_{N+2}\|_{{L^1_{N+2}(\R^{3N+6})}}.$
\end{itemize}
\medskip

With these notations, we can rewrite \eqref{PNVN} as
\begin{align}\label{PNop}
\partial_t \P_N(t,\V)+  {\frac{1}{\Lambda}}\sigma_N(\V)\P_N(t,\V)=& \, \frac{(1-\alpha)}{{\Lambda}}\mathbf{G}_N[\Psi_N](t,\V) \nonumber \\&
\,+ \frac{\alpha}{{2\Lambda}}\mathbf{\Gamma}_{N+2}[\P_{N+2}](t,\V) \qquad N \geq 1, \V \in \R^{3N}.
\end{align}
Let us introduce, for any $N \geq 1$, the (unbounded) operator
$$\mathcal{L}_N\::\:\D(\mathcal{L}_N) \subset L^1(\R^{3N}) \to L^1(\R^{3N})$$
with domain $\D(\mathcal{L}_N)=L^1_N(\R^{3N})$ and defined by
\begin{equation}\label{generatore}
\mathcal{L}_N (\Phi_N)=  (1-\alpha)\mathbf{G}_N[\Phi_N] - \sigma_N \Phi_N, \qquad \Phi_N \in \D(\mathcal{L}_N).
\end{equation}
The general properties of $\mathcal{L}_{N} $ are listed in the following Proposition whose proof is postponed to Appendix A:
\begin{prp}\label{propLN}
For any $N \geq 1$, $\left(\mathcal{L}_{N} ,\D(\mathcal{L}_{N} )\right)$ is the generator of a strongly continuous semigroup of contractions 
$\left(\mathcal{S}_N(t)\right)_{t \geq 0}$ in $L^1(\R^{3N})$. Moreover, for  any $t > 0$, one has \begin{equation}\label{smoothing}
\int_{0}^t \left\|\mathcal{S}_{N} (s)\Phi_{N}\right\|_{L^{1}_{N}(\R^{3N})} \d s \leq \dfrac{1}{\alpha} \|\Phi_{N}\|_{L^{1}(\R^{3N})}\qquad \forall \Phi_{N} \in L^{1}(\R^{3N}).\end{equation}
\end{prp}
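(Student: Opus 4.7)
The plan is to regard $\mathcal{L}_N$ as a positive perturbation of the dissipative multiplication operator $-\sigma_N$ and to invoke the classical Kato--Voigt perturbation theorem for substochastic semigroups on $L^{1}$ (as developed e.g. in the monograph of Banasiak \& Arlotti). Concretely, I would split $\mathcal{L}_N = A_N + B_N$ on the common domain $L^1_N(\R^{3N})$, with
\begin{equation*}
A_N \Phi_N = -\sigma_N\,\Phi_N, \qquad B_N \Phi_N = (1-\alpha)\,\mathbf{G}_N[\Phi_N].
\end{equation*}
Since $\sigma_N$ is measurable and nonnegative (and finite a.e.), $(A_N,L^1_N(\R^{3N}))$ is the generator of the positive $C_0$-semigroup of contractions $\mathcal{T}_N(t)\Phi_N = e^{-t\sigma_N(\cdot)}\Phi_N$ on $L^1(\R^{3N})$. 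By construction $B_N$ is positivity-preserving and well-defined on all of $L^1_N(\R^{3N})$, since property (i) gives the identity $\|\mathbf{G}_N[\Phi_N]\|_{L^1(\R^{3N})}=\|\Phi_N\|_{L^1_N(\R^{3N})}$ for $\Phi_N \geq 0$.

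Next I would record the mass-balance identity: for every nonnegative $\Phi_N \in L^1_N(\R^{3N})$, property (i) together with $\int_{\R^{3N}}\sigma_N \Phi_N\,\d\V = \|\Phi_N\|_{L^1_N(\R^{3N})}$ gives
\begin{equation*}
\int_{\R^{3N}} \mathcal{L}_N \Phi_N(\V)\,\d\V \;=\; (1-\alpha)\|\Phi_N\|_{L^1_N(\R^{3N})} - \|\Phi_N\|_{L^1_N(\R^{3N})} \;=\; -\alpha\,\|\Phi_N\|_{L^1_N(\R^{3N})}\;\leq 0.
\end{equation*}
This is precisely the dissipation hypothesis of the Kato--Voigt theorem, which then produces a smallest closed extension of $A_N+B_N$ generating a positive $C_0$-semigroup of contractions $(\mathcal{S}_N(t))_{t \geq 0}$ on $L^1(\R^{3N})$. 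Checking that $L^1_N(\R^{3N})$ is already a core, so that the generator is $\mathcal{L}_N$ on its stated domain, is a standard density argument via truncations $\sigma_N \wedge n$ and the mapping property (i) of $\mathbf{G}_N$; this part is what I would push into the Appendix.

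For the smoothing estimate \eqref{smoothing}, I would exploit the Dyson--Phillips construction of $\mathcal{S}_N(t)$ as the monotone limit of the positive series $\sum_{k \geq 0}\mathcal{T}_N \ast (B_N \mathcal{T}_N)^{\ast k}$: applying the mass-balance identity termwise and passing to the limit yields, for every $\Phi_N \geq 0$,
\begin{equation*}
\|\mathcal{S}_N(t)\Phi_N\|_{L^1(\R^{3N})} \;\leq\; \|\Phi_N\|_{L^1(\R^{3N})} - \alpha \int_0^t \|\mathcal{S}_N(s)\Phi_N\|_{L^1_N(\R^{3N})}\,\d s, \qquad t \geq 0.
\end{equation*}
Since the left-hand side is nonnegative, this rearranges into $\alpha \int_0^t \|\mathcal{S}_N(s)\Phi_N\|_{L^1_N(\R^{3N})}\,\d s \leq \|\Phi_N\|_{L^1(\R^{3N})}$, which is \eqref{smoothing} on the positive cone. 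The extension to a general $\Phi_N \in L^1(\R^{3N})$ follows by decomposing $\Phi_N=\Phi_N^{+}-\Phi_N^{-}$, using linearity and positivity of $\mathcal{S}_N(t)$, and summing the two bounds with $\|\Phi_N^{+}\|_{L^1}+\|\Phi_N^{-}\|_{L^1}=\|\Phi_N\|_{L^1}$.

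The main obstacle is of a technical rather than conceptual nature: justifying the mass-loss inequality above when $\sigma_N$ is unbounded (which happens already for $\gamma>0$ under Assumption~\ref{hyp1}) requires a careful control of the Dyson--Phillips tails and, in principle, opens the door to the honest/non-honest dichotomy of Kato extensions. For the estimate \eqref{smoothing} one only needs the inequality, not equality, so the possible non-honesty of the extension does not actually affect the argument; verifying that $L^1_N(\R^{3N})$ is a core is the only place where a bit of care is genuinely needed, and is the reason why the detailed generation proof is deferred to the Appendix.
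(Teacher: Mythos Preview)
Your proposal is correct and follows the same overall architecture as the paper: the same splitting $\mathcal{L}_N=A_N+B_N$ with $A_N=-\sigma_N$ and $B_N=(1-\alpha)\mathbf{G}_N$, and the same use of the Dyson--Phillips expansion. There are, however, two differences worth recording.

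For the generation part, the paper does not invoke the Kato--Voigt extension theorem but rather Desch's theorem, using the strict resolvent bound $\|B_N(\lambda-A_N)^{-1}\|_{\mathscr{B}(L^1)}\leq 1-\alpha<1$ (which follows from property (i) and the elementary estimate $\|(\lambda-A_N)^{-1}\Phi\|_{L^1_N}\leq\|\Phi\|_{L^1}$). Because this bound is \emph{strictly} less than $1$, Desch's theorem gives directly that $A_N+B_N$ on $\D(A_N)=L^1_N(\R^{3N})$ is the generator, so the honesty/core issue you flag simply does not arise. Your route via Kato--Voigt is perfectly valid but introduces an avoidable complication.

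For the smoothing estimate \eqref{smoothing}, the paper argues termwise on the Dyson--Phillips iterates $U_j(t)$: an easy induction (using $\int_0^T\|U_0(t)\Phi\|_{L^1_N}\d t\leq\|\Phi\|_{L^1}$ and $\|B_N\psi\|_{L^1}\leq(1-\alpha)\|\psi\|_{L^1_N}$) gives $\int_0^T\|U_j(t)\Phi\|_{L^1_N}\d t\leq(1-\alpha)^j\|\Phi\|_{L^1}$, and summing the geometric series yields the factor $\alpha^{-1}$. Your mass-balance argument---integrating $\frac{\d}{\d t}\|\mathcal{S}_N(t)\Phi\|_{L^1}=-\alpha\|\mathcal{S}_N(t)\Phi\|_{L^1_N}$ for nonnegative $\Phi$ in the domain and rearranging---is equivalent and arguably more conceptual, but the paper's version has the advantage of being fully explicit and of not requiring any passage to the limit through the Dyson--Phillips tails.
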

\medskip

Using \eqref{generatore} we can rewrite \eqref{PNop} as
\begin{equation} \label{PNop2}
\partial_t \P_N(t,\V)=\frac{1}{\Lambda}\mathcal{L}_{N}[\Psi_N](t,\V) + 
\frac{\alpha}{{2\Lambda}}\mathbf{\Gamma}_{N+2}[\P_{N+2}](t,\V).
\end{equation}
Thanks to Proposition \ref{propLN}, seeing the factor $\frac{1}{\Lambda}$ as a time scaling, we can write the solution to \eqref{PNop2} with initial datum $\P_N(0)$ in terms of the Duhamel's formula
\begin{equation}\label{duha}
\P_N(t)=\mathcal{S}_N\left(\frac{t}{\Lambda}\right)\P_N(0)+\frac{\alpha}{{2\Lambda}} \int_0^{t} \mathcal{S}_N\left(\frac{t-s}{\Lambda}\right)\mathbf{\Gamma}_{N+2}[\P_{N+2}(s)]\d s.
\end{equation}

\subsection{A priori estimates}

We establish here several \emph{a priori} estimates for the solution to \eqref{PNVN}. 
We consider a sequence $\{\P_N(t)\}_{N}$ of nonnegative functions with
$$\P_{N} \in \mathcal{C}^{1}([0,+\infty); L^{1}(\R^{3N})) \qquad \forall N \geq 1$$
which satisfies \eqref{PNVN} for any $N \geq 1$. 
We introduce then, for every $N\in \mathbb{N}$, the function
\begin{eqnarray}
\label{S2E4}
\mathbf{\P}(t, N)= \frac{1}{N!}\int_{\R^{3N}} \P_N(t,\V)\d \V
\end{eqnarray}
which represents the probability that at time $t$ the system is constituted by $N$ particles. We 
observe that the natural normalization (\ref{S2E2}) holds true for any time, i.e. 
\begin{eqnarray}
\label{conservazione}
\sum_{N=0}^{\infty}\mathbf{\P}(t, N)= 1
\end{eqnarray}
where we notice that the factor $1/N!$ in definition (\ref{S2E4}) is needed to compensate for counting all the $N!$ physically equivalent ways of arranging the velocities $v_1,\dots,v_{N}$. 

 Due to our choice of the initial datum we have
\begin{equation}
\label{S2E5}
\mathbf{\P}(0,N)=:\mathbf{\P}^0(N)=\delta_{N_0}(N),
\end{equation}
where $\delta_{m}$ is the Dirac counting measure on $\N$.
Equation \eqref{S2E5} just means that at $t=0$ our system has exactly  $N_0$ particles and it follows that at $t=0$ property \eqref{conservazione} is trivially satisfied. We now  show that \eqref{conservazione}  holds for any positive $t>0$ as well:
\begin{prp}\label{prop:apriori}
Assume that the mapping $t \geq 0 \mapsto \sum_{N=1}^{\infty}\P_{N}(t)$ belongs to $\mathcal{C}^{1}([0,\infty))$. Then, for any $t \geq 0$, it holds
\begin{equation}\label{support}\P_{N}(t)=0 \qquad \forall N \geq N_{0}+1\end{equation}
and 
\begin{equation}\label{mass}\sum_{N=1}^{\infty}\mathbf{\P}(t,N)=1.\end{equation}
\end{prp}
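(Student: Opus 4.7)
The plan is to test the master equation \eqref{PNVN} against the constant function~$1$, track the evolution of $\mathbf{\P}(t,N)=\frac{1}{N!}\int_{\R^{3N}}\P_N(t,\V)\,\d\V$, and exploit the resulting telescoping structure. Integrating \eqref{PNVN} in $\V$, the elastic collision term vanishes thanks to the change of variables $\V\mapsto\V^{i,j}$ (an involution of unit Jacobian for each fixed $\omega$), so only the two annihilation terms contribute. Rewriting the gain contribution via the symmetry of $\P_{N+2}$ and the definition \eqref{sigma} of $\sigma_{N+2}$ gives
\begin{equation*}
\int_{\R^{3N}}\d\V\int_{\R^{6}}\d v_{N+1}\d v_{N+2}\int_{\S}\B(v_{N+1}-v_{N+2},\omega)\,\P_{N+2}\,\d\omega=\frac{2}{(N+1)(N+2)}\int_{\R^{3(N+2)}}\sigma_{N+2}\,\P_{N+2}\,\d\bm{V}_{N+2},
\end{equation*}
i.e.\ the equality version of property~(ii) of $\mathbf{\Gamma}_{N+2}$, available here because $\P_{N+2}\geq 0$ is symmetric. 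Combined with the loss term this yields the balance
\begin{equation*}
\frac{\d}{\d t}\mathbf{\P}(t,N)=\frac{\alpha}{\Lambda}\left[\frac{1}{(N+2)!}\int_{\R^{3(N+2)}}\sigma_{N+2}\,\P_{N+2}\,\d\bm{V}_{N+2}-\frac{1}{N!}\int_{\R^{3N}}\sigma_{N}\,\P_{N}\,\d\V\right].
\end{equation*}

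To prove \eqref{support} I would sum this identity over $N\geq N_0+1$. Writing $a_N(t):=\frac{1}{N!}\int\sigma_N\P_N\,\d\V\geq 0$, the right-hand side telescopes in steps of two and leaves only the two boundary terms,
\begin{equation*}
\frac{\d}{\d t}\sum_{N\geq N_0+1}\mathbf{\P}(t,N)=-\frac{\alpha}{\Lambda}\bigl[a_{N_0+1}(t)+a_{N_0+2}(t)\bigr]\leq 0.
\end{equation*}
Since the initial datum \eqref{inDATA} is supported at $N=N_0$, the tail sum vanishes at $t=0$; being non-increasing and non-negative it stays identically zero, so $\int\P_N(t,\V)\,\d\V=0$ and hence $\P_N(t)\equiv 0$ for every $N\geq N_0+1$. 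Once \eqref{support} is in hand, the sum $\sum_N\mathbf{\P}(t,N)$ reduces to a \emph{finite} sum over $0\leq N\leq N_0$, and the same telescoping applied over the full range gives
\begin{equation*}
\frac{\d}{\d t}\sum_{N\geq 0}\mathbf{\P}(t,N)=-\frac{\alpha}{\Lambda}\bigl[a_{0}(t)+a_{1}(t)\bigr]=0,
\end{equation*}
because $\sigma_0\equiv\sigma_1\equiv 0$ (one needs at least two particles to form a colliding pair). Combining with the initial normalization $\sum_N\mathbf{\P}(0,N)=1$ yields \eqref{mass}.

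The main technical obstacle is legitimizing the interchange of $\sum_N$ with $\frac{\d}{\d t}$ and with $\int_{\R^{3N}}\,\d\V$ \emph{before} the support property is available. The $C^{1}$-regularity assumed on $t\mapsto\sum_N\P_N(t)$, together with the non-negativity of every $\mathbf{\P}(t,N)$, is precisely what makes this step rigorous: one first truncates the tail at $N\leq M$, obtains the telescoped inequality at the truncated level with a harmless non-negative boundary error, and then passes to $M\to\infty$ by monotone convergence. Alternatively, one can work directly with the Duhamel formula \eqref{duha} and the contractivity of $\mathcal{S}_N$ given by Proposition~\ref{propLN} to run the same argument at the level of partial sums, bypassing the $C^{1}$ hypothesis altogether.
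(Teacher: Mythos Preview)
Your proof is correct and follows the same strategy as the paper: integrate the master equation, observe that the elastic term vanishes by the involution $\V\mapsto\V^{i,j}$, and telescope the resulting two-step recursion $\frac{\d}{\d t}\mathbf{\P}(t,N)=\tfrac{\alpha}{\Lambda}\,[a_{N+2}-a_N]$ over $N$ to isolate the boundary terms. The paper organises the same computation by differentiating the tail sums $\sum_{N\ge k}\mathbf{\P}(t,N)$ directly and then specialising to $k=N_0+1$ and $k=1$; your explicit remark that $\sigma_0\equiv\sigma_1\equiv 0$ (and the summation from $N=0$) makes the mass-conservation step a bit more transparent than the paper's terse invocation of its formula at $k=1$.
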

\begin{proof} The proof is an adaptation of the argument proposed in  \cite[Theorem 2.1]{fede-esco}. More precisely, for any $k \geq 1$, we compute the time derivative of $\sum_{N=k}^{\infty}\mathbf{\P}(t,N)$. By \eqref{PNVN}  we get 
\begin{eqnarray}\label{eq:CONS}
&&\!\!\frac{\d }{\d t}\sum_{N=k}^\infty \mathbf{\P}(t, N)=-\frac12 \sum_{N=k}^{\infty}\frac{N(N-1)}{N!}\int_{\R^{3N}} \d \V\int_{\S} \B_{{\Lambda}}(v_{N-1}-v_{N},\omega) \P_N\left(t,\V\right)\d \omega\ \ \ \ \nonumber\\
&&\qquad\qquad+\frac{(1-\alpha)}{2}\sum_{N=k}^{\infty}\frac{N(N-1)}{N!}\int_{\R^{3N}} \d \V\int_{\S} \B_{{\Lambda}}(v_{N-1}-v_{N},\omega) \P_N\left(t,\V^{N-1, N}\right)\d \omega\nonumber\\
&&\qquad\qquad+\frac{\alpha}{2} \sum_{N=k}^{\infty}\frac{1}{N!}\int_{\R^{3N+6}} \d \bm{V}_{N+2}\int_{\S} \B_{{\Lambda}}(v_{N+1}-v_{N+2},\omega) \P_{N+2}(t,\bm{V}_{N+2})\d \omega
,\nonumber\\
&&
\end{eqnarray}
where we used the symmetry of $\P_N(t)$ with respect to any permutation of the indices to write
\begin{multline*}
\sum_{1 \leq i < j \leq N}\int_{\R^{3N}}  \d \V\int_{\S} \B_{{\Lambda}}(v_i-v_j,\omega)\P_N\left(t,\V\right)\d\omega=\\
\frac{N(N-1)}{2}\int_{\R^{3N}}  \d \V\int_{\S} \B_{{\Lambda}}(v_{N-1}-v_{N},\omega) \P_N\left(t,\V\right)\d \omega\end{multline*}
and
\begin{multline*}\sum_{1 \leq i < j \leq N}\int_{\R^{3N}} \d \V\int_{\S} \B_{{\Lambda}}(v_i-v_j,\omega)\P_N\left(t,\V^{i,j}\right)\d\omega=\\
\frac{N(N-1)}{2}\int_{\R^{3N}} \d \V\int_{\S} \B_{{\Lambda}}(v_{N-1}-v_{N},\omega) \P_N\left(t,\V^{N-1, N}\right)\d \omega.
\end{multline*}
\medskip

\noindent We observe that for any $i,j\in \{1,\dots, N\}$ we have
\begin{equation*}
\int_{\R^{3}} \d v_{i}\int_{\R^{3}} \d v_j\int_{\S} \B_{{\Lambda}}(v_i-v_j,\omega) \P_N\left(t,\V\right)\d \omega=\int_{\R^{3}} \d v_i\int_{\R^{3}} \d v_j\int_{\S} \B_{{\Lambda}}(v_i-v_j,\omega) \P_N\left(t,\V^{i,j}\right)\d \omega.
\end{equation*}
Then, equation (\ref{eq:CONS}) yields
\begin{equation*}\begin{split}\label{eq:CONS1}
\frac{\d }{\d t}\sum_{N=k}^\infty \mathbf{\P}(t, N)&=-\frac{\alpha}{2}\sum_{N=k}^{\infty}\frac{N(N-1)}{N!}\int_{\R^{3N}} \d \V\int_{\S} \B_{{\Lambda}}(v_{N-1}-v_{N},\omega) \P_N\left(t,\V\right)\d \omega\\
&\quad +\frac{\alpha}{2} \sum_{N=k}^{\infty}\frac{1}{N!}\int_{\R^{3N+6}} \d \bm{V}_{N+2}\int_{\S} \B_{{\Lambda}}(v_{N+1}-v_{N+2},\omega) \P_{N+2}(t,\bm{V}_{N+2})\d \omega,\\
&=-\frac{\alpha}{2}\sum_{N=k+2}^{\infty}\frac{1}{(N-2)!}\int_{\R^{3N}} \d \V\int_{\S} \B_{{\Lambda}}(v_{N-1}-v_{N},\omega) \P_N\left(t,\V\right)\d \omega\\
&\quad +\frac{\alpha}{2} \sum_{N=k}^{\infty}\frac{1}{N!}\int_{\R^{3N+6}} \d \bm{V}_{N+2}\int_{\S} \B_{{\Lambda}}(v_{N+1}-v_{N+2},\omega) \P_{N+2}(t,\bm{V}_{N+2})\d \omega
\end{split}\end{equation*}
Performing the change of indices $N \to N-2$ in the first sum, one finds
\begin{equation}\begin{split}\label{eq:CONS1}
\frac{\d }{\d t}\sum_{N=k}^\infty \mathbf{\P}(t, N)&=-\frac{\alpha}{2}\dfrac{1}{(k-2)!}\int_{\R^{3k}}\d \bm{V}_{\ell}\int_{\S}\B_{\Lambda}(v_{\ell}-v_{k-1},\omega)\P_{\ell}(t,\bm{V}_{\ell})\d\omega\\
&\qquad -\frac{\alpha}{2}\dfrac{1}{(k-1)!}\int_{\R^{3k+3}}\d \bm{V}_{\ell+1}\int_{\S}\B_{\Lambda}(v_{\ell+1}-v_{\ell},\omega)\P_{\ell+1}(t,\bm{V}_{\ell+1})\d\omega\leq 0.
\end{split}\end{equation}
In particular, applying \eqref{eq:CONS1} and choosing $k=N_{0}+1$, we obtain
$$\sum_{N=N_{0}+1}^{\infty}\mathbf{\P}(t,N) \leq \sum_{N=N_{0}+1}^{\infty}\mathbf{\P}(0,N)=0 \qquad \forall t \geq 0$$
since $\mathbf{\P}^{0}(N) =0$ for any $N \neq N_{0}.$ Using that $\mathbf{\P}(t,N)$ is nonnegative for any $t \geq 0$ and any $N \geq 1$, we get \eqref{support}. The proof of \eqref{mass} follows directly from \eqref{eq:CONS1} applied to $k=1$ which yields
$$\dfrac{\d}{\d t}\sum_{{N=1}}^{\infty}\mathbf{\P}(t,N)=0$$
and the conclusion follows since $\sum_{N=1}^{\infty}\mathbf{\P}^{0}(N)=1.$
\end{proof}

We can now prove the well-posedness of the Cauchy problem associated to \eqref{PNVN} (cf. \eqref{PNop2}).  We establish the following:
\begin{theo}\label{th:wellposPS}
For any $n_{0}\geq 1$ let $N_{0}=2n_{0}$ and let the initial datum $\left(\P_N(0)\right)_{N \geq 1}$ be given by \eqref{inDATA}. Then, there exists a unique solution $\left\{\P_{N}(t)\right\}_{N \geq 1,\;t \geq 0}$ to \eqref{PNVN} such that
$$\P_{N} \in \mathcal{C}^{1}([0,+\infty); L^{1}(\R^{3N})) \qquad \forall N \geq 1$$
and such that the mapping $t \geq 0 \mapsto \sum_{N=1}^{\infty}\mathbf{\P}(t,N)$ belongs to $\mathcal{C}^{1}([0,\infty))$. Such a solution satisfies \eqref{support} and \eqref{mass} and is given by
\begin{equation}\label{N<N_0}
\begin{cases}
\P_{N_0}(t)&=\mS_{N_0}\left(\frac{t}{\Lambda}\right)\P_{N_0}(0) \quad t \geq 0\\
\P_{N_0-1}(t)&=0 \qquad t \geq 0\\
\P_{N_0-2}(t)&=\displaystyle\frac{\alpha}{{\Lambda}} \dis \int_0^t \mS_{N_0-2}\left(\frac{t-s}{\Lambda}\right) \mathbf{\Gamma}_{N_0}[\Psi_{N_0}(s)]\d s\\
&=\displaystyle\frac{\alpha}{{\Lambda}} \dis \int_0^t \mS_{N_0-2}\left(\frac{t-s}{\Lambda}\right)\mathbf{\Gamma}_{N_0}[\mS_{N_0}\left(\frac{s}{\Lambda}\right)\P_{N_0}(0)]\d s\\
\P_{N_0-3}(t)&=0\\
\P_{N_0-4}(t)&=\displaystyle\frac{\alpha}{{\Lambda}} \dis \int_0^t \mS_{N_0-4}\left(\frac{t-s}{\Lambda}\right)\mathbf{\Gamma}_{N_0-2}[\Psi_{N_0-2}(s)]\d s\\
&= \displaystyle\frac{\alpha^2}{{\Lambda^2}}\dis \int_0^t \d s \int_0^s  \mS_{N_0-4}\left(\frac{t}{\Lambda}\right)\mathbf{\Gamma}_{N_0-2}\left[\mS_{N_0-2}\left(\frac{s-\tau}{\Lambda}\right)\mathbf{\Gamma}_{N_0}[\mS_{N_0}\left(\frac{\tau}{\Lambda}\right)\P_{N_0}(0)]\right]\d\tau\\
&\vdots\,\\
\P_{N_0-2k}(t)&=\displaystyle\frac{\alpha}{{\Lambda}} \dis \int_0^t \mS_{N_0-2k}\left(\frac{t-s}{\Lambda}\right)\mathbf{\Gamma}_{N_0-2k+2}[\Psi_{N_0-2k+2}(s)]\d s \qquad \forall k \in \{1,\ldots,n_{0}\}.
\end{cases}
\end{equation}
\medskip

Moreover, for any $T > 0$, it holds
\begin{equation}\label{L10T}\int_{0}^{T}\|\P_{N}(t)\|_{L^{1}_{N}(\R^{3N})}\d t \leq \frac{\Lambda}{\alpha}\,N_0!\, \qquad \forall N=N_{0}-2k\,,\; k \in \{0,\ldots,n_{0}\}.\end{equation}
\end{theo}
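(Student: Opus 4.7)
The plan is to construct the solution $\{\P_{N}(t)\}$ explicitly by exploiting the \emph{triangular structure} of the hierarchy \eqref{PNop2}: the equation for $\P_{N}$ is coupled only to $\P_{N+2}$, so once we know that $\P_{N}(t)=0$ for every $N>N_{0}$, the system reduces to a finite chain that can be solved inductively through the Duhamel formula \eqref{duha}. First I would invoke Proposition \ref{prop:apriori}, whose standing regularity hypothesis is precisely the one assumed in the statement of the theorem, to get \eqref{support} for any putative solution. With $\P_{N}(t)\equiv 0$ for $N>N_{0}$, the Duhamel formula for $\P_{N_{0}}$ has no source term and reduces to $\P_{N_{0}}(t)=\mathcal{S}_{N_{0}}(t/\Lambda)\P_{N_{0}}(0)$, which is well-defined and lies in $\mathcal{C}^{1}([0,+\infty);L^{1}(\R^{3N_{0}}))$ thanks to Proposition \ref{propLN}, once one checks that the initial datum $N_{0}!\,f_{0}^{\otimes N_{0}}$ belongs to $\D(\mathcal{L}_{N_{0}})=L^{1}_{N_{0}}(\R^{3N_{0}})$; this last point follows from the growth bound \eqref{eq:growth} (with $\gamma\in[0,1]$) combined with the moment assumption \eqref{normalization}.

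I would then define $\P_{N_{0}-2k}$ for $k=1,\ldots,n_{0}$ by the iterated Duhamel formulas displayed in \eqref{N<N_0}, checking inductively that each integrand lies in the appropriate domain so that Proposition \ref{propLN} can be reapplied. The odd-indexed functions $\P_{N_{0}-2k+1}$ are handled in parallel: their initial data vanish, and the source $\mathbf{\Gamma}_{N+2}[\P_{N+2}]$ in their Duhamel formula vanishes as well, since by downward induction the next-higher odd-indexed $\P_{N+2}$ is already zero; hence $\P_{N}\equiv 0$ for every odd $N$. For uniqueness, given any other solution $\{\tilde{\P}_{N}(t)\}$ with the stated regularity, Proposition \ref{prop:apriori} yields $\tilde{\P}_{N}\equiv 0$ for $N>N_{0}$, which forces $\tilde{\P}_{N_{0}}=\mathcal{S}_{N_{0}}(\cdot/\Lambda)\P_{N_{0}}(0)=\P_{N_{0}}$; cascading the Duhamel identity downward through the hierarchy then gives $\tilde{\P}_{N}=\P_{N}$ for all $N$. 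The mass conservation \eqref{mass} is an immediate byproduct of Proposition \ref{prop:apriori}.

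The estimate \eqref{L10T} is proved by induction on $k$. For $k=0$, the change of variable $\tau=t/\Lambda$ combined with the smoothing estimate \eqref{smoothing} of Proposition \ref{propLN} gives
\begin{equation*}
\int_{0}^{T}\|\P_{N_{0}}(t)\|_{L^{1}_{N_{0}}(\R^{3N_{0}})}\,\d t
=\Lambda\int_{0}^{T/\Lambda}\|\mathcal{S}_{N_{0}}(\tau)\P_{N_{0}}(0)\|_{L^{1}_{N_{0}}(\R^{3N_{0}})}\,\d\tau
\leq\frac{\Lambda}{\alpha}\,\|\P_{N_{0}}(0)\|_{L^{1}}=\frac{\Lambda}{\alpha}N_{0}!\,.
\end{equation*}
For the inductive step, I would swap the time integrations in the Duhamel formula via Fubini (permissible since the integrand is nonnegative), apply the rescaled smoothing estimate to the inner integral, and then use property ii) to bound $\|\mathbf{\Gamma}_{N_{0}-2k+2}[\P_{N_{0}-2k+2}(s)]\|_{L^{1}}$ in terms of $\|\P_{N_{0}-2k+2}(s)\|_{L^{1}_{N_{0}-2k+2}}$. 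Writing $a_{k}:=\int_{0}^{T}\|\P_{N_{0}-2k}(t)\|_{L^{1}_{N_{0}-2k}}\,\d t$, this yields the recursion
\begin{equation*}
a_{k}\leq\frac{2}{(N_{0}-2k+1)(N_{0}-2k+2)\cdot}\,a_{k-1}\qquad(k\geq 1)\,,
\end{equation*}
which telescopes into $a_{k}\leq\frac{\Lambda}{\alpha}\,2^{k}(N_{0}-2k)!\leq\frac{\Lambda}{\alpha}\,N_{0}!$, the last inequality since $2^{k}\leq N_{0}(N_{0}-1)\cdots(N_{0}-2k+1)$ whenever $N_{0}-2k\geq 0$.

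The only real technical obstacle is the bookkeeping for the domain condition and the regularity $\P_{N}\in\mathcal{C}^{1}([0,+\infty);L^{1}(\R^{3N}))$ at each level of the induction, since the generator $\mathcal{L}_{N}$ is unbounded; this is in practice guaranteed precisely by the integrated bound \eqref{L10T}, which shows that the Duhamel integrand stays in $\D(\mathcal{L}_{N})$ for almost every time, so that the classical semigroup theory allows us to differentiate in $t$ and verify that the constructed $\P_{N}$ indeed solves \eqref{PNop2}.
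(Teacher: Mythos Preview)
Your proof is correct and follows essentially the same route as the paper: invoke Proposition \ref{prop:apriori} to truncate the hierarchy at $N_{0}$, build the solution downward via the Duhamel cascade \eqref{N<N_0}, and prove \eqref{L10T} by induction combining Fubini with the smoothing estimate \eqref{smoothing}. The only cosmetic difference is that in the inductive step you keep the full constant $\tfrac{2}{(N_{0}-2k+1)(N_{0}-2k+2)}$ from property ii) and then telescope, whereas the paper uses the cruder bound $\|\mathbf{\Gamma}_{N+2}[\Phi]\|_{L^{1}}\leq \|\Phi\|_{L^{1}_{N+2}}$ directly to obtain $a_{k}\leq a_{k-1}$; both land on the same final bound $\tfrac{\Lambda}{\alpha}N_{0}!$.
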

\begin{proof}
We introduce the class $\mathscr{S}$ of sequences $\left\{\Phi_{N}(\cdot)\right\}_{N \geq 1}$ such that 
$$\Phi_{N} \in \mathcal{C}^{1}([0,+\infty); L^{1}(\R^{3N})) \qquad N \geq 1$$ and the mapping $t \geq 0 \longmapsto \sum_{N=1}^{\infty}\overline{\Phi}(t,N)$ belongs to $\mathcal{C}^{1}([0,\infty))$ where
$$\overline{\Phi}(t,N)=\dfrac{1}{N!}\int_{\R^{3N}}\Phi_{N}(t,\V)\d \V \qquad \forall t \geq 0.$$ 
According to Proposition \ref{prop:apriori}, any solution $\left\{\P_{N}(t)\right\}_{N \geq 1,\,t \geq 0}$ to \eqref{PNVN} (with initial datum given by \eqref{inDATA}) which belongs to the class $\mathscr{S}$ is such that $\P_{N}(t)=0$ for any $N \geq N_{0}.$ 
Then, Duhamel's formula \eqref{duha} implies that any solution $\P_{N}(t)$ satisfies \eqref{N<N_0}. It remains to prove that, indeed, the sequence $\left\{\P_{N}(t)\right\}_{N \geq 1,\;t \geq 0}$ given by \eqref{N<N_0} belongs to the class $\mathscr{S}$ and that the solution is unique within this class. To prove the first claim, we first  prove that $\P_{N}(t) \in L^{1}(\R^{3N})$ for any $t \geq 0$ and any $N \in \mathbb{N}.$ Clearly, the difficulty stems from the fact that $\sigma_{N}$ is unbounded. Let us fix $T > 0$. One has 
\begin{equation}\label{PN0}\begin{split}
\int_{0}^{T}\|\P_{N_{0}}(t)\|_{L^{1}_{N_{0}}}\d t&=\int_{0}^{T}\left\|\mS_{N_{0}}\left(\frac{t}{\Lambda}\right)\P_{N_{0}}(0)\right\|_{L^{1}_{N_{0}}}\d t\\
&\leq \frac{\Lambda}{\alpha}\|\P_{N_{0}}\|_{L^{1}(\R^{3N_{0}})}=\dfrac{N_{0}!\,\Lambda}{\alpha}\end{split}\end{equation}
according to \eqref{smoothing}. Now, for any $k \in \{1,\ldots,n_{0}\}$, using \eqref{N<N_0}, we have
\begin{multline*}
\int_{0}^{T}\|\P_{N_0-2k}(t)\|_{L^{1}_{N_0-2k}}\d t \\
\leq \frac{\alpha}{\Lambda}\int_{0}^{T}\d t\int_0^t \left\|\mS_{N_0-2k}\left(\frac{t-s}{\Lambda}\right)\mathbf{\Gamma}_{N_0-2k+2}[\Psi_{N_0-2k+2}(s)]\right\|_{L^{1}_{N_{0}-2k}}\d s\\
=\frac{\alpha}{\Lambda}\int_{0}^{T}\d s\int_{s}^{T} \left\|\mS_{N_0-2k}\left(\frac{t-s}{\Lambda}\right)\mathbf{\Gamma}_{N_0-2k+2}[\Psi_{N_0-2k+2}(s)]\right\|_{L^{1}_{N_{0}-2k}}\d t\\
\leq  \frac{\alpha}{\Lambda}\int_{0}^{T}\d s\int_{0}^{T} \left\|\mS_{N_0-2k}\left(\frac{\tau}{\Lambda}\right)\mathbf{\Gamma}_{N_0-2k+2}[\Psi_{N_0-2k+2}(s)]\right\|_{L^{1}_{N_{0}-2k}}\d \tau.\end{multline*}
Applying again \eqref{smoothing} we obtain
\begin{multline*}
\int_{0}^{T}\|\P_{N_0-2k}(t)\|_{L^{1}_{N_0-2k}}\d t \leq  \int_{0}^{T}\left\|\mathbf{\Gamma}_{N_{0}-2k+2}[\Psi_{N_{0}-2k+2}(s)]\right\|_{L^{1}(\R^{3(N_{0}-2k+2)}}\d t\\
\leq  \int_{0}^{T}\|\Psi_{N_{0}-2k+2}(s)\|_{L^{1}_{N_{0}-2k+2}}\d t \qquad \forall k \in \{1,\ldots,n_{0}\}.
\end{multline*}
By finite induction and, using \eqref{PN0}, this clearly yields \eqref{L10T}. In particular, using the fact that $(\mS_{N}(t))_{t \geq 0}$ is a contraction semigroup in $L^{1}(\R^{3N})$,  one deduces directly from \eqref{L10T} that
$$\|\P_{N+2}(t)\|_{L^{1}} \leq \int_{0}^{t}\left\|\mathbf{\Gamma}_{N}[\Psi_{N}(s)]\right\|_{L^{1}_{N}}\d s \leq C_{N} \qquad \forall N \leq N_{0}$$ which shows that $\P_{N}(t)\in L^{1}_{N}(\R^{3N})$ for any $t \geq 0$. Moreover, by virtue of \eqref{normalization}, the initial datum $\P_{N_{0}}^{0} \in \D(\mathcal{L}_{N_{0}})=L^{1}(\R^{3N_{0}}) \cap L^{1}_{N_{0}}(\R^{3N_{0}}$. Thus, by classical semigroup theory $\P_{N_{0}}(t)=\mS_{N_{0}}(t)\P_{N_{0}}^{0} \in \mathcal{C}^{1}([0,\infty),L^{1}(\R^{3N_{0}}).$ By \eqref{PNVN}, it follows that $\P_{N} \in \mathcal{C}^{1}([0,+\infty); L^{1}(\R^{3N}))$ for any $N \leq N_{0}$. Moreover, since the sum $\sum_{N \geq 1}\mathbf{\P}(\cdot,N)$ is actually finite, it also belongs to $\mathcal{C}^{1}([0,+\infty)).$ This shows that the constructed solution $\left\{\P_{N}(t)\right\}_{N \geq 1,\;t \geq 0}$ belongs to the class $\mathscr{S}.$ It remains to prove the uniqueness of the solution to \eqref{PNVN} within this class. In order to do this, let us consider two solutions  $\left\{\P^{1}_{N}(t)\right\}_{N \geq 1,\,t \geq 0}$ and  $\left\{\P_{N}^{2}(t)\right\}_{N \geq 1,\,t \geq 0}$ of \eqref{PNVN}, both belonging to the class $\mathscr{S}$ and such that $\P_{N}^{1}(0)=\P^{2}_{N}(0)=\P_{N}^{0}$ given by \eqref{inDATA}. We set $\Phi_{N}(t)=\P^{1}_{N}(t)-\P^{2}_{N}(t)$ for any $t \geq 0$ and any $N \geq 0$. Arguing exactly as above one deduces from Duhamel's formula \eqref{duha} that 
$$\int_{0}^{T}\|\Phi_{N}(t)\|_{L^{1}_{N}}\d t \leq \dfrac{1}{\Lambda}\int_{0}^{T}\|\Phi_{N+2}(t)\|_{L^{1}_{N+2}}\d t \qquad \forall \,T > 0.$$
Since  \eqref{support} implies that $\Phi_{N}(t)=0$ for any $N \geq N_{0}+1$, this shows that $\Phi_{N}(t)=0$ for any $N \geq 0$ and the solution is unique {in} the class $\mathscr{S}.$\end{proof}

\section{Correlation Functions and BBGKY hierarchy}\label{SMarginals}
\setcounter{equation}{0}
\setcounter{theo}{0}
For any fixed $\ell\in \N$, we define the $\ell$-particle correlation function $f_\ell(v_1, \dots, v_\ell,t)$ 
at time $t$ as
\begin{eqnarray}\label{MARGINAL}
f_\ell(v_1,\dots, v_\ell,t) = \sum_{N=\ell}^\infty
 \frac{1}{(N-\ell)!}\int_{\R^3} \d v_{\ell+1}\dots \int_{\R^3}  \d v_{N}\P_{N}(\V,t) .
\end{eqnarray}
For $\ell=0$ we set $f_0\equiv 0$. Notice that, for $N=\ell$, the above integral is meaningless and it is intended simply as $\P_{\ell}(\Vk,t),$ i.e.
$$f_{\ell}(\Vk,t)=\P_{\ell}(\Vk,t)+\sum_{N=\ell+1}^\infty
 \frac{1}{(N-\ell)!}\int_{\R^3} \d v_{\ell+1}\dots \int_{\R^3}  \d v_{N}\  \P_{N}(\V,t).$$
We collect here some general properties of the correlation functions $f_\ell$. At any time $t$ the expected (or mean) number of particles $\overline {N}(t)$ defined as
\begin{equation}
\overline{N(t)}=\sum_{N=0}^\infty
 N \mathbf{\P}(t, N),
\end{equation}
with $\mathbf{\P}(t, N)$ given as in \eqref{S2E4}, satisfies
\begin{eqnarray}
\label{S3E100}
\overline{N(t)}=\int_{\R^{3}} f_1(v_1, t)\d v_1.
\end{eqnarray}
The function $f_1$ is then the density function associated to the average number of particles. More generally, one may also define
\begin{equation}
\overline{N_k(t)}=\sum_{N=1}^\infty 
N(N-1)\dots (N-k+1) \mathbf{\P}(t, N),
\end{equation}
and then
\begin{eqnarray}
\label{S4DefMj}
\overline{N_k(t)}=\|f_k(t)\|_{L^1(\R^{3k})}.
\end{eqnarray}

The functions $f_\ell$'s will be called  correlation functions since they satisfy properties  (\ref{S3E100}), (\ref{S4DefMj}) and  their definition is similar  to that of the classical correlation functions in statistical mechanics. 
The correlation function $f_{\ell}(t)=f_{\ell}(t,\Vk)$ satisfies the following 
\begin{prp}[\textit{\textbf{BBGKY Hierarchy}}]\label{prp:bbgky}
Let $f_{\ell}(t)=f_{\ell}(t,\Vk)$ be defined as in \eqref{MARGINAL}. For any $\ell \in \{1,\ldots,N_{0}\}$ we have that $f_{\ell}(t,\Vk)$ satisfies 
\begin{eqnarray}\label{hierarchyNOTscaledb}
&&\partial_t f_\ell(t,\Vk)=\sum_{1 \leq i < j \leq \ell}\int_{\S} \B_{\Lambda}(v_i-v_j,\omega) \left[(1-\alpha)f_\ell\left(t,\Vk^{i,j}\right)-f_\ell\left(t,\Vk\right)\right]\d\omega\nonumber\\
&&\qquad \ \ +\sum_{i=1}^{\ell}\int_{\R^{3}} \d	v_{\ell+1}\int_{\S} \B_{\Lambda}(v_i-v_{\ell+1},\omega) \left[(1-\alpha)f_{\ell+1}\left(t,\Vku^{i, \ell+1}\right)-f_{\ell+1}\left(t,\Vku\right)\right]\d\omega.\nonumber\\
&&
\end{eqnarray}
\end{prp}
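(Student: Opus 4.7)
The plan is to differentiate the series \eqref{MARGINAL} in time and insert the master equation \eqref{PNVN} for $\partial_t\P_N$. Thanks to Theorem~\ref{th:wellposPS}, only the profiles $\P_{N_0-2k}$, $k=0,\dots,n_0$, are non-zero, each belongs to $\mathcal{C}^1([0,+\infty);L^1(\R^{3N}))$ and satisfies $\P_N(t)\in L^1_N(\R^{3N})$ for almost every $t$, so the sum defining $f_\ell$ is in fact finite and every exchange of $\partial_t$, of integration against $\d v_{\ell+1}\cdots \d v_N$, and of summation over $N$ is legitimate without further care.

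After substitution, I split the combinatorial sum $\sum_{1\leq i<j\leq N}$ coming from the Kac and annihilation loss terms of \eqref{PNVN} into three subsums according to where the indices sit: (a) both in $\{1,\dots,\ell\}$, (b) $i\leq \ell<j$, and (c) both in $\{\ell+1,\dots,N\}$. In case~(a) the swap $\V^{i,j}$ acts only on the first $\ell$ coordinates, so integrating against $\d v_{\ell+1}\cdots\d v_N$ and summing over $N$ with weights $1/(N-\ell)!$ directly reproduces the first line of \eqref{hierarchyNOTscaledb} by definition of $f_\ell$. In case~(b), symmetry of $\P_N$ in $v_{\ell+1},\dots,v_N$ shows that, for each fixed $i\leq \ell$, the $(N-\ell)$ choices of the high index $j$ give the same contribution; taking the representative $j=\ell+1$, absorbing the factor $(N-\ell)$ into $\tfrac{1}{(N-\ell)!}=\tfrac{1}{(N-\ell-1)!\,(N-\ell)}$, and re-summing in $N$, one recognizes $f_{\ell+1}(\Vku)$ for the loss piece and $f_{\ell+1}(\Vku^{i,\ell+1})$ for the gain piece, which is precisely the second line of \eqref{hierarchyNOTscaledb}.

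The delicate point is case~(c) combined with the annihilation gain term. For $\ell<i<j\leq N$ both $v_i$ and $v_j$ are dummy variables; since the map $(v_i,v_j)\mapsto(v_i',v_j')$ defined by \eqref{eq:vivjprime} is an involution with unit Jacobian leaving the kernel $\B_\Lambda(v_i-v_j,\omega)$ invariant, the change of variables yields $\int \B_\Lambda\,\P_N^{i,j}\,\d v_{\ell+1}\cdots\d v_N=\int \B_\Lambda\,\P_N\,\d v_{\ell+1}\cdots\d v_N$, so the elastic gain $(1-\alpha)\P_N^{i,j}$ and the combined loss $-\P_N$ collapse to a net $-\alpha\P_N$ contribution. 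Using symmetry of $\P_N$ to represent all $\binom{N-\ell}{2}$ such pairs by $(N-1,N)$, and then renaming $(v_{N-1},v_N)$ as $(v_{\ell+1},v_{\ell+2})$, this residual loss equals
\[
-\tfrac{\alpha}{2}\int_{\R^6}\d v_{\ell+1}\d v_{\ell+2}\int_{\S}\B_\Lambda(v_{\ell+1}-v_{\ell+2},\omega)\,f_{\ell+2}(t,\bm{V}_{\ell+2})\,\d\omega.
\]
The annihilation gain term in \eqref{PNVN}, after the index shift $N\mapsto N-2$ already used in the proof of Proposition~\ref{prop:apriori}, produces exactly the opposite quantity and cancels it, leaving \eqref{hierarchyNOTscaledb}. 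The only real obstacle is this combinatorial bookkeeping -- matching the factorials after the $N\mapsto N-2$ shift and applying the change of variables correctly; the remainder is a routine manipulation of symmetric functions.
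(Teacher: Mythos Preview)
Your proposal is correct and follows essentially the same route as the paper's proof: differentiate \eqref{MARGINAL}, insert \eqref{PNVN}, split $\sum_{1\leq i<j\leq N}$ according to the three index ranges, use symmetry to reduce cases~(b) and~(c) to a single representative pair, and observe that the involutive change of variables $(v_i,v_j)\mapsto(v_i',v_j')$ for case~(c) collapses the elastic gain and total loss to $-\alpha\P_N$, which then cancels the annihilation gain after the $N\mapsto N-2$ shift. The only cosmetic difference is that the paper organizes the computation by first separating the three terms of \eqref{PNVN} into $A_1,A_2,A_3$ and then splitting $A_1$ and $A_2$ by index range, whereas you split by index range first; the algebra and the key cancellation are identical.
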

\begin{proof} The proof is obtained by direct inspection, exploiting the fact that $\P_{N}(t)$ is symmetric. Namely, using the definition (\ref{MARGINAL}), a straightforward computation from (\ref{PNop}) shows that for any $\ell\le N_0$ the function $f_{\ell}$ satisfies
\begin{align}\label{BBGKY}
\partial_t f_\ell (\Vk, t)= & -\sum_{N=\ell}^\infty \frac{1}{(N-\ell)!}\sum_{1 \leq i < j \leq N}\int_{\R^{3(N-\ell)}} \d  \bm{V}_{N, \ell}\int_{\S} \B_{\Lambda}(v_i-v_j,\omega) \P_N\left(t,\V\right)\d\omega \nonumber \\&
+(1-\alpha) \sum_{N=\ell}^{\infty} \frac{1}{(N-\ell)!}\sum_{1 \leq i < j \leq N}\int_{\R^{3(N-\ell)}} \d  \bm{V}_{N, \ell}\int_{\S} \B_{\Lambda}(v_i-v_j,\omega) \P_N\left(t,\V^{i,j}\right)\d\omega \nonumber \\&
+\frac{\alpha}{2}\sum_{N=\ell}^{\infty} \frac{1}{(N-\ell)!}\int_{\R^{3(N+2-\ell)}} \d  \bm{V}_{N+2, \ell}\int_{\S}\B_{\Lambda}
(v_{N+1}-v_{N+2},\omega)\P_{N+2}\left(t,\bm{V}_{N+2}\right)\d\omega \nonumber \\&
:=A_{1}+A_{2}+A_{3}
\end{align}
where we used the notation, valid for any $N \geq 1$, $k \in \{1,\ldots,N\}$, 
$$\bm{V}_{N, k}:= (v_{k+1},\dots,v_{N}).$$
The first term on the right hand side of (\ref{BBGKY}) gives the following contributions
\begin{align}\label{BBGKYgain}
A_{1}=& -\sum_{1 \leq i < j \leq \ell}\int_{\S} \B_{\Lambda}(v_i-v_j,\omega) f_\ell\left(t,\Vk\right)\d\omega \nonumber \\&
-\sum_{N=\ell}^{\infty} \frac{(N-\ell)}{(N-\ell)!}\sum_{ i =1}^{ \ell}\int_{\R^{3}} \d v_{\ell+1}\int_{\S} \B_{\Lambda}(v_i-v_{\ell+1},\omega)\d\omega \int_{\R^{3(N-\ell-1)}} \P_N\left(t,\V\right)\d  \bm{V}_{N, \ell+1} \nonumber \\&
-\frac{1}{2}\sum_{N=\ell}^{\infty} \frac{(N-\ell)(N-\ell-1)}{(N-\ell)!} \int_{\R^{3(N-\ell)}}\d  \bm{V}_{N+2, \ell+2}\int_{\S} \B_{\Lambda}(v_{\ell+1}-v_{\ell+2},\omega)\P_N\left(t,\V\right)\d\omega \nonumber \\&
=:A_{1,1}+A_{1,2}+A_{1,3}
\end{align}
where we divided the sum with respect to $i$ and $j$ into three parts, $1\leq i<j\leq \ell$, $1\leq i\leq \ell,\ \ell+1\leq j\leq N$ and $\ell+1\leq i<j\leq N$. Moreover, we used the symmetry of $\P_N(t)$ with respect to any permutation of the indices  
to write
\begin{eqnarray*}
&& \sum_{j=\ell+1}^N \int_{\R^{3(N-\ell)}} \d  \bm{V}_{N, \ell}\int_{\S} \B_{\Lambda}(v_i-v_{j},\omega)\P_N\left(t,\V\right)\d\omega
\nonumber\\
&&=(N-\ell)  \int_{\R^{3(N-\ell)}} \d  \bm{V}_{N, \ell}\int_{\S} \B_{\Lambda}(v_i-v_{\ell+1},\omega)\P_N\left(t,\V\right)\d\omega\nonumber\\
&&=(N-\ell)\int_{\R^{3}} \d v_{\ell+1} \int_{\S} \B_{\Lambda}(v_i-v_{\ell+1},\omega) \int_{\R^{3(N-\ell-1)}} \d\bm{V}_{N, \ell+1}\P_N\left(t,\V\right)\d\omega
\end{eqnarray*}
and
\begin{eqnarray*}
&& \sum_{\ell+1\leq i<j\leq N} \int_{\R^{3(N-\ell)}} \d  \bm{V}_{N, \ell} \int_{\S} \B_{\Lambda}(v_i-v_{j},\omega)\P_N\left(t,\V\right)\d\omega
\nonumber\\
&&=\dfrac{(N-\ell)(N-\ell-1)}{2}  \int_{\R^{3(N-\ell)}} \d  \bm{V}_{N, \ell}\int_{\S} \B_{\Lambda}(v_{\ell+1}-v_{\ell+2},\omega)\P_N\left(t,\V\right)\d\omega\nonumber\\
&&=\dfrac{(N-\ell)(N-\ell-1)}{2} \int_{\R^{3}} \d v_{\ell+1}\int_{\R^{3}} \d v_{\ell+2}\int_{\S} \B_{\Lambda}(v_{\ell+1}-v_{\ell+2},\omega)\d\omega\times\\
&&\phantom{+++++++++} \times\int_{\R^{3(N-\ell-2)}}\P_N\left(t,\V\right) \d  \bm{V}_{N, \ell+2}.
\end{eqnarray*}
Using Fubini's theorem and Definition (\ref{MARGINAL}) we obtain
\begin{equation}\begin{split}\label{BBGKYgain1}
&A_{1,2}=- \sum_{N=\ell+1}^{\infty} \frac{(N-\ell)}{(N-\ell)!}\sum_{ i =1}^{\ell}\int_{\R^{3}} \d v_{\ell+1}\int_{\S} \B_{\Lambda}(v_i-v_{\ell+1},\omega)\d\omega\int_{\R^{3(N-\ell-1)}} \P_N\left(t,\V\right)\d  \bm{V}_{N, \ell+1} \\
&=-\sum_{ i =1}^{\ell}\int_{\R^{3}} \d v_{\ell+1}\int_{\S} \B_{\Lambda}(v_i-v_{\ell+1},\omega)\d\omega
\sum_{N=\ell+1}^{\infty} \frac{1}{\left(N-(\ell+1)\right)!}\int_{\R^{3(N-\ell-1)}} \P_N\left(t,\V\right)\d  \bm{V}_{N, \ell+1}\\
&= -\sum_{ i =1}^{\ell}\int_{\R^{3}} \d v_{\ell+1}\int_{\S} \B_{\Lambda}(v_i-v_{\ell+1},\omega)f_{\ell+1}\left(t,\Vku\right)\d\omega
\end{split}\end{equation}
and
\begin{align}\label{BBGKYlossK+2}
A_{1,3}&=-\frac{1}{2}\sum_{N=\ell+2}^{\infty} \frac{(N-\ell)(N-\ell-1)}{(N-\ell)!}\int_{\R^{3}} \d v_{\ell+1}\int_{\R^{3}} \d v_{\ell+2} \nonumber \\
&\phantom{++++}\int_{\S} \B_{\Lambda}(v_{\ell+1}-v_{\ell+2},\omega)\d\omega\int_{\R^{3(N-\ell-2)}} \P_N\left(t,\V\right)\d \bm{V}_{N, \ell+2} \nonumber \\
&=-\frac{1}{2}\int_{\R^{3}} \d v_{\ell+1}\int_{\R^{3}} \d v_{\ell+2}\int_{\S} \B_{\Lambda}(v_{\ell+1}-v_{\ell+2},\omega)
f_{\ell+2}\left(t,\bm{V}_{\ell+2}\right)\d\omega.
\end{align}
Therefore, using \eqref{BBGKYgain1} and \eqref{BBGKYlossK+2}, we obtain that $A_{1}$, i.e.~the first term on the right hand side of (\ref{BBGKY}), 
 gives the following contribution
\begin{align}\label{BBGKYloss}
A_{1}=& -\sum_{1 \leq i < j \leq \ell}\int_{\S} \B_{\Lambda}(v_i-v_j,\omega) f_\ell\left(t,\Vk\right)\d\omega \nonumber \\&
-\sum_{ i =1}^{\ell}\int_{\R^{3}} \d v_{\ell+1}\int_{\S} \B_{\Lambda}(v_i-v_{\ell+1},\omega)f_{\ell+1}\left(t,\Vku\right)\d\omega \nonumber \\&
-\frac{1}{2}\int_{\R^{3}} \d v_{\ell+1}\int_{\R^{3}} \d v_{\ell+2}\int_{\S} \B_{\Lambda}(v_{\ell+1}-v_{\ell+2},\omega)f_{\ell+2}\left(t,\bm{V}_{\ell+2}\right)\d\omega.
\end{align}
By analogous computations we get that $A_{2}$, i.e.~the second term on the right hand side of (\ref{BBGKY}), yields
\begin{align}\label{gainALPHAN}
A_{2}=&\, (1-\alpha) \sum_{1 \leq i < j \leq \ell}\int_{\S} \B_{\Lambda}(v_i-v_j,\omega) f_\ell\left(t,\Vk^{ij}\right)\d\omega \nonumber \\&
+(1-\alpha) \sum_{ i =1}^{\ell}\int_{\R^{3}} \d v_{\ell+1}\int_{\S} \B_{\Lambda}(v_i-v_{\ell+1},\omega)f_{\ell+1}\left(t,\Vku^{i, \ell+1}\right)\d\omega \nonumber \\&
+\frac{(1-\alpha)}{2}\int_{\R^{3}} \d v_{\ell+1}\int_{\R^{3}} \d v_{\ell+2}\int_{\S} \B_{\Lambda}(v_{\ell+1}-v_{\ell+2},\omega)f_{\ell+2}\left(t,\bm{V}_{\ell+2}^{\ell+1, \ell+2}\right)\d\omega.
\end{align}
We now look at the third term on the right hand side of (\ref{BBGKY}), i.e.~$A_{3}$. 
Due to the symmetry of $\P_{N}(t)$ with respect to any permutation of the indeces it holds
\begin{align}\label{gainN+2b}
A_{3} &= \, \frac{\alpha}{2}\sum_{N=\ell}^{\infty} \frac{1}{(N-\ell)!}\int_{\R^{3(N-\ell)}} \d  \bm{V}_{N, \ell}\int_{\R^{3}} \d v_{N+1}\int_{\R^{3}}\d v_{N+2}\nonumber \\&
\quad \times \int_{\S} \B_{\Lambda} (v_{N+1}-v_{N+2},\omega)\P_{N+2}\left(t,\bm{V}_{N+2}\right)\d\omega \nonumber \\&
=\frac{\alpha}{2}\int_{\R^{3}} \d v_{\ell+1}\int_{\R^{3}} \d v_{\ell+2}\int_{\S} \B_{\Lambda}(v_{\ell+1}-v_{\ell+2},\omega)f_{\ell+2}\left(t,\bm{V}_{\ell+2}\right)\d\omega.
\end{align}
Putting together \eqref{BBGKYloss}, \eqref{gainALPHAN} and \eqref{gainN+2b} we conclude that the hierarchy solved by the correlation functions $\{f_\ell(t)\}_\ell$ is
exactly \eqref{hierarchyNOTscaledb}.
\end{proof}

\section{BBGKY hierarchy for the rescaled correlation functions}
\label{rescaled}

\setcounter{equation}{0}
\setcounter{theo}{0}
 
\subsection{Weak formulation for the BBGKY hierarchy}

As discussed in the previous sections, we are interested in the limit of the finite particle system, when  the volume $\Lambda$ and the initial number of particles 
$N_0$ go to infinity in such a way that:
\begin{eqnarray}
\label{S5thermod}
\lim_{\Lambda, \,N_0 \to +\infty}\frac{N_0}{\Lambda}=\varrho_0 \in (0, +\infty),
\end{eqnarray}
or, equivalently, using the notation $\varepsilon={\Lambda}^{-1}$ as we did in the Introduction, 
$$N_{0}\,\varepsilon  \longrightarrow\, 1 \qquad \text{ as  $\:N_{0} \to \infty$ and $\varepsilon \to 0^{+}$}.$$
To investigate the limiting behavior of the particle system, we first recall that the rescaled correlation functions $\{f_\ell^{\varepsilon}(t)\}_{\ell=1}^{N_0}$ are given by (see \eqref{rescaledcorr}):
\begin{equation}\label{scalingK}
f_\ell^{\varepsilon}(t,\Vk)
:=\varepsilon^{\ell}\,f_{\ell}(t,\Vk)= \sum_{N=\ell}^\infty
 \frac{\varepsilon^{\ell}}{(N-\ell)!}\int_{\R^3} \d v_{\ell+1}\dots \int_{\R^3}  \d v_{N}\  \P_{N}(\V,t), \quad t \geq 0,
\end{equation}
for any $\Vk=(v_{1},\ldots,v_{\ell}) \in \R^{3\ell}$, $\ell=1,\ldots, N_0.$ 
 
Since the function $f_1$ is the number density function (cf.~Section  \ref{SMarginals}) the rescaled function $f_1^{\varepsilon}$ is the density function associated to the concentration of particles, i.e.  it corresponds to the number of particles per unit volume. 

We proved in Proposition \ref{prp:bbgky} that the correlation functions $\{f_{\ell}(t)\}_{k\geq 1}$ satisfy the BBGKY hierarchy \eqref{hierarchyNOTscaledb}. Analogously, the rescaled correlation functions $\{f_\ell^{\varepsilon}(t)\}_{\ell=1}^{N_0}$ satisfy a rescaled version of the BBGKY hierarchy \eqref{hierarchyNOTscaledb}. It will be convenient to write such a rescaled hierarchy in weak form -- identifying each $f_{\ell}^{\varepsilon}$ with a Radon measure on $\R^{3\ell}$. In order to do this, let us introduce the duality pairing as
$$\big\langle \mu_{\ell},\Phi_{\ell}\big\rangle_{\ell}=\int_{\R^{3\ell}}\Phi_{\ell}(\Vk)\mu_{\ell}(\d\Vk)$$
for any (signed) Radon measure $\mu_{\ell} \in \mathscr{M}(\R^{3\ell})$ and any test-function $\Phi_{\ell} \in \mathcal{C}_{b}(\R^{3\ell})$. When no ambiguity is possible, we simply denote the above pairing as $\big\langle \mu_{\ell},\Phi_{\ell}\big\rangle$ (omitting the last $\ell$-index). We then identify $f_{\ell}^{\varepsilon}(t)$ with a positive Radon measure and write 
$$f_{\ell}^{\varepsilon}(t,\d\Vk):=f_{\ell}^{\varepsilon}(t,\Vk)\d\Vk.$$  A direct consequence of Proposition \ref{prp:bbgky} is the following:

\begin{prp}[\textit{\textbf{Rescaled BBGKY Hierarchy in weak form}}]
For any $\ell \in \{1,\ldots,N_{0}\}$ and any symmetric test function $\Phi_{\ell} \in \mathcal{C}_{b}(\R^{3\ell})$, we have
\begin{multline}\label{BBGKY6}
\big\langle f_{\ell}^{\varepsilon}(t),\Phi_{\ell}\big\rangle_{\ell}=\big\langle f_{\ell}^{\varepsilon}(0),\Phi_{\ell}\big\rangle_{\ell}\\
\phantom{++++} +\varepsilon \sum_{1 \leq i < j \leq \ell}\int_{0}^{t}\d s\int_{\R^{3\ell}}f_{\ell}^{\varepsilon}(s,\d\Vk)\int_{\S} \B(v_i-v_j,\omega) \left[(1-\alpha)\Phi_\ell(\Vk^{i,j})-\Phi_\ell(\Vk)\right]\d\omega\\
+\sum_{i=1}^{\ell}\int_{0}^{t}\d s\int_{\R^{3(\ell+1)}} f_{\ell+1}^{\varepsilon}(s,\d\Vku)
\int_{\S} \B(v_i-v_{\ell+1},\omega) \left[(1-\alpha)\Phi_{\ell}(\widehat{\Vk}^{i,\ell+1})-\Phi_{\ell}(\Vk)\right]\d\omega
\end{multline}
where $\widehat{\Vk}^{i, \ell+1}=(v_1,\ldots,v_{i-1},v_i',v_{i+1},\ldots,v_{\ell})$ with $v_i'=v_i- [ (v_i-v_{\ell+1})\cdot \omega] \omega.$ 
\end{prp}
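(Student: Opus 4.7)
The plan is to derive \eqref{BBGKY6} directly from the strong-form hierarchy \eqref{hierarchyNOTscaledb} in three steps: rescale by $\varepsilon^{\ell}$ using $\B_{\Lambda}=\varepsilon\,\B$; pair with $\Phi_{\ell}$ and invoke an involutive change of velocity variables to transfer the post-collisional arguments from the correlation functions onto the test function; and finally integrate in time over $[0,t]$.

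For the first step, I would multiply \eqref{hierarchyNOTscaledb} by $\varepsilon^{\ell}$, insert $f_{\ell}^{\varepsilon}=\varepsilon^{\ell}f_{\ell}$, and replace $\B_{\Lambda}$ by $\varepsilon\,\B$ according to Assumption \ref{hyp2}. The bookkeeping of the $\varepsilon$-powers is what one expects from \eqref{BBGKY6}: in the binary-collision sum (indices $1\le i<j\le \ell$) the extra $\varepsilon$ coming from $\B_{\Lambda}$ survives as the prefactor $\varepsilon$ of \eqref{BBGKY6}; in the mixed term (indices $i\le \ell$, $j=\ell+1$) that same $\varepsilon$ combines with $\varepsilon^{\ell}f_{\ell+1}$ to produce precisely $\varepsilon^{\ell+1}f_{\ell+1}=f_{\ell+1}^{\varepsilon}$, so no residual power of $\varepsilon$ remains, in agreement with the statement.

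For the second step, I would test both sides against a symmetric $\Phi_{\ell}\in\mathcal{C}_{b}(\R^{3\ell})$. The loss-type contributions $-\Phi_{\ell}(\Vk)$ are already in the required form. For the binary gain term I would perform the substitution $\Vk\mapsto \Vk^{i,j}$ in $\R^{3\ell}$: for each fixed $\omega\in\S$ this map is an involution with unit Jacobian, and the elastic relations \eqref{eq:vivjprime} give $|v_{i}'-v_{j}'|=|v_{i}-v_{j}|$ and $(v_{i}'-v_{j}')\cdot\omega=-(v_{i}-v_{j})\cdot\omega$, so $\B(v_{i}-v_{j},\omega)$ is left invariant. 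This moves the post-collisional argument off $f_{\ell}^{\varepsilon}$ and onto $\Phi_{\ell}$. The analogous involution $\Vku\mapsto \Vku^{i,\ell+1}$ applied to the mixed gain term sends $f_{\ell+1}^{\varepsilon}(\Vku^{i,\ell+1})$ to $f_{\ell+1}^{\varepsilon}(\Vku)$ while turning $\Phi_{\ell}(\Vk)$ into $\Phi_{\ell}(\widehat{\Vk}^{i,\ell+1})$: because $\Phi_{\ell}$ is independent of $v_{\ell+1}$, only $v_{i}$ is effectively modified (the transformed $v_{\ell+1}'$ being immaterial, as it is integrated out), which explains the notation used in \eqref{BBGKY6}. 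Collecting gain and loss contributions for each pair then reconstructs the bracketed integrands in the statement.

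The third step -- integration in time over $[0,t]$ -- is legitimate thanks to the regularity delivered by Theorem \ref{th:wellposPS}: each $\P_{N}\in\mathcal{C}^{1}([0,\infty);L^{1}(\R^{3N}))$, and by \eqref{support} the series defining $f_{\ell}^{\varepsilon}$ is in fact a finite sum, hence $\mathcal{C}^{1}$ in $t$ with values in $L^{1}(\R^{3\ell})$. Absolute convergence of the iterated $\d s\,\d\Vku\,\d\omega$ integrals -- needed both for Fubini and to perform the involutive changes of variables term by term -- follows from the growth condition $\Sigma_{\B}(z)\le C_{\B}|z|^{\gamma}$ of Assumption \ref{hyp1}, the space-time bound \eqref{L10T}, and the finite-moment hypothesis \eqref{normalization}. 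I do not anticipate any substantive obstacle: the content of the proof is really the accounting of the $\varepsilon$-scaling and the invariance of $\B$ under the elastic involutions, and neither is more than routine.
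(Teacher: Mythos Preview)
Your proposal is correct and follows precisely the route of the paper: rescale \eqref{hierarchyNOTscaledb} via $f_{\ell}^{\varepsilon}=\varepsilon^{\ell}f_{\ell}$ and $\B_{\Lambda}=\varepsilon\B$ to obtain the rescaled strong hierarchy, then test against $\Phi_{\ell}$, use the elastic involutions to shift post-collisional arguments onto the test function, and integrate in time. The paper's proof is terser (it simply calls this step ``straightforward''), but your detailed $\varepsilon$-bookkeeping and the justification of the involutive changes of variables are exactly what is meant there.
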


\begin{proof} We first rewrite \eqref{hierarchyNOTscaledb} for the rescaled correlation functions $f_{\ell}^{\varepsilon}(t)$. This gives the hierarchy satisfied by $f_{\ell}^{\varepsilon}(t)$, i.e. 
 \begin{eqnarray}\label{hierarchyscaled}
&&\partial_t f_\ell^{\varepsilon}(t,\Vk)=\varepsilon \sum_{1 \leq i < j \leq \ell}\int_{\S} \B(v_i-v_j,\omega) \left[(1-\alpha)f_\ell^{\varepsilon}\left(t,\Vk^{i,j}\right)-f_\ell^{\varepsilon}\left(t,\Vk\right)\right]\d\omega\nonumber\\
&&\qquad \ \ +\sum_{i=1}^{\ell}\int_{\R^{3}} \d	v_{\ell+1}\int_{\S} \B(v_i-v_{\ell+1},\omega) \left[(1-\alpha)f_{\ell+1}^{\varepsilon}\left(t,\Vku^{i, \ell+1}\right)-f_{\ell+1}^{\varepsilon}\left(t,\Vku\right)\right]\d\omega.\nonumber\\
&&
\end{eqnarray}
Then, multiplying \eqref{hierarchyscaled} by a test function $\Phi_{\ell}$, integrating by parts and integrating in $t\in [0, T ]$ we obtain \eqref{BBGKY6} in a straightforward way.
\end{proof}

 \begin{rem} Notice that the above velocity vector $\widehat{\Vk}^{i, \ell+1}$ belongs to $\R^{3\ell}$ but, somehow, is deduced from $\Vku$ since it also depends on $v_{\ell+1}.$ 
\end{rem}
 \begin{rem} We will refer to the family of equations (\ref{BBGKY6}) as BBGKY hierarchy 
by  analogy with the system arising  in the framework of classical particle systems. \end{rem}
 By (\ref{inDATA}), (\ref{normalization}) and (\ref{scalingK})  it follows that, at time $t=0$, 
\begin{eqnarray}\label{scaling2}
f_\ell^{\varepsilon}(\Vk,0) =\varepsilon^{\ell}\frac{(N_0)!}{(N_0-\ell)!}  f_0^{\otimes \ell}(\Vk),\ \ \ \ell=1,2,\dots, N_0,
\end{eqnarray}
so that, for every $\ell\ge 1$ we have:
\begin{eqnarray}
\label{scaling6}
\lim_{\substack{\varepsilon \to 0,\,N_0 \to +\infty\\ \varepsilon\,{N_0}\to 1}} ||f_\ell^{\varepsilon}(0)- f_0^{\otimes \ell}||_{1}=0
\end{eqnarray}
where we recall that $\|\cdot\|_{1}$ is the total-variation norm in $\mathscr{M}(\R^{3\ell})$ (corresponding here to the $L^{1}(\R^{3\ell})$-norm). Moreover, assuming that suitable bounds hold for $\{f_{\ell}^{\varepsilon}(t)\}_{\ell}$, it would follow that \eqref{BBGKY6} behaves as
\begin{multline*}
\dfrac{\d}{\d t} \big\langle f_{\ell}^{\varepsilon}(t),\Phi_{\ell}\big\rangle_{\ell}=\mathcal{O}(\varepsilon) 
+ \sum_{i=1}^{\ell}\int_{\R^{3(\ell+1)}} f_{\ell+1}^{\varepsilon}(t,\d\Vku)\\
\int_{\S} \B(v_i-v_{\ell+1},\omega) \left[(1-\alpha)\Phi_{\ell}(\widehat{\Vk}^{i,\ell+1})-\Phi_{\ell}(\Vk)\right]\d\omega.
\end{multline*}
Therefore, we expect that any weak-$\star$ limit $\{g_{\ell}\}_{\ell}$ of $\{f_{\ell}^{\varepsilon}(t)\}_{\ell}$ satisfies
\begin{multline}\label{Eq:HBH}
\dfrac{\d}{\d t} \big\langle g_{\ell} (t),\Phi_{\ell}\big\rangle_{\ell}=\sum_{i=1}^{\ell}\int_{\R^{3(\ell+1)}} g_{\ell+1}(t,\d\Vku)\\
\int_{\S} \B(v_i-v_{\ell+1},\omega) \left[(1-\alpha)\Phi_{\ell}(\widehat{\Vk}^{i,\ell+1})-\Phi_{\ell}(\Vk)\right]\d\omega, \qquad \forall \ell \geq 1.\end{multline}
The above  system of equations will be referred to as the \emph{annihilated Boltzmann hierarchy}.  
It is worth to notice that the difficulty is that, in the limit $N_{0}\to \infty$ and $\varepsilon \to 0$, the above hierarchy is an \emph{infinite} hierarchy (while the rescaled BBGKY hierarchy is actually finite since $\ell \in \{1,\ldots,N_{0}\}$). 
In order to prove this convergence, we first need to establish suitable \emph{a priori} estimates  for the rescaled correlation functions $\{f_{\ell}^{\varepsilon}(t)\}_{\ell}.$

 \subsection{Energy estimates} 
 
 We prove here the following estimates:
 \begin{prp}\label{prp:energy} For any $\varepsilon > 0$ and any $\ell \geq 1$, we define the kinetic energy 
 $$\bm{E}_{\ell}^{\varepsilon}(t)=\int_{\R^{3\ell}}|v_{1}|^{2}f^{\varepsilon}_{\ell}(t,\d\Vk), \qquad t \geq 0$$
 and the mass 
 $$\bm{\varrho}_{\ell}^{\varepsilon}(t)=\int_{\R^{3\ell}}f^{\varepsilon}_{\ell}(t,\d\Vk), \qquad t \geq 0.$$
 Then, 
 $$\bm{E}_{\ell}^{\varepsilon}(t) \leq \bm{E}^{\varepsilon}_{\ell}(0) \qquad \text{ and } \qquad  \bm{\varrho}_{\ell}^{\varepsilon}(t) \leq \bm{\varrho}_{\ell}^{\varepsilon}(0)\qquad \forall t \geq 0.$$\end{prp}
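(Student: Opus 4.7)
The strategy is to test the weak formulation of the rescaled BBGKY hierarchy \eqref{BBGKY6} against the symmetric observables $\Phi_\ell(\Vk)=1$ (for the mass) and $\Phi_\ell(\Vk)=\sum_{i=1}^\ell |v_i|^2$ (for the energy), and to show that in both cases the right-hand side is non-positive. Since $\bm{E}^\varepsilon_\ell(t) = \tfrac{1}{\ell}\langle f^\varepsilon_\ell(t),\sum_i |v_i|^2\rangle_\ell$ by the symmetry of the correlation function, this is equivalent to what is claimed. The only mild technical nuisance is that these $\Phi_\ell$ are unbounded, so the use of \eqref{BBGKY6} must be justified by a truncation $\Phi_\ell^R(\Vk)=\chi_R(\Vk)\sum_i|v_i|^2$ where $\chi_R\to 1$ smoothly; thanks to Assumption \ref{hyp:indata} and the a priori bounds from Theorem \ref{th:wellposPS} (which yield in particular finite moments of $f^\varepsilon_\ell(t)$ up to order three), one may pass to the limit $R\to\infty$ in a routine way.

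\smallskip

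For the mass ($\Phi_\ell=1$), both brackets in \eqref{BBGKY6} collapse to the constant $-\alpha$, so
\[
\bm{\varrho}^\varepsilon_\ell(t)-\bm{\varrho}^\varepsilon_\ell(0)
= -\alpha\varepsilon\!\!\sum_{1\le i<j\le \ell}\!\!\int_0^t\!\!ds\!\int\!\! f^\varepsilon_\ell(s,d\Vk)\Sigma_\B(v_i-v_j)
  -\alpha\sum_{i=1}^\ell \int_0^t\!\!ds\!\int\!\! f^\varepsilon_{\ell+1}(s,d\Vku)\Sigma_\B(v_i-v_{\ell+1}),
\]
which is manifestly $\le 0$.

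\smallskip

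For the energy, I would analyse the two sums appearing in \eqref{BBGKY6} separately. In the first sum, for $1\le i<j\le \ell$, elasticity of the $(i,j)$-collision gives $\sum_{k=1}^\ell |v_k^{i,j}|^2=\sum_{k=1}^\ell |v_k|^2$, so the bracket equals $-\alpha\sum_{k=1}^\ell |v_k|^2$ and this contribution is $\le 0$. In the second sum, the bracket becomes
\[
(1-\alpha)|v_i'|^2 - |v_i|^2 - \alpha\!\!\sum_{\substack{k\le \ell\\ k\neq i}}\!\!|v_k|^2 .
\]
Here the final piece is trivially non-positive, while the remaining $(1-\alpha)|v_i'|^2-|v_i|^2$ is not of a definite sign. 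The key trick is to exploit the symmetry of $f^\varepsilon_{\ell+1}$ in the pair $(v_i,v_{\ell+1})$ together with the invariance of $\B$ under $z\mapsto -z$: symmetrizing the integral in this pair replaces the problematic expression by its half-sum with the one obtained from $v_i\leftrightarrow v_{\ell+1}$, giving
\[
\tfrac{1}{2}\bigl[(1-\alpha)(|v_i'|^2+|v_{\ell+1}'|^2)-|v_i|^2-|v_{\ell+1}|^2\bigr]
= -\tfrac{\alpha}{2}\bigl(|v_i|^2+|v_{\ell+1}|^2\bigr)\le 0
\]
by elastic energy conservation $|v_i'|^2+|v_{\ell+1}'|^2=|v_i|^2+|v_{\ell+1}|^2$. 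Thus every term on the right-hand side of \eqref{BBGKY6} with $\Phi_\ell=\sum_i|v_i|^2$ is non-positive, giving $\bm{E}^\varepsilon_\ell(t)\le \bm{E}^\varepsilon_\ell(0)$.

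\smallskip

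The only real obstacle is the already mentioned truncation argument needed to apply \eqref{BBGKY6} with unbounded test functions; the algebraic cancellations above are exact and purely consequences of elasticity and the symmetry of the measures $f^\varepsilon_\ell(t)$. Once the truncation is justified using the moment bound $\int|v|^3 f_0\,dv<\infty$ propagated to $f^\varepsilon_\ell$ via the explicit representation in Theorem \ref{th:wellposPS}, both estimates follow immediately.
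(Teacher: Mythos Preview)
Your proposal is correct and follows essentially the same approach as the paper: test \eqref{BBGKY6} against $\Phi_\ell=1$ and $\Phi_\ell=\sum_i|v_i|^2$ (the paper uses the normalized version $\E(\Vk)=\ell^{-1}\sum_i|v_i|^2$), invoke elasticity for the first sum, and exploit the symmetry of $f^\varepsilon_{\ell+1}$ in the pair $(v_i,v_{\ell+1})$ for the second sum, with a truncation argument to handle the unbounded test functions. Your treatment of the second sum via the full two-body conservation $|v_i'|^2+|v_{\ell+1}'|^2=|v_i|^2+|v_{\ell+1}|^2$ is a slightly cleaner packaging of the same symmetrization that the paper carries out componentwise through the auxiliary quantities $\mathcal{A}^{\pm}$.
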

 \begin{proof} In order to prove the kinetic energy estimate, we choose the following test-functions:
 $$\Phi_{\ell}(\Vk)=\E(\Vk)={\ell}^{-1}\sum_{j=1}^{\ell}|v_{j}|^{2},\qquad \Vk=(v_{1},\ldots,v_{\ell}) \in \R^{3\ell}$$
 in the weak formulation of the BBGKY hierarchy \eqref{BBGKY6}. Strictly speaking, $\E$ does not belong to $\mathcal{C}_{b}(\R^{3\ell})$ however, one can  consider the truncated energy 
 $$\E_{r}(\Vk)=\begin{cases} \E(\Vk) \qquad &\text{ if } \E(\Vk) \leq r\\
 r \qquad &\text{ if } \E(\Vk) > r\end{cases}$$
for $r > 0$  -- which belongs to $\mathcal{C}_{b}(\R^{3\ell})$ -- and show that the following estimates are \emph{uniform} with respect to the truncation parameter $r >0$. For simplicity, one proves the result directly for $\E$. We have:
\begin{equation}\begin{split}\label{eq:EqEn}
\dfrac{\d}{\d t}\langle f_{\ell}^{\varepsilon}(t),\E\rangle&=\varepsilon\sum_{1 \leq i < j \leq \ell}\int_{\R^{3\ell}}f_{\ell}^{\varepsilon}(t,\d\Vk)\int_{\S} \B(v_i-v_j,\omega) \left[(1-\alpha)\E(\Vk^{i,j})-\E(\Vk)\right]\d\omega\\
&+\sum_{i=1}^{\ell}\int_{\R^{3(k+1)}} f_{\ell+1}^{\varepsilon}(t,\d\Vku)\int_{\S} \B(v_i-v_{\ell+1},\omega) \left[(1-\alpha)\E(\widehat{\Vk}^{i,\ell+1})-\E(\Vk)\right]\d\omega .
\end{split}
\end{equation}
Notice that $\E(\Vk^{i,j})=\E(\Vk)$ while
\begin{equation*}\begin{split}
\E(\widehat{\Vk}^{i,\ell+1})&={\ell}^{-1}\left(\sum_{j \neq i}|v_{j}|^{2}+|v_{i}'|^{2}\right)\\
&=\E(\Vk)+{\ell}^{-1}\left((v_{i}-v_{\ell+1})\cdot \omega\right)^{2}
-2 {\ell}^{-1}\left((v_{i}-v_{\ell+1})\cdot \omega\right)\left(v_{i}\cdot \omega\right).
\end{split}\end{equation*}
Hence, we can rewrite \eqref{eq:EqEn} as
\begin{equation}\begin{split}\label{eq:EE}
\dfrac{\d}{\d t}\langle f_{\ell}^{\varepsilon}(t),\E\rangle&=-\alpha\,\varepsilon\sum_{1 \leq i < j \leq \ell}\int_{\R^{3\ell}}\E(\Vk)f_{\ell}^{\varepsilon}(t,\d\Vk)\int_{\S} \B(v_i-v_j,\omega) \d\omega\\
&\phantom{++}-\alpha\,\sum_{i=1}^{\ell}\int_{\R^{3(\ell+1)}} \E(\Vk) f_{\ell+1}^{\varepsilon}(t,\d\Vku)\int_{\S} \B(v_i-v_{\ell+1},\omega) \d\omega\\
&\phantom{+}+(1-\alpha){\ell}^{-1}\sum_{i=1}^{\ell}\int_{\R^{3(\ell+1)}} \left[\mathcal{A}^{+}(v_{i},v_{\ell+1})-\mathcal{A}^{-}(v_{i},v_{\ell+1})\right]\,f_{\ell+1}^{\varepsilon}(t,\d\Vku)
\end{split}
\end{equation}
with 
\begin{align*}
&\mathcal{A}^{+}(v,v_{*})=\int_{\S} \B(v-v_{*},\omega)\left((v-v_{*})\cdot \omega\right)^{2}\d\omega,\\&
 \mathcal{A}^{-}(v,v_{*})=2\int_{\S} \B(v-v_{*},\omega)\left((v-v_{*})\cdot \omega\right)\left(v\cdot \omega\right)\d\omega,
 \end{align*}
 for any $(v,v_{*}) \in \R^{6}$.  
Writing $\left[(v-v_{*})\cdot \omega\right]^{2}=\left[(v-v_{*})\cdot \omega\right]\left[v\cdot \omega - v_{\star}\cdot \omega\right]$, it follows  
\begin{multline*}
\int_{\R^{3(\ell+1)}}\mathcal{A}^{+}(v_{i},v_{\ell+1})\,f_{\ell+1}^{\varepsilon}(t,\d\Vku)\\
=\int_{\S}\d\omega\int_{\R^{3(\ell+1)}}\B(v_{i}-v_{\ell+1},\omega)\left[(v_{i}-v_{\ell+1})\cdot \omega\right]\,\left[v_{i} \cdot \omega\right]\,f_{\ell+1}^{\varepsilon}(t,\d\Vku)\\
-\int_{\S}\d\omega\int_{\R^{3(\ell+1)}}\B(v_{i}-v_{\ell+1},\omega)\left[(v_{i}-v_{\ell+1})\cdot \omega\right]\,\left[v_{\ell+1}\cdot\omega\right]f_{\ell+1}^{\varepsilon}(t,\d\Vku)
\end{multline*}
i.e.
\begin{multline*}
\int_{\R^{3(\ell+1)}}\mathcal{A}^{+}(v_{i},v_{\ell+1})\,f_{\ell+1}^{\varepsilon}(t,\d\Vku)=\frac{1}{2}\int_{\R^{3(\ell+1)}}\mathcal{A}^{-}(v_{i},v_{\ell+1})\,f_{\ell+1}^{\varepsilon}(t,\d\Vku)\\
-\int_{\S}\d\omega\int_{\R^{3(\ell+1)}}\B(v_{i}-v_{\ell+1},\omega)\left[(v_{i}-v_{\ell+1})\cdot \omega\right]\left[v_{\ell+1}\cdot \omega\right]\,f_{\ell+1}^{\varepsilon}(t,\d\Vku).\end{multline*}
Now, since $f_{\ell+1}^{\varepsilon}(t)$ is symmetric, we can exchange the role of $v_{i}$ and $v_{\ell+1}$ to get
\begin{multline*}
\int_{\S}\d\omega\int_{\R^{3(\ell+1)}}\B(v_{i}-v_{\ell+1},\omega)\left[(v_{i}-v_{\ell+1})\cdot \omega\right]\left[v_{\ell+1}\cdot \omega\right]\,f_{\ell+1}^{\varepsilon}(t,\d\Vku)\\
=-\int_{\S}\d\omega\int_{\R^{3(\ell+1)}}\B(v_{i}-v_{\ell+1},\omega)\left[(v_{i}-v_{\ell+1})\cdot\omega\right]\,\left[v_{i}\cdot\omega\right]\,f_{\ell+1}^{\varepsilon}(t,\d\Vku)\\
=-\frac{1}{2}\int_{\R^{3(\ell+1)}}\mathcal{A}^{-}(v_{i},v_{\ell+1})f_{\ell+1}^{\varepsilon}(t,\d\Vku)\end{multline*}
i.e.
$$\int_{\R^{3(\ell+1)}}\mathcal{A}^{+}(v_{i},v_{\ell+1})f_{\ell+1}^{\varepsilon}(t,\d\Vku)=\int_{\R^{3(\ell+1)}}\mathcal{A}^{-}(v_{i},v_{\ell+1})f_{\ell+1}^{\varepsilon}(t,\d\Vku).$$
Therefore, \eqref{eq:EE} becomes
\begin{equation*}\begin{split}\label{eq:EEbis}
\dfrac{\d}{\d t}\langle f_{\ell}^{\varepsilon}(t),\E\rangle&=-\alpha\,\varepsilon\sum_{1 \leq i < j \leq \ell}\int_{\R^{3\ell}}\E(\Vk)f_{\ell}^{\varepsilon}(t,\d\Vk)\int_{\S} \B(v_i-v_j,\omega)\d\omega\\
&\phantom{++++}-\alpha\,\sum_{i=1}^{\ell}\int_{\R^{3(\ell+1)}} \E(\Vk) f_{\ell+1}^{\varepsilon}(t,\d\Vku)\int_{\S} \B(v_i-v_{\ell+1},\omega) \d\omega\end{split}\end{equation*}
so that
$$\dfrac{\d}{\d t}\langle f_{\ell}^{\varepsilon}(t),\E\rangle \leq 0$$
which proves the result since $\langle f_{\ell}^{\varepsilon}(t),\E\rangle=\bm{E}_{\ell}^{\varepsilon}(t)$ due to the symmetry of $f_{\ell}^{\varepsilon}(t).$ 
Using the same argument above, it is possible to prove that 
$$\dfrac{\d}{\d t}\bm{\varrho}_{\ell}^{\varepsilon}(t) \leq 0$$
by picking the test-function $\Phi_{\ell}(\Vk)=1$ for all $\Vk$ in \eqref{BBGKY6}.
 \end{proof}
Notice that, from our choice of the initial datum $\P_{N}(0)$ (see \eqref{scaling2}) one has
$$\bm{E}_{\ell}^{\varepsilon}(0)=\varepsilon^{\ell}\frac{(N_0)!}{(N_0-\ell)!} \int_{\R^{3\ell}}|v_{1}|^{2} f_0^{\otimes \ell}(\Vk)\d\Vk=\varepsilon^{\ell}\frac{(N_0)!}{(N_0-\ell)!}E_{0}$$
while
$$\bm{\varrho}_{\ell}^{\varepsilon}(0)=\varepsilon^{\ell}\frac{(N_0)!}{(N_0-\ell)!}.$$
In particular, 
$$\lim_{\substack{\varepsilon \to 0,\,N_0 \to +\infty\\ \varepsilon\,{N_0}\to \varrho_0}}\left(\begin{array}{c}\bm{E}_{\ell}^{\varepsilon}(0) \\\bm{\varrho}_{\ell}^{\varepsilon}(0)\end{array}\right) =\left(\begin{array}{c}   E_{0} \\  1\end{array}\right).$$
This directly yields the following
\begin{cor}\label{cor:energy}  For any $\varepsilon >0$ and $N_{0} \in \N$ even, let $\{f_{\ell}^{\varepsilon}(t)\}_{\ell=1,\ldots,N_{0}}$ be the rescaled correlation functions associated to the unique solution $\{\P_{N}(t)\}_{N}$ to \eqref{PNVN} with initial datum \eqref{inDATA}. Then, 
$$\sup_{t \geq 0}\bm{E}_{\ell}^{\varepsilon}(t)=\sup_{t \geq 0}\int_{\R^{3\ell}}\E(\Vk)f_{\ell}^{\varepsilon}(t,\d\Vku) \leq (N_{0}\varepsilon)^{\ell}E_{0} \qquad \forall \varepsilon >0, N_{0} \in \N, \qquad \ell \geq 1$$
and
$$\sup_{t \geq 0}\bm{\varrho}_{\ell}^{\varepsilon}(t)=\sup_{t \geq 0}\int_{\R^{3\ell}}f_{\ell}^{\varepsilon}(t,\d\Vku) \leq (N_{0}\varepsilon)^{\ell} \qquad \forall \varepsilon >0, N_{0} \in \N, \qquad \ell \geq 1.$$
\end{cor}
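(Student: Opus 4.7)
The plan is to combine Proposition \ref{prp:energy} with the explicit form of the initial correlation functions, and then use a simple bound on the falling factorial. Since $f_{\ell}^{\varepsilon}(t)$ is symmetric under permutations of the velocities, the choice of test-function $\E(\bm{V}_{\ell})=\ell^{-1}\sum_{j=1}^{\ell}|v_{j}|^{2}$ produces the same quantity as $|v_{1}|^{2}$ when integrated against $f_{\ell}^{\varepsilon}(t)$, so indeed
\[
\int_{\R^{3\ell}}\E(\bm{V}_{\ell})\,f_{\ell}^{\varepsilon}(t,\d\bm{V}_{\ell})=\bm{E}_{\ell}^{\varepsilon}(t),
\]
which reconciles the quantity appearing in the statement with the one controlled by Proposition \ref{prp:energy}.

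First I would invoke Proposition \ref{prp:energy} to obtain the monotonicity bounds
\[
\bm{E}_{\ell}^{\varepsilon}(t)\leq \bm{E}_{\ell}^{\varepsilon}(0),\qquad \bm{\varrho}_{\ell}^{\varepsilon}(t)\leq \bm{\varrho}_{\ell}^{\varepsilon}(0),\qquad t\geq 0.
\]
Next, I would compute the initial quantities explicitly from the tensorized initial data \eqref{inDATA} and the scaling \eqref{scaling2}, giving
\[
\bm{E}_{\ell}^{\varepsilon}(0)=\varepsilon^{\ell}\frac{N_{0}!}{(N_{0}-\ell)!}\int_{\R^{3}}|v|^{2}f_{0}(v)\d v=\varepsilon^{\ell}\frac{N_{0}!}{(N_{0}-\ell)!}E_{0},
\]
and analogously $\bm{\varrho}_{\ell}^{\varepsilon}(0)=\varepsilon^{\ell}\tfrac{N_{0}!}{(N_{0}-\ell)!}$, using the mass normalization from \eqref{normalization}.

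Finally, I would close the estimates by using the elementary inequality for the falling factorial,
\[
\frac{N_{0}!}{(N_{0}-\ell)!}=N_{0}(N_{0}-1)\cdots(N_{0}-\ell+1)\leq N_{0}^{\ell},
\]
valid for every $\ell\in\{1,\dots,N_{0}\}$, which combines with the previous equalities to yield both bounds $(N_{0}\varepsilon)^{\ell}E_{0}$ and $(N_{0}\varepsilon)^{\ell}$. There is no real obstacle here: the statement is a direct corollary, where the only substantive input is the energy and mass monotonicity of Proposition \ref{prp:energy}, and the remaining step is purely algebraic. The corollary is designed precisely so that these uniform-in-$t$ bounds will be available along the thermodynamic limit $N_{0}\varepsilon\to 1$, providing the tightness-type control needed later for the weak-$\star$ compactness of $\{f_{\ell}^{\varepsilon}(t)\}$.
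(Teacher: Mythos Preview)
Your proposal is correct and follows essentially the same approach as the paper: invoke Proposition~\ref{prp:energy} for the monotonicity $\bm{E}_{\ell}^{\varepsilon}(t)\le\bm{E}_{\ell}^{\varepsilon}(0)$ and $\bm{\varrho}_{\ell}^{\varepsilon}(t)\le\bm{\varrho}_{\ell}^{\varepsilon}(0)$, compute the initial values from \eqref{scaling2}, and bound the falling factorial by $N_{0}^{\ell}$. The paper presents the computation of the initial values just before the corollary and then says ``This directly yields the following''; your additional remark on the symmetry identifying $\langle f_{\ell}^{\varepsilon}(t),\E\rangle$ with $\bm{E}_{\ell}^{\varepsilon}(t)$ is exactly the observation made at the end of the proof of Proposition~\ref{prp:energy}.
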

 
\subsection{Convergence result} On the basis of the above uniform estimate, we deduce from classical convergence Theorem (see for instance \cite[Theorems 1.40 \& 1.41, p. 65-66]{evans}) the following convergence result:
\begin{prp}\label{prp:conv} For any $\varepsilon >0$ and $N_{0} \in \N$ even, let $\{f_{\ell}^{\varepsilon}(t)\}_{\ell=1,\ldots,N_{0}}$ be the rescaled correlation functions associated to the unique solution $\{\P_{N}(t)\}_{N}$ to \eqref{PNVN} with initial datum \eqref{inDATA}. Then, for any $\ell \geq 1$ and any $t \geq 0$, there exists some positive measure $\bm{\mu}_{\ell}(t)\in \M(\R^{3\ell})$ and a subsequence (still denoted $\{f_{\ell}^{\varepsilon}(t)\}_{\varepsilon >0,N_{0} \in \N}$) such that
\begin{equation}\label{eq:converg}\lim_{\substack{\varepsilon \to 0,\,N_0 \to +\infty\\ \varepsilon\,{N_0}\to 1}}\big \langle f_{\ell}^{\varepsilon}(t), \Phi_{\ell}\big\rangle_{\ell}
=\big \langle \bm{\mu}_{\ell}(t), \Phi_{\ell} \rangle_{\ell} \qquad \forall \Phi_{\ell} \in \mathcal{C}_{0}(\R^{3\ell}).\end{equation}
Moreover, the mapping $t \geq 0 \mapsto \bm{\mu}_{\ell}(t)$ belongs to $\mathcal{C}([0,T),\mathscr{M}(\R^{3\ell}))$ for any $T >0$ and 
\begin{equation*} 
\sup_{t \geq 0}\int_{\R^{3\ell}}\E(\Vk)\bm{\mu}_{\ell}(t,\d\Vk) \leq E_{0}\ , \qquad \sup_{t \geq 0}\int_{\R^{3\ell}}\bm{\mu}_{\ell}(t,\d\Vk) \leq 1.\end{equation*}
\end{prp}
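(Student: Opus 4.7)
The plan is to combine pointwise-in-$t$ weak-$\star$ compactness in $\mathscr{M}(\R^{3\ell})$ with a time-equicontinuity estimate derived from the BBGKY hierarchy \eqref{BBGKY6}, and to conclude by an Arzel\`a--Ascoli-type diagonal extraction; the uniform mass and energy bounds from Corollary \ref{cor:energy} are the sole ingredients feeding both the weak-$\star$ bound and the tightness. From Corollary \ref{cor:energy}, $\|f_\ell^\varepsilon(t)\|_1 = \bm{\varrho}_\ell^\varepsilon(t) \leq (N_0\varepsilon)^\ell \leq 2^\ell$ in the scaling regime $N_0\varepsilon \to 1$, so $\{f_\ell^\varepsilon(t)\}_{\varepsilon,N_0}$ is bounded in $\mathscr{M}(\R^{3\ell}) = \mathcal{C}_0(\R^{3\ell})^\ast$. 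Since $\mathcal{C}_0(\R^{3\ell})$ is separable, Banach--Alaoglu yields weak-$\star$ sequential precompactness at each fixed $t$, and the energy bound $\bm{E}_\ell^\varepsilon(t) \leq (N_0\varepsilon)^\ell E_0$ provides tightness (Chebyshev: $\int_{\{|\Vk|>R\}} f_\ell^\varepsilon(t,\d\Vk) \leq \ell\, E_0/R^2$), so no mass escapes to infinity and any weak-$\star$ limit is a positive Radon measure of total mass at most $1$.

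Next, I would establish uniform-in-$\varepsilon$ Lipschitz continuity of $t \mapsto \langle f_\ell^\varepsilon(t), \Phi_\ell\rangle_\ell$ for each $\Phi_\ell \in \mathcal{C}_0(\R^{3\ell})$ directly from \eqref{BBGKY6}: using Assumption \ref{hyp1}, $\int_\S \B(v-v_\ast,\omega)\d\omega \leq C_\B|v-v_\ast|^\gamma \leq C_\B(1+|v|^2+|v_\ast|^2)$ since $\gamma \in [0,1]$; bounding the test-function differences $|(1-\alpha)\Phi_\ell(\Vk^{i,j}) - \Phi_\ell(\Vk)|$ and $|(1-\alpha)\Phi_\ell(\widehat{\Vk}^{i,\ell+1}) - \Phi_\ell(\Vk)|$ by $2\|\Phi_\ell\|_\infty$ and invoking the uniform mass and energy bounds for \emph{both} $f_\ell^\varepsilon$ and $f_{\ell+1}^\varepsilon$ yields
\begin{equation*}
|\langle f_\ell^\varepsilon(t) - f_\ell^\varepsilon(s), \Phi_\ell\rangle_\ell| \leq C(\ell)\|\Phi_\ell\|_\infty (1+E_0)(t-s)
\end{equation*}
uniformly in $\varepsilon, N_0$ in the relevant regime. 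A standard diagonal extraction over a countable dense family $\mathcal{D}_\ell \subset \mathcal{C}_0(\R^{3\ell})$ and a countable dense set of times then produces a subsequence along which $\langle f_\ell^\varepsilon(t), \Phi\rangle_\ell$ converges for every $\Phi \in \mathcal{D}_\ell$ and every rational $t$; Lipschitz continuity extends convergence to every $t \geq 0$, density plus tightness to every $\Phi \in \mathcal{C}_0(\R^{3\ell})$, and a further diagonalization in $\ell$ produces a single subsequence valid for all $\ell \geq 1$. The limit $\bm{\mu}_\ell(t)$ is a positive Radon measure, and its Lipschitz dependence on $t$ gives $\bm{\mu}_\ell \in \mathcal{C}([0,T),\mathscr{M}(\R^{3\ell}))$ in the weak-$\star$ topology. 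The quantitative bounds $\int \E\,\d\bm{\mu}_\ell \leq E_0$ and $\int \d\bm{\mu}_\ell \leq 1$ follow by approximating $\E$ and the constant $1$ by $\mathcal{C}_0$-truncations such as $\min(\E,r)\chi_R$ with $\chi_R \in \mathcal{C}_c(\R^{3\ell})$, $\chi_R \nearrow 1$, passing to the weak-$\star$ limit at fixed $r,R$, and then letting $r,R \to \infty$ by monotone convergence.

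The \textbf{main obstacle} is the time-equicontinuity estimate: the collision kernel $\B$ is unbounded in the relative velocity, and the term featuring $\widehat{\Vk}^{i,\ell+1}$ in \eqref{BBGKY6} enjoys no cancellation analogous to the symmetrization that made the energy estimate work in the proof of Proposition \ref{prp:energy}. The uniform energy bound at the \emph{next} level $\ell+1$ (not merely at level $\ell$) from Corollary \ref{cor:energy} is therefore indispensable to close the Lipschitz estimate; it is precisely here that the propagation of the moments to every level of the hierarchy, rather than only to $f_\ell^\varepsilon$, becomes essential. A secondary but bookkeeping-heavy point is the simultaneous diagonal extraction over $t \in [0,\infty)$ and $\ell \in \N$ needed to produce a single subsequence valid throughout the statement.
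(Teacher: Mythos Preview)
Your proof is correct and follows the same core route as the paper: the weak-$\star$ compactness at each fixed $t$ is deduced from the uniform mass and energy bounds of Corollary \ref{cor:energy}, exactly as the paper does (the paper simply invokes \cite[Theorem 1.41]{evans} after noting that $\sup_{N_0\varepsilon\le 2}\sup_{t\ge0}f_\ell^\varepsilon(t,\mathcal{K})<\infty$ for every compact $\mathcal{K}$). Where your argument differs is that you go further: the paper's proof is extremely terse and does not explicitly justify the time-continuity claim $\bm{\mu}_\ell\in\mathcal{C}([0,T),\mathscr{M}(\R^{3\ell}))$, nor the passage of the mass and energy bounds to the limit, nor the existence of a common subsequence for all $\ell$ (the last point is relegated to a remark after the proof). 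Your Lipschitz-in-$t$ estimate from \eqref{BBGKY6}, the Arzel\`a--Ascoli diagonal extraction, and the truncation argument for the moment bounds fill precisely these gaps, and your observation that the energy bound at level $\ell+1$ is what closes the equicontinuity estimate is the right one---indeed the paper uses this same mechanism later, in the proof of Theorem \ref{main}, when bounding $\mathcal{G}_\ell^\varepsilon$ and $\mathcal{Z}_\ell^\varepsilon$.
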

\begin{rem} In other words, up to extracting a subsequence, for any $\ell \geq 1$, the family 
of Radon measure $\{f_{\ell}^{\varepsilon}(t)\}_{\varepsilon >0,N_{0} \in \N}$ converges weakly-$\star$ towards some Radon measure $\bm{\mu}_{\ell}(t)$ as $\varepsilon \to 0,\,N_0 \to +\infty$ with $N_{0}\varepsilon \to 1.$ Moreover, the (weak) limit $\bm{\mu}_{\ell}(t)$ has mass and energy which remain uniformly bounded in time. \end{rem}

\begin{rem}
We notice that, \emph{a priori}, the choice of the converging subsequence depend on $\ell \geq 1$ but, using a diagonal argument, one can construct a \emph{common} subsequence such that the convergence \eqref{eq:converg} hold true \emph{for any} $\ell \geq 1.$
\end{rem}

\begin{proof} The uniform bounds obtained in Corollary \ref{cor:energy} imply in a straightforward way that, for any $\ell \geq 1$ and any compact set $\mathcal{K}\subset \R^{3\ell}$, 
$$\sup_{N_{0}\varepsilon \leq 2}\sup_{t\geq0}f_{\ell}^{\varepsilon}(t,\mathcal{K}) < \infty$$
and we conclude thanks to the well-known weak compactness criterium for measures \cite[Theorem 1.41]{evans}.\end{proof}

\medskip

Actually, we shall need to enlarge the set of test-functions for which the convergence \eqref{eq:converg} holds true. More precisely, we introduce for all $\ell \in \N$ the set
$$\mathcal{W}_{\ell}^{s}=\left\{\Phi_{\ell} \in \C(\R^{3\ell})\;\;\;;\;\left(\E(\Vk)+1\right)^{-s}\Phi_{\ell}\in L^{\infty}(\R^{3\ell})\right\}, \qquad s \in (0,1).$$
Then, the following result holds.
\begin{cor}\label{cor:converg2}
For any $\varepsilon >0$ and $N_{0} \in \N$ even, let $\{f_{\ell}^{\varepsilon}(t)\}_{\ell=1,\ldots,N_{0}}$ be the rescaled correlation functions associated to the unique solution $\{\P_{N}(t)\}_{N}$ to \eqref{PNVN} with initial datum \eqref{inDATA}. Then, for any $\ell \geq 1$ and any $t \geq 0$, 
the positive measure $\bm{\mu}_{\ell}(t)\in \M(\R^{3\ell})$ obtained in Proposition \ref{prp:conv} satisfies
\begin{equation}\label{eq:converg2}\lim_{\substack{\varepsilon \to 0,\,N_0 \to +\infty\\ \varepsilon\,{N_0}\to 1}}\big \langle f_{\ell}^{\varepsilon}(t), \Phi_{\ell}\big\rangle_{\ell}
=\big \langle \bm{\mu}_{\ell}(t), \Phi_{\ell} \rangle_{\ell} \qquad \forall \Phi_{\ell} \in \mathcal{W}_{\ell}^{s}.\end{equation}
Moreover, the convergence is uniform with respect to $t$ in any compact set.
\end{cor}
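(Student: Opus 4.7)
My plan is a truncation argument: I decompose $\Phi_\ell \in \mathcal{W}_\ell^s$ into a compactly supported part, which is handled by Proposition \ref{prp:conv}, and an energy tail, which is controlled by the moment bounds of Corollary \ref{cor:energy}; then I upgrade the pointwise-in-$t$ convergence to uniform convergence on compact intervals via an equicontinuity argument powered by the BBGKY hierarchy.

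First I would fix a smooth cutoff $\chi \in \mathcal{C}_c^\infty(\R)$ with $\chi \equiv 1$ on $[0,1]$ and $\chi \equiv 0$ on $[2,\infty)$, and set $\chi_R(\Vk) = \chi(\E(\Vk)/R)$ for $R > 0$. Then $\Phi_\ell \chi_R \in \mathcal{C}_c(\R^{3\ell}) \subset \mathcal{C}_0(\R^{3\ell})$, so Proposition \ref{prp:conv} already gives
$$\lim_{\substack{\varepsilon \to 0,\,N_0 \to +\infty\\ N_0\varepsilon\to 1}} \big\langle f_\ell^\varepsilon(t), \Phi_\ell \chi_R\big\rangle_\ell = \big\langle \bm{\mu}_\ell(t), \Phi_\ell \chi_R\big\rangle_\ell$$
for each fixed $t \geq 0$. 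Next I would control the tail $\Phi_\ell(1-\chi_R)$: since $s-1 < 0$, on $\{\E(\Vk) > R\}$ one has $(\E(\Vk)+1)^s \leq (R+1)^{s-1}(\E(\Vk)+1)$, hence by Corollary \ref{cor:energy} and the definition of $\mathcal{W}_\ell^s$,
$$|\langle f_\ell^\varepsilon(t), \Phi_\ell(1-\chi_R)\rangle_\ell| \leq \|\Phi_\ell\|_{\mathcal{W}_\ell^s}(R+1)^{s-1}\bigl(\bm{E}_\ell^\varepsilon(t) + \bm{\varrho}_\ell^\varepsilon(t)\bigr) \leq C(\ell)\,\|\Phi_\ell\|_{\mathcal{W}_\ell^s}(R+1)^{s-1},$$
uniformly in $t$ and in all parameters with $N_0\varepsilon \leq 2$. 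The identical bound holds for $\bm{\mu}_\ell(t)$ thanks to the mass/energy inequality for the limit measure stated in Proposition \ref{prp:conv}. Given $\eta > 0$, I would choose $R$ large so that both tails fall below $\eta/3$, and then invoke the convergence on the truncated part to conclude $|\langle f_\ell^\varepsilon(t) - \bm{\mu}_\ell(t), \Phi_\ell\rangle_\ell| < \eta$ for $\varepsilon$ small enough.

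The main obstacle, and the remaining work, is the uniformity of this convergence for $t$ in a compact set $[0,T]$. The plan is to establish equicontinuity of the family $\{t \mapsto \langle f_\ell^\varepsilon(t), \Phi_\ell \chi_R\rangle_\ell\}_{\varepsilon,N_0}$ directly from the weak BBGKY hierarchy \eqref{BBGKY6}. Each integrand on the right-hand side of \eqref{BBGKY6} is pointwise dominated by a constant multiple of $\|\Phi_\ell \chi_R\|_\infty\,\Sigma_\B(v_i-v_j)$ or $\|\Phi_\ell \chi_R\|_\infty\,\Sigma_\B(v_i-v_{\ell+1})$; Assumption \ref{hyp1} with $\gamma \in [0,1]$, combined with $|v-v_*|^\gamma \leq 1+|v|^2+|v_*|^2$ and the uniform bounds $\bm{E}_{m}^\varepsilon(t) + \bm{\varrho}_m^\varepsilon(t) \leq C$ from Corollary \ref{cor:energy} (for $m = \ell, \ell+1$), controls these integrals uniformly in $t$, $\varepsilon$ and $N_0$; the prefactor $\varepsilon$ in the recollision term of \eqref{BBGKY6} is only favorable. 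This yields a Lipschitz estimate
$$|\langle f_\ell^\varepsilon(t_2) - f_\ell^\varepsilon(t_1), \Phi_\ell \chi_R\rangle_\ell| \leq C(R, \Phi_\ell, \ell)|t_2 - t_1|, \qquad 0 \leq t_1 \leq t_2 \leq T,$$
with $C(R,\Phi_\ell,\ell)$ independent of $\varepsilon$ and $N_0$. Together with the pointwise convergence from Proposition \ref{prp:conv}, a standard Arzelà--Ascoli extraction upgrades this to uniform convergence on $[0,T]$ of $\langle f_\ell^\varepsilon(\cdot), \Phi_\ell \chi_R\rangle_\ell$, and along the way confirms continuity of the limit $t \mapsto \langle \bm{\mu}_\ell(t), \Phi_\ell \chi_R\rangle_\ell$. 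Combining this with the uniform-in-$t$ tail bound from the previous paragraph transfers the uniform convergence to the full test function $\Phi_\ell \in \mathcal{W}_\ell^s$, concluding the proof.
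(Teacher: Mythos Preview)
Your proof is correct and follows a genuinely different route from the paper's. The paper multiplies the measures by the weight $(1+\E)^{s}$, observes that the resulting family $\nu_\ell^\varepsilon(t)=(1+\E)^{s}f_\ell^\varepsilon(t)$ has uniformly bounded $(1+\E)^{1-s}$-moments, deduces tightness, and then invokes Prokhorov's theorem to extract a subsequence converging against $\C_b$ test functions; undoing the weight and identifying the limit with $\bm{\mu}_\ell(t)$ by uniqueness of weak-$\star$ limits finishes the argument. You instead truncate the test function and estimate the tail directly via the same moment margin $s<1$. Both arguments ultimately exploit the same input from Corollary~\ref{cor:energy}, but your approach is more elementary (no appeal to Prokhorov) and, crucially, keeps every estimate explicit and uniform in $t$. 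This pays off in the second half of your proof: the paper's written argument does not actually address the uniform-in-$t$ convergence asserted in the statement, whereas your equicontinuity argument---bounding the time increments directly from the weak BBGKY hierarchy \eqref{BBGKY6} using the mass and energy bounds for $f_\ell^\varepsilon$ and $f_{\ell+1}^\varepsilon$---supplies this cleanly and at the same time recovers the continuity of $t\mapsto\langle\bm{\mu}_\ell(t),\Phi_\ell\rangle_\ell$.
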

\begin{proof}
Given $s \in (0,1)$, consider for any $\varepsilon >0$ and $N_{0} \in \N$ the measure
$$\nu_{\ell}^{\varepsilon}(t)=(1+\E(\Vk))^{s}f_{\ell}^{\varepsilon}(t)$$
one deduces from the bounds in Corollary \ref{cor:energy} that
$$\sup_{t \geq 0}\int_{\R^{3\ell}}(1+\E(\Vk))^{1-s}\nu_{\ell}^{\varepsilon}(t,\d\Vk) \leq (N_{0}\varepsilon)^{\ell}(1+E_{0}),$$
i.e.
$$\sup_{t\geq 0}\sup_{\varepsilon{N_0}\leq 2}\int_{\R^{3\ell}}(1+\E(\Vk))^{1-s}\nu_{\ell}^{\varepsilon}(t,\d\Vk) \leq 2^{\ell}(1+E_{0}).$$
Given $R >0$, the set $\mathcal{K}_{R}=\{\Vk \in \R^{3\ell}\;;\,\E(\Vk) \leq R\}$ is a compact subset of $\R^{3\ell}$ and 
$$\sup_{t \geq 0}\sup_{\varepsilon{N_0}\leq 2}\nu_{\ell}^{\varepsilon}(t,\R^{2\ell}\setminus \mathcal{K}_{R}) \leq 2^{\ell}(1+E_{0})(1+R)^{s-1}.$$
Since $s-1 < 0$ and $R >0$ can be chosen arbitrarily large, one sees that the family of measures $\left\{\nu_{\ell}^{\varepsilon}(t)\right\}_{N_{0}\varepsilon \leq 2}$ is tight for any $t \geq 0$. According to Prokhorov's compactness Theorem (see \cite[Theorem 1.7.6, p. 41]{kolok1}), it is relatively compact for the \emph{weak topology} of $\mathscr{M}(\R^{3\ell})$. Therefore, there exists  a subsequence, still denoted $\left\{\nu_{\ell}^{\varepsilon}(t)\right\}_{N_{0}\varepsilon \leq 2}$, and a measure $\overline{\mu}_{\ell}(t)$ such that
\begin{equation*}\label{eq:converg}\lim_{\substack{\varepsilon \to 0,\,N_0 \to +\infty\\ \varepsilon\,{N_0}\to 1}}\big \langle \nu_{\ell}^{\varepsilon}(t), \Psi_{\ell}\big\rangle_{\ell}
=\big \langle \overline{\mu}_{\ell}(t), \Psi_{\ell} \rangle_{\ell} \qquad \forall \Psi_{\ell} \in \mathcal{C}_{b}(\R^{3\ell}).\end{equation*}
Choosing now $\Psi_{\ell}(\Vk)=(1+\E(\Vk))^{-s}\Phi_{\ell}$ with $\Phi_{\ell} \in \mathcal{W}_{\ell}^{s}$ we obtain
\begin{equation*}\label{eq:converg}\lim_{\substack{\varepsilon \to 0,\,N_0 \to +\infty\\ \varepsilon\,{N_0}\to 1}}\big \langle f_{\ell}^{\varepsilon}(t), \Phi_{\ell}\big\rangle_{\ell}
=\big \langle \overline{\mu}_{\ell}(t), \Phi_{\ell} \rangle_{\ell} \qquad \forall \Phi_{\ell} \in \mathcal{W}_{\ell}^{s}.\end{equation*}
Since $\C_{0}(\R^{3\ell}) \subset \mathcal{W}_{\ell}^{s}$, the uniqueness of the weak-$\star$ limit implies that $\overline{\mu}_{\ell}(t)=\bm{\mu}_{\ell}(t)$ and the proof is achieved.
\end{proof}
 
\section{The Annihilated Boltzmann hierarchy}\label{sec:ABH}

We consider 
here the \emph{Annihilated Boltzmann Hierarchy} in weak form. We first introduce the functional space
$$\mathcal{X}  \subset \prod_{k=1}^{\infty}\mathscr{M}_{\mathrm{sym}}(\R^{3k})$$ 
as the subspace of sequences $\bm{\nu}^{\infty}=\left\{{\nu}_{k}\right\}_{k}$ such that $\nu_{k} \in \mathscr{M}_{\mathrm{sym}}(\R^{3k})$ for any $k \geq 1$,   and such that
$$\|\bm{\nu}\|_{\mathcal{X}}:=\sum_{k=1}^{\infty}2^{-k}\|{\nu}_{k}\|_{1,k} <\infty$$
where $\|\cdot\|_{1,k}$ is the weighted total variation norm in $\mathscr{M}(\R^{3k})$ given by
\begin{equation*}\begin{split}
\left\|\nu_{k}\right\|_{1,k}&=\sup\left\{\bigg|\langle \nu_{k},(1+|\cdot|) \Phi_{k}\rangle_{k}\bigg|\,;\,\Phi_{k} \in \C_{b}(\R^{3k}))\;;\,\|\Phi_{k}\|_{\infty} \leq 1\right\}\\
&=\underset{\|\Phi_{k}\|_{\infty} \leq 1}{\sup_{\Phi_{k} \in \C_{b}(\R^{3k})}}\int_{\R^{3k}}\left(1+|\bm{V}_{k}|\right) \,|\Phi_{k}(\bm{V}_{k})|\nu_{k}(\d\bm{V}_{k}).\end{split}\end{equation*}
Recall that we consider here only \emph{symmetric} measures, i.e.
$$\int_{\R^{3k}}\Phi_{k}(\bm{V}_{k})\nu_{k}(\d \bm{V}_{k})=\int_{\R^{3k}}\Phi_{k}(\bm{V}_{\sigma(k)})\nu_{k}(\d \bm{V}_{k})$$
for any permutation $\sigma$ of $\{1,\ldots,k\}.$

We notice that $(\mathcal{X} ,\|\cdot\|_{\mathcal{X}})$ is a Banach space. Notice that, if $\bm{\nu}=\left\{{\nu}_{k}\right\}_{k} $ is such that $\sup_{k} \|\nu_{k}\|_{1,k} < \infty$ then $\bm{\nu} \in \mathcal{X}.$  

We now define the following notion of solutions to the \emph{Annihilated Boltzmann Hierarchy} that we will denote for shortness ABH in the sequel.
\begin{defi}[\textit{\textbf{Weak solution to the ABH}}] \label{def:weaksol}
Assume that $\B$ satisfy Assumption \ref{hyp1} with $\gamma \in [0,1].$ Given $T >0$,  we say that a family  
$$\bm{\nu}^{\infty}=\left\{{\nu}_{k}\right\}_{k} \in L^{\infty}([0,T)\,;\mathcal{X})$$
is a (weak) solution to the ABH 
if for any $\Phi_{k} \in \C_{0}(\R^{3k})$, $k \geq 1$,  the following identity holds:
\begin{multline}\label{sol:BHweak}
  \langle \nu_k(t), \Phi_{k}\rangle_{k}=\langle \nu_{k}(0),\Phi_{k}\rangle_{k}+
  \sum_{i=1}^{k} \int_{0}^{t}\d s \int_{\R^{3(k+1)}} \nu_{k+1}(s,\d\bm{V}_{k+1})\times\\
  \times\int_{\S} \B(v_i-v_{k+1},\omega) \left[(1-\alpha)\Phi_{k}(\widehat{\bm{V}_{k}}^{i,k+1})-\Phi_{k}(\bm{V}_{k})\right]\d\omega \qquad \forall t \in [0,T).
\end{multline}
\end{defi}

\begin{rem}\label{rem:Gk} For any $k \geq 1,$ $\alpha \in [0,1)$ we  introduce the operator
$$\Gk^{\alpha}\::\:\C_{b}(\R^{3k}) \to \C(\R^{3(k+1)})$$ such that 
$$\Gk^{\alpha} \Phi_{k}(\bm{V}_{k+1})=\sum_{i=1}^{k}\int_{\S}\B(v_i-v_{k+1},\omega) \left[(1-\alpha)\Phi_{k}(\widehat{\bm{V}_{k}}^{i,k+1})-\Phi_{k}(\bm{V}_{k})\right]\d\omega$$
for any $\Phi_{k} \in \C_{b}(\R^{3k}).$ Notice that $\Gk^{\alpha} \Phi_{k}$ is continuous over $\R^{3k+3}$ but no longer bounded whenever $\gamma >0$ since $\Sigma_{\B}$ is unbounded. One can reformulate \eqref{sol:BHweak} as
\begin{multline}\label{sol:BHweak2}\langle \nu_k(t), \Phi_{k}\rangle_{k}=\langle \nu_{k}(0),\Phi_{k}\rangle_{k}+
  \int_{0}^{t} \langle \nu_{k+1}(s), \Gk^{\alpha}\Phi_{k} \rangle_{k+1}\d s \\
  \qquad \forall t \in [0,T),\quad k \geq 1, \qquad \Phi_{k} \in \C_{0}(\R^{3k}).
\end{multline}
\end{rem}
The existence of a solution to the annihilated Boltzmann hierarchy is deduced from the Cauchy theory of Eq. \eqref{BE}. More precisely, we have the following

\begin{prp}\label{theo:existence}
Let $f_{0}$ be a non-negative probability distribution satisfying \eqref{normalization}. Let be $\alpha \in (0,1)$ and let $\B$ satisfy Assumption \ref{hyp1} with $\gamma \in [0,1]$.  
Then, $\bm{\nu}^{\infty}=\{\nu_{k}\}_{k} \in L^{\infty}([0,\infty),\mathcal{X})$ with 
$$\nu_{k}(t)= f(t)^{\otimes k}, \qquad t \geq 0, \qquad k \geq 1$$ 
is a solution to the ABH in the sense of Definition \ref{def:weaksol} with initial datum $\nu_{k}(0)=f_{0}^{\otimes k}$ for all $k \geq 1$ where $f(t)$ is the unique solution to \eqref{BE} with initial datum $f(0)=f_{0}$. 
\end{prp}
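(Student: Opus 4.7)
The plan is to verify directly that the candidate $\bm{\nu}^{\infty}=\{f(t)^{\otimes k}\}_{k}$ satisfies the weak identity \eqref{sol:BHweak}, leveraging the Cauchy theory for \eqref{BE} available in \cite{BaLo,BaLo2,ABL}. That theory, under Assumption \ref{hyp1} and \eqref{normalization}, produces a unique nonnegative $f\in \mathcal{C}([0,\infty);L^{1}(\R^{3}))$ for which \eqref{eq:denE} yields the monotone bounds $\|f(t)\|_{1}\le 1$, $\int |v|^{2}f(t,v)\d v\le E_{0}$, and propagation of the third-order moment from \eqref{normalization}. In particular $\int|v|f(t,v)\d v\le 1+E_{0}$ for all $t\ge 0$.

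To check that $\bm{\nu}^\infty\in L^{\infty}([0,\infty);\mathcal{X})$, I would use the product structure together with $|\bm{V}_{k}|\le\sum_{i=1}^{k}|v_{i}|$ to estimate
$$\|f(t)^{\otimes k}\|_{1,k}=\int_{\R^{3k}}(1+|\bm{V}_{k}|)\prod_{i=1}^{k} f(t,v_{i})\d\bm{V}_{k}\le \|f(t)\|_{1}^{k}+k\,\|f(t)\|_{1}^{k-1}\int_{\R^{3}}|v|\,f(t,v)\d v\le 1+k(1+E_{0}).$$
Hence $\sup_{t\ge 0}\|\bm{\nu}(t)\|_{\mathcal{X}}\le \sum_{k\ge 1}2^{-k}(1+k(1+E_{0}))<\infty$, as required.

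The heart of the argument is the verification of \eqref{sol:BHweak}. Fix $\Phi_{k}\in\mathcal{C}_{0}(\R^{3k})$ and set, for each $i\in\{1,\ldots,k\}$,
$$\phi_{i}(t,v)=\int_{\R^{3(k-1)}}\Phi_{k}(v_{1},\ldots,v_{i-1},v,v_{i+1},\ldots,v_{k})\prod_{j\ne i}f(t,v_{j})\,\d v_{j},$$
which is bounded by $\|\Phi_{k}\|_{\infty}\|f(t)\|_{1}^{k-1}$. By the product rule applied to the tensorized measure and the symmetry of $\Phi_{k}$,
$$\tfrac{\d}{\d t}\langle f(t)^{\otimes k},\Phi_{k}\rangle_{k}=\sum_{i=1}^{k}\int_{\R^{3}}\partial_{t}f(t,v_{i})\,\phi_{i}(t,v_{i})\,\d v_{i}.$$
Substituting $\partial_{t}f=(1-\alpha)\Q_{+}(f,f)-\Q_{-}(f,f)$ from \eqref{BE} and using the standard symmetric weak form of the Boltzmann operator against the bounded $\phi_{i}$, the $i$-th summand equals
$$\int_{\R^{6}}f(t,v_{i})f(t,v_{*})\int_{\S}\B(v_{i}-v_{*},\omega)\bigl[(1-\alpha)\phi_{i}(t,v_{i}')-\phi_{i}(t,v_{i})\bigr]\d\omega\,\d v_{i}\,\d v_{*}.$$
Renaming $v_{*}=v_{k+1}$ and reabsorbing $f(t,v_{k+1})$ into the product yields exactly the $i$-th term of $\langle f(t)^{\otimes(k+1)},\Gk^{\alpha}\Phi_{k}\rangle_{k+1}$ as in Remark \ref{rem:Gk}. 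Summing over $i$ and integrating from $0$ to $t$ gives \eqref{sol:BHweak}.

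The only delicate point, which is the main obstacle, is the justification of the differentiation under the integral sign and of the symmetric weak form when $\gamma>0$ (so $\Sigma_{\B}$ is unbounded). This is handled by the propagation of the $(2+\gamma)$-moment from \eqref{normalization} along the flow of \eqref{BE}, which ensures that both $\Q_{\pm}(f,f)(t,\cdot)$ belong to $L^{1}(\R^{3})$ locally uniformly in time and that the triple integrals $\int f f_{*}\B[(1-\alpha)\phi_{i}'-\phi_{i}]\d\omega\,\d v\,\d v_{*}$ are absolutely convergent. These bounds also render $t\mapsto\langle f(t)^{\otimes k},\Phi_{k}\rangle_{k}$ absolutely continuous, validating the time integration and completing the proof.
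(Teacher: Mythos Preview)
Your proof is correct and follows the same core computation as the paper: differentiate $\langle f(t)^{\otimes k},\Phi_k\rangle_k$ via the product rule, feed in \eqref{BE}, and identify the outcome with $\langle f(t)^{\otimes(k+1)},\Gk^{\alpha}\Phi_k\rangle_{k+1}$. The only methodological difference is that the paper first restricts to tensorized test functions $\Phi_k=\varphi_1\otimes\cdots\otimes\varphi_k$, carries out the computation there (introducing the one-variable ``dual'' operator $\mathbf{B}^*_\alpha$ so that $\langle f\otimes f,\mathbf{B}^*_\alpha(\psi)\rangle_2=\langle\mathbb{B}_\alpha(f,f),\psi\rangle_1$), and then invokes Stone--Weierstrass to pass to general $\Phi_k$; you instead work directly with an arbitrary $\Phi_k\in\mathcal{C}_0(\R^{3k})$ by freezing all variables but $v_i$ in the partial integrals $\phi_i(t,\cdot)$. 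Your route is slightly more economical since it avoids the density step, and you are also more explicit than the paper about checking $\bm{\nu}^\infty\in L^\infty([0,\infty);\mathcal{X})$ and about the integrability needed when $\gamma>0$ (the paper simply cites the $L^1_3$ regularity from \cite{BaLo}).
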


\begin{proof} 
 The existence and uniqueness of a solution $f(t)$ to \eqref{BE} with initial datum $f(0)=f_{0}$ is granted from \cite[Theorem 1.9]{BaLo}. Moreover, the solution $f$ satisfies $f \in \C([0,\infty)\,;\,L^{1}_{2}(\R^{3})) \cap L^{1}_{\mathrm{loc}}((0,\infty);L^{1}_{3}(\R^{3}))$
where 
$$L^{1}_{p}(\R^{3})=\{g\in L^{1}(\R^{3})\;;\; \|g\|_{L^{1}_{p}}:=\int_{\R^{3}}|g(v)| \left(1+|v|^{2}\right)^{p/2}\d v < \infty\},\qquad p \geq 0$$
and, additionally,
$$\int_{\R^{3}}f(t,v)\d v \leq 1, \qquad \int_{\R^{3}}f(t,v)\d v \leq E_{0}, \qquad \forall t \geq 0.$$
Introducing, for any $t \geq 0$ and any $k \geq 1$, $\nu_{k}(t)=f^{\otimes k}(t)$, a direct inspection shows that $\bm{\nu}^{\infty}=\{\nu_{k}\}_{k} \in \mathcal{X}$ is a weak solution to \eqref{sol:BHweak}. 
We now recall this argument following the strategy proposed in  
\cite{MM}. We introduce $\mathbb{B}_{\alpha}(f,g)=(1-\alpha)\Q(f,g)-\Q_{-}(f,f)$ the nonlinear annihilated Boltzmann operator. As in \cite{MM}, we introduce also
$\mathbf{B}^{*}_{\alpha}\::\:\psi \in \C_{b}(\R^{3}) \mapsto \mathbf{B}^{*}(\psi) \in \C(\R^{6})$ through
$$\mathbf{B}^{*}_{\alpha}(\psi)(v,v_{*})=\int_{\S}\B(v-v_{*},\omega)\left[(1-\alpha)\psi(v')-\psi(v)\right]\d \omega$$
where $v'$ is the post-collisional velocity associated to the triple $(v,v_{*},\omega)$. Notice that, if $f \in L^{1}(\R^{3})$ is given, the following identity holds:
$$\langle f \otimes f,\mathbf{B}^{*}_{\alpha}(\psi)\rangle_{2}=\int_{\R^{3}}f(v)\d v \int_{\R^{3}} f(v_{*})\mathbf{B}^{*}_{\alpha}(\psi)(v,v_{*})\d v_{*}
=\int_{\R^{3}}\psi(v)\mathbb{B}_{\alpha}(f,f)\d v=\langle \mathbb{B}_{\alpha}(f,f),\psi\rangle_{1}.$$
Then, for $\Phi_{k}(\bm{V}_{k})=\varphi_{1}(v_{1})\ldots\varphi_{k}(v_{k})$, one has 
$$\Gk^{\alpha} \Phi_{k}(\bm{V}_{k+1})=\sum_{i=1}^{k}\left(\prod_{j\neq i}\varphi_{j}(v_{j})\right)\mathbf{B}^{*}_{\alpha}(\varphi_{i})(v_{i},v_{k+1})$$
and, for a given $f \in L^{1}(\R^{3})$,

\begin{equation*}\begin{split}
\langle f^{\otimes (k+1)},\Gk^{\alpha} \Phi_{k}\rangle_{k+1}&=\sum_{i=1}^{k}\int_{\R^{3(k-1)}}\left(\prod_{j\neq i}\varphi_{j}(v_{j})f(v_{j})\right)\d v_{1}\ldots \d v_{i-1}\d v_{i+1}\ldots \d v_{k}\\
&\phantom{+++++}\int_{\R^{3}} \d v_{i} \int_{\R^{3}} \d v_{k+1} f(v_{i})f(v_{k+1})\mathbf{B}^{*}_{\alpha}(\varphi_{i})(v_{i},v_{k+1}) \d v_{k+1}\\
&=\sum_{i=1}^{k}\int_{\R^{3(k-1)}}\left(\prod_{j\neq i}\varphi_{j}(v_{j})f(v_{j})\right)\varphi_{i}(v_{i})\mathbb{B}_{\alpha}(f,f)(v_{i})\d\bm{V}_{k}\\
&=\sum_{i=1}^{k}\left(\prod_{j\neq i}\int_{\R^{3}}\varphi_{j}(v_{j})f(v_{j})\d v_{j}\right)\int_{\R^{3}}\varphi_{i}(v_{i})\mathbb{B}_{\alpha}(f,f)(v_{i})\d v_{i}\\
&=\sum_{i=1}^{k}\left(\prod_{j\neq i} \langle f,\varphi_{j}\rangle_{1}\right)\langle \mathbb{B}_{\alpha}(f,f),\varphi_{i}\rangle_{1}
\end{split}\end{equation*}

In particular, if $f(t)=f(t,v)$ is the unique solution to the annihilated Boltzmann equation \eqref{BE}, then $f(t)^{\otimes k}$ satisfies, for all
\begin{equation}\label{eq:tensor}
\Phi_{k}(\bm{V}_{k})=\varphi_{1}(v_{1})\ldots\varphi_{k}(v_{k})\qquad \varphi_{i} \in \C_{b}(\R^{3}), \qquad i=1,\ldots,k.\end{equation}
the following
\begin{equation*}\begin{split}
\dfrac{\d}{\d t}\langle f(t)^{\otimes k},\Phi_{k}\rangle_{k}&=\sum_{i=1}^{k}\left(\prod_{j\neq i} \int_{\R^{3}}f(t,v_{j})\varphi_{j}(v_{j})\d v_{j}\right)\int_{\R^{3}}\partial_{t}f(t,v_{i})\varphi_{i}(v_{i})\d v_{i}\\
&=\sum_{i=1}^{k}\left(\prod_{j\neq i} \langle f(t),\varphi_{j}\rangle_{1}\right)\langle \mathbb{B}_{\alpha}(f(t),f(t)),\varphi_{i}\rangle_{1}\\
&=\langle f(t)^{\otimes (k+1)},\Gk^{\alpha} \Phi_{k}\rangle_{k+1}\end{split}\end{equation*}
which shows that $f(t)^{\otimes k}$ satisifes \eqref{sol:BHweak2} for \textit{\textbf{tensorized}} test-function $\Phi_{k}$ of the form \eqref{eq:tensor}. 
We observe that Stone-Weierstrass theorem guarantees that the class of tensorized test-functions is dense in $\C_{b}(\R^{3k})$. Therefore, we deduce that $f(t)^{\otimes k}$ is a weak solution to ABH.
\end{proof}

\subsection{Solutions to the ABH as limit point of the rescaled correlation functions}

We show here that any weak limit constructed in Proposition \ref{prp:conv} is actually a solution to the Annihilated Boltzmann Hierarchy:
\begin{theo}\label{main}
 For any $\varepsilon >0$ and $N_{0} \in \N$ even, let $\{f_{\ell}^{\varepsilon}(t)\}_{\ell=1,\ldots,N_{0}}$ be the rescaled correlation functions associated to the unique solution $\{\P_{N}(t)\}_{N}$ to \eqref{PNVN} with initial datum \eqref{inDATA}. Then, any limit point of $\{f_{\ell}^{\varepsilon}(t)\}_{\ell=1,\ldots,N_{0}}$ in the sense of \eqref{eq:converg} is a weak solution to ABH. More precisely, if given $\ell \geq 1$ and $t \geq 1$, $\bm{\mu}_{\ell}(t) \in \M(\R^{3\ell})$ is the limit of  a subsequence (still denoted $\{f_{\ell}^{\varepsilon}(t)\}_{\varepsilon >0,N_{0} \in \N}$) in the sense that
\begin{equation}\label{convell}\lim_{\substack{\varepsilon \to 0,\,N_0 \to +\infty\\ \varepsilon\,{N_0}\to 1}}\big \langle f_{\ell}^{\varepsilon}(t), \Phi_{\ell}\big\rangle_{\ell}
=\big \langle \bm{\mu}_{\ell}(t), \Phi_{\ell} \rangle_{\ell} \qquad \forall \ell \geq 1,\quad \Phi_{\ell} \in \mathcal{C}_{0}(\R^{3\ell})\end{equation}then, $\bm{\mu}^{\infty}=\left\{\bm{\mu}_{\ell}(\cdot)\right\}_{\ell} \in L^{\infty}([0,\infty),\mathcal{X})$ is a solution to ABH in the sense of Definition \ref{def:weaksol}.\end{theo}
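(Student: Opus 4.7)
The plan is to pass to the limit $\varepsilon \to 0$, $N_{0}\to\infty$ with $N_{0}\varepsilon \to 1$ in each of the three terms of the weak rescaled BBGKY hierarchy \eqref{BBGKY6}, using the uniform bounds of Corollary \ref{cor:energy} together with the convergence results of Proposition \ref{prp:conv} and its refinement Corollary \ref{cor:converg2}. First I would extract, by a diagonal argument (as in the Remark following Proposition \ref{prp:conv}), a common subsequence along which \eqref{convell} holds for every $\ell \geq 1$ simultaneously. The mass and energy bounds would pass to the limit by testing against monotonically increasing cutoffs in $\C_{0}(\R^{3\ell})$, yielding
$$\int_{\R^{3\ell}}\bm{\mu}_{\ell}(t,\d \Vk) \leq 1, \qquad \int_{\R^{3\ell}}\E(\Vk)\,\bm{\mu}_{\ell}(t,\d \Vk) \leq E_{0}.$$
Since $|\Vk|^{2}=\ell\,\E(\Vk)$, Cauchy--Schwarz then provides a uniform first moment bound and hence $\bm{\mu}^{\infty}\in L^{\infty}([0,\infty);\mathcal{X})$.

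For the initial term, \eqref{scaling6} gives $\langle f_{\ell}^{\varepsilon}(0),\Phi_{\ell}\rangle \to \langle f_{0}^{\otimes \ell},\Phi_{\ell}\rangle$, and by uniqueness of the weak-$\star$ limit this identifies $\bm{\mu}_{\ell}(0)=f_{0}^{\otimes \ell}$. For the $\varepsilon$-prefactored sum I would use the pointwise bound
$$\Bigl|\int_{\S}\B(v_{i}-v_{j},\omega)\bigl[(1-\alpha)\Phi_{\ell}(\Vk^{i,j})-\Phi_{\ell}(\Vk)\bigr]\d\omega\Bigr| \leq (2-\alpha)\,C_{\B}\,\|\Phi_{\ell}\|_{\infty}\,|v_{i}-v_{j}|^{\gamma},$$
which follows from Assumption \ref{hyp1}. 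Since $\gamma\in[0,1]$, the mass and energy bounds keep the integral against $f_{\ell}^{\varepsilon}(s)$ uniformly bounded in $(\varepsilon,N_{0},s)$, so the entire prefactored term is $\mathcal{O}(\varepsilon\,t)$ and vanishes in the limit.

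The core step is the gain/loss term involving $f_{\ell+1}^{\varepsilon}$. The natural test function there is
$$\chi_{i}(\Vku):=\int_{\S}\B(v_{i}-v_{\ell+1},\omega)\left[(1-\alpha)\Phi_{\ell}(\widehat{\Vk}^{i,\ell+1})-\Phi_{\ell}(\Vk)\right]\d\omega,$$
which is continuous but does \emph{not} lie in $\C_{0}(\R^{3(\ell+1)})$, since $\Sigma_{\B}$ only obeys the growth $|\cdot|^{\gamma}$; this is the main obstacle. To overcome it I would observe that
$$|\chi_{i}(\Vku)| \leq C\,\|\Phi_{\ell}\|_{\infty}\,\bigl(1+\E(\Vku)\bigr)^{\gamma/2},$$
so that $\chi_{i}\in\mathcal{W}_{\ell+1}^{s}$ for every $s\in[\gamma/2,1)$---a non-empty range precisely because $\gamma\leq 1$. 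Corollary \ref{cor:converg2} then provides $\langle f_{\ell+1}^{\varepsilon}(\sigma),\chi_{i}\rangle_{\ell+1}\to \langle \bm{\mu}_{\ell+1}(\sigma),\chi_{i}\rangle_{\ell+1}$ \emph{uniformly} for $\sigma\in[0,t]$, which in turn allows interchanging the limit and the time integral on $[0,t]$. Summing over $i=1,\ldots,\ell$ yields the limit of the gain/loss term.

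Putting the three convergences together and comparing with the compact form \eqref{sol:BHweak2}, we conclude that $\{\bm{\mu}_{\ell}\}_{\ell}$ satisfies the ABH in the sense of Definition \ref{def:weaksol} for every $\Phi_{\ell}\in\C_{0}(\R^{3\ell})$. The whole scheme hinges on the refined convergence of Corollary \ref{cor:converg2}, itself a consequence of the tightness provided by the uniform energy estimate: without it, the unboundedness of $\chi_{i}$ would block the passage to the limit. This is the structural reason for the standing restriction $\gamma\in[0,1]$, and also points to the genuine difficulty faced in Section \ref{sec:HS} for the hard-sphere case.
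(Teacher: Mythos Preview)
Your proposal is correct and follows essentially the same route as the paper's proof: split the weak rescaled BBGKY identity \eqref{BBGKY6} into the initial term (handled by \eqref{scaling6}), the $\varepsilon$-prefactored term (shown to be $\mathcal{O}(\varepsilon t)$ via the energy bounds of Corollary \ref{cor:energy}), and the $f_{\ell+1}^{\varepsilon}$ term (handled by showing the effective test function lies in $\mathcal{W}_{\ell+1}^{\gamma/2}$ and invoking Corollary \ref{cor:converg2}). Your use of Cauchy--Schwarz to make explicit the first-moment control needed for membership in $\mathcal{X}$ is in fact a shade more careful than the paper's First Step, which simply cites the mass bound from Proposition \ref{prp:conv}.
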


\begin{proof} The strategy -- as explained in the introduction -- is inspired by \cite[Section 8]{MM}. Let $\alpha \in (0,1)$ and $T >0 $ be given. Define
$$\bm{\mu}^{\infty}(t)=\left\{\bm{\mu}_{\ell}(t)\right\}_{\ell} \in \prod_{\ell=1}^{\infty}\mathscr{M}^{+}(\R^{3\ell}), \qquad t \in [0,T)$$
where, for any $\ell \geq 1$ and any $t \in [0,T)$, $\bm{\mu}_{\ell}(t)$ satisfies \eqref{eq:converg}. We want to prove that $\bm{\mu}^{\infty}$ is a weak solution to \eqref{sol:BHweak}.

\noindent \textit{First Step:} The fact that the mapping $t \in [0,T) \mapsto \bm{\mu}^{\infty}(t)$ belongs to $L^{\infty}([0,T),\mathcal{X})$ is deduced directly from the estimate
$$\sup_{t \geq 0}\int_{\R^{3\ell}}\bm{\mu}_{\ell}(t,\d\Vk)=\sup_{t\geq 0}\|\bm{\mu}_{\ell}(t)\|_{1,\ell} \leq 1$$
obtained in Proposition \ref{prp:conv}. Notice that the second part of Proposition \ref{prp:conv} actually asserts that the mapping $t \in [0,T) \mapsto \bm{\mu}^{\infty}(t) \in \mathcal{X}$ is continuous. 

\noindent \textit{Second Step:} Our goal is to  show that $\bm{\mu}^{\infty}(t)$ satisfies \eqref{sol:BHweak}. We first notice that, according to \eqref{scaling6}, we have
$$\bm{\mu}_{\ell}(0)=f_{0}^{\otimes \ell} \qquad \forall \ell \geq 1.$$
We now fix $t \in [0,T)$ and  $\Phi_{\ell} \in \C_{0}(\R^{3\ell})$. Recall from \eqref{BBGKY6} that
\begin{multline}\label{fellvare} 
\big\langle f_{\ell}^{\varepsilon}(t),\Phi_{\ell}\big\rangle_{\ell}=\big\langle f_{\ell}^{\varepsilon}(0),\Phi_{\ell}\big\rangle_{\ell} 
+ \int_{0}^{t} \langle f_{\ell+1}^{\varepsilon}(s), \bm{\Gamma}_{\ell,\ell+1}^{\alpha}\Phi_{\ell} \rangle_{\ell+1}\d s\\
\phantom{++++} +\varepsilon \sum_{1 \leq i < j \leq \ell}\int_{0}^{t}\d s\int_{\R^{3\ell}}f_{\ell}^{\varepsilon}(s,\d\Vk)\int_{\S} \B(v_i-v_j,\omega) \left[(1-\alpha)\Phi_\ell(\Vk^{i,j})-\Phi_\ell(\Vk)\right]\d\omega.\end{multline} 
Clearly, the term on the left-hand side and the first term on the right-hand side converge respectively to $\big \langle \bm{\mu}_{\ell}(t),\Phi_{\ell}\big\rangle_{\ell}$ and $\big \langle \bm{\mu}_{\ell}(0),\Phi_{\ell}\big\rangle_{\ell}$. Let us consider the second-term on the right-hand side, i.e.
$$\mathcal{G}_{\ell}^{\varepsilon}:=\int_{0}^{t} \langle f_{\ell+1}^{\varepsilon}(s), \bm{\Gamma}_{\ell,\ell+1}^{\alpha}\Phi_{\ell} \rangle_{\ell+1}\d s.$$
One sees easily that
\begin{multline*}
|\bm{\Gamma}_{\ell,\ell+1}^{\alpha}\Phi_{\ell}(\Vku)| \leq 2\|\Phi_{\ell}\|_{\infty}\,\sum_{i=1}^{\ell}\Sigma_{\B}(v_{i}-v_{\ell+1})
\leq 2\|\Phi_{\ell}\|_{\infty}C_{\B}\sum_{i=1}^{\ell}|v_{i}-v_{\ell+1}|^{\gamma} \\
\leq 2C_{\B}\ell(\ell+1)\|\Phi_{\ell}\|_{\infty}\E(\Vku)^{\frac{\gamma}{2}}\qquad \forall \Vku \in \R^{3(\ell+1)}
\end{multline*}
where we used Holder's inequality for the last estimate. This shows that $\bm{\Gamma}_{\ell,\ell+1}^{\alpha}\Phi_{\ell} \in \mathcal{W}_{\ell+1}^{\frac{\gamma}{2}}$ and therefore, using Corollary \ref{cor:converg2}, we get that
$$\lim_{\substack{\varepsilon \to 0,\,N_0 \to +\infty\\ \varepsilon\,{N_0}\to 1}}\mathcal{G}_{\ell}^{\varepsilon}=\int_{0}^{t}\langle \bm{\mu}_{\ell+1}(s),\bm{\Gamma}_{\ell,\ell+1}^{\alpha}\Phi_{\ell} \rangle_{\ell+1}\d s.$$
 We now investigate the third term in \eqref{fellvare}. We introduce, using the notations of Section \ref{sec:reform},
\begin{multline*}
\mathcal{L}^{\star}_{\ell}\Phi_{\ell}(\Vk)=\sum_{1 \leq i < j \leq \ell}\int_{\S} \B(v_i-v_j,\omega) \left[(1-\alpha)\Phi_\ell(\Vk^{i,j})-\Phi_\ell(\Vk)\right]\d\omega\\
\qquad \forall \Vk \in \R^{3\ell}, \quad \Phi_{\ell} \in \C_{0}(\R^{3\ell}),\end{multline*}
so that the third term in the right-hand side of  \eqref{fellvare} is 
$$\mathcal{Z}_{\ell}^{\varepsilon}:=\varepsilon  \int_{0}^{t} \langle f_{\ell}^{\varepsilon}(s), \mathcal{L}^{\star}_{\ell}\Phi_{\ell}\rangle_{\ell}\d s.$$
As in the previous step, it is easy to show that
$$
|\mathcal{L}^{\star}_{\ell}\Phi_{\ell}(\Vk)| \leq 2\|\Phi_{\ell}\|_{\infty}\,\sum_{1 \leq i < j \leq \ell}\Sigma_{\B}(v_{i}-v_{j})
\leq 2C_{\B}\|\Phi_{\ell}\|_{\infty}\sum_{1 \leq i < j \leq \ell}\left(|v_{i}|^{\gamma}+|v_{j}|^{\gamma}\right).$$
Thus, using Young's inequality, there is $C(\gamma,\B) >0$ such that 
$$|\mathcal{L}^{\star}_{\ell}\Phi_{\ell}(\Vk)| \leq C(\gamma,\B)\ell^{2}\left[\E(\Vk)  + 1\right]\|\Phi_{\ell}\|_\infty, \qquad \Vk \in \R^{3\ell}.$$
Relying on the bounds provided in Corollary \ref{cor:energy}, we get
$$\left|\mathcal{Z}^{\varepsilon}_{\ell}\right| \leq \varepsilon\,t C(\gamma,\B)\ell^{2}\|\Phi_{\ell}\|_{\infty}\left(E_{0}+1\right)(N_{0}\varepsilon)^{\ell}$$
which results in $\lim_{\substack{\varepsilon \to 0,\,N_0 \to +\infty\\ \varepsilon\,{N_0}\to 1}}\mathcal{Z}_{\ell}^{\varepsilon}=0.$  We finally obtain that 
$$
\big \langle \bm{\mu}_{\ell}(t),\Phi_{\ell}\big\rangle_{\ell} = \big \langle \bm{\mu}_{\ell}(0),\Phi_{\ell}\big\rangle_{\ell}+\int_{0}^{t}\langle \bm{\mu}_{\ell+1}(s),\bm{\Gamma}_{\ell,\ell+1}^{\alpha}\Phi_{\ell} \rangle_{\ell+1}\d s$$
i.e. $\{\bm{\mu}_{\ell}(t)\}_{\ell}$ is a solution to the annihilated Boltzmann hierarchy. This concludes the proof.\end{proof}

\subsection{Uniqueness of a solution for $\Sigma_{\B}$ bounded}

We consider in this section the case in which
$$\Sigma_{\B}(v-v_{*})=\int_{\S}\B(v-v_{*},\omega)\d\omega$$ 
is bounded. Then, we can prove that there is a unique solution to the ABH. This implies that  the propagation of chaos for \eqref{BE} holds, yielding the proof of Theorem \ref{main1}. Namely, we prove the following
\begin{theo}\label{unique}
Assume that $\Sigma_{\B}$ is bounded, i.e.~there exists $C_{\B} >0$ such that
\begin{equation}\label{eq:boudedkernel}
\int_{\S} \B(v-v_{*},\omega)\d \omega \leq C_{\B} \qquad \forall v,v_{*} \in \R^{3}.
\end{equation}
Then, for any $\varrho_{0} >0$ and any $\bm{\nu}^{\infty}(0)=\left(\nu_{\ell}(0)\right)_{\ell} \in \mathcal{X}$ satisfying $\nu_{\ell}(0) \in \mathscr{M}^{+}(\R^{3\ell})$ and
\begin{equation}\label{hypnu0}\|\nu_{\ell}(0)\|_{1,\ell} \leq \varrho_{0}^{\ell}, \qquad \forall \ell \geq 1,\end{equation}
there is \emph{at most one} weak solution to the Annihilated Boltzmann Hierarchy \eqref{sol:BHweak} in the sense of Definition \ref{def:weaksol}.\end{theo}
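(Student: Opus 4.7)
The plan is to run a Duhamel-type iteration on the difference of two solutions and to exploit the sup-norm boundedness of $\Sigma_{\B}$ to control the iterated operators $\bm{\Gamma}^{\alpha}_{j;j+1}$. Before iterating, I first extend the weak formulation \eqref{sol:BHweak2} to test functions $\Phi_k \in \C_{b}(\R^{3k})$. Given $\Phi_k \in \C_{b}$, pick $\chi_R \in \C_{0}(\R^{3k})$ with $0 \leq \chi_R \leq 1$ and $\chi_R \to 1$ pointwise; the hypothesis \eqref{eq:boudedkernel} yields the $R$-uniform bound $|\bm{\Gamma}^{\alpha}_{k;k+1}[\chi_R \Phi_k]| \leq (2-\alpha)\,k\,C_{\B}\|\Phi_k\|_\infty$, and since each $\nu_{k+1}(s)$ is a finite Radon measure, dominated convergence (plus Fubini) lets me pass to the limit $R \to \infty$ in every term of \eqref{sol:BHweak2}. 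In particular $\bm{\Gamma}^{\alpha}_{k;k+1}$ sends $\C_{b}(\R^{3k})$ into $\C_{b}(\R^{3(k+1)})$, which is what allows the iteration below.

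Let $\bm{\nu}^{1},\bm{\nu}^{2} \in L^{\infty}([0,T),\mathcal{X})$ be two weak solutions of ABH sharing the same initial data, and set $\mu_k(t) := \nu_k^{1}(t) - \nu_k^{2}(t)$. Then $\mu_k(0)=0$ and, for every $\Phi_k \in \C_{b}(\R^{3k})$,
\[
\langle \mu_k(t), \Phi_k\rangle_{k} = \int_0^t \langle \mu_{k+1}(s), \bm{\Gamma}^{\alpha}_{k;k+1}\Phi_k\rangle_{k+1}\,\d s.
\]
Iterating this relation $n$ times yields
\[
\langle \mu_k(t), \Phi_k\rangle_{k} = \int_{\Delta_n(t)} \big\langle \mu_{k+n}(s_n),\, \bm{\Gamma}^{\alpha}_{k+n-1;k+n}\cdots\bm{\Gamma}^{\alpha}_{k;k+1}\Phi_k\big\rangle_{k+n}\,\d s_n\cdots\d s_1,
\]
where $\Delta_n(t) = \{0 \leq s_n \leq \cdots \leq s_1 \leq t\}$ has Lebesgue measure $t^n/n!$. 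The crux is the operator bound $\|\bm{\Gamma}^{\alpha}_{j;j+1}\Psi\|_\infty \leq (2-\alpha)\,j\,C_{\B}\|\Psi\|_\infty \leq 2jC_{\B}\|\Psi\|_\infty$, which iterates to
\[
\|\bm{\Gamma}^{\alpha}_{k+n-1;k+n}\cdots \bm{\Gamma}^{\alpha}_{k;k+1}\Phi_k\|_\infty \leq (2C_{\B})^{n}\,\frac{(k+n-1)!}{(k-1)!}\,\|\Phi_k\|_\infty.
\]

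The assumption $\bm{\nu}^{1},\bm{\nu}^{2} \in L^{\infty}([0,T),\mathcal{X})$ supplies a uniform bound $\|\nu_j^{i}(s)\|_{1,j} \leq 2^{j} M$ (hence $\|\nu_j^{i}(s)\|_{1} \leq 2^{j}M$) for some $M$ depending on $T$ and the two solutions, so $\|\mu_{k+n}(s)\|_{1} \leq 2^{k+n+1} M$. Assembling,
\[
|\langle \mu_k(t), \Phi_k\rangle_{k}| \leq \frac{t^n}{n!}\,(2C_{\B})^n \frac{(k+n-1)!}{(k-1)!}\,2^{k+n+1}M\,\|\Phi_k\|_\infty = 2^{k+1}M\,\|\Phi_k\|_\infty\,\binom{k+n-1}{n}(4C_{\B}t)^n.
\]
For any fixed $k$ the binomial coefficient is a polynomial of degree $k-1$ in $n$, so the right-hand side vanishes as $n \to \infty$ whenever $4C_{\B} t < 1$. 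Thus $\mu_k(t)=0$ on $[0,T_{*})$ with $T_{*}:=(4C_{\B})^{-1}$, and since both solutions then agree at any $t_0<T_{*}$ and still belong to $L^{\infty}([t_0,T),\mathcal{X})$, restarting the argument at $t_0=T_{*}/2$ and iterating a finite number of times propagates uniqueness to all of $[0,T)$.

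The delicate point is the balance between the factorial growth $(k+n-1)!/(k-1)!$ of the iterated operator norm and the simplex factor $t^n/n!$: the two compensate only because the bounded-kernel hypothesis forces $\|\bm{\Gamma}^{\alpha}_{j;j+1}\| = O(j)$, yielding the manageable constant $4C_{\B}$ inside the geometric factor. As soon as $\Sigma_{\B}$ is unbounded (e.g.\ hard spheres), the individual norms of the $\bm{\Gamma}^{\alpha}_{j;j+1}$ involve growing velocity moments which are not controlled by the $\mathcal{X}$-norm alone, and the present scheme breaks down; this is precisely the obstruction discussed in Section~\ref{sec:HS}.
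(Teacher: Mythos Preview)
Your proof is correct and follows the same Duhamel-iteration scheme as the paper: difference the two solutions, iterate via the operators $\bm{\Gamma}^{\alpha}_{j;j+1}$, exploit the bound $\|\bm{\Gamma}^{\alpha}_{j;j+1}\|_{\mathrm{op}}\leq (2-\alpha)jC_{\B}$ (this is the paper's Lemma~\ref{lem:itera}), and balance the factorial growth against the simplex volume to get a geometric series that vanishes for small $t$, then bootstrap.

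The one genuine difference is in how you bound $\|\mu_{k+n}(s)\|_{1}$. The paper first tests the ABH against $\Phi_k=\mathbf{1}$ and uses nonnegativity of the solutions to derive the a~priori estimate $\|\nu_{k}^{(i)}(t)\|_{1,k}\leq \|\nu_{k}(0)\|_{1,k}\leq \varrho_0^{k}$, thereby invoking hypothesis~\eqref{hypnu0}. You bypass this step entirely and pull the bound $\|\nu_j^{i}(s)\|_{1,j}\leq 2^{j}M$ directly from the $L^{\infty}([0,T),\mathcal{X})$ membership that is already part of Definition~\ref{def:weaksol}. Both routes yield a geometric-in-$k$ bound, so the rest of the argument is identical; your version is slightly more economical and in fact shows that neither the nonnegativity of the initial data nor the bound~\eqref{hypnu0} is needed for uniqueness --- only the solution class $L^{\infty}([0,T),\mathcal{X})$ matters. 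The paper's version, in exchange, gives a uniqueness interval explicitly in terms of $\varrho_0$ rather than the unspecified constant $M$. Your preliminary extension of~\eqref{sol:BHweak2} to $\C_b$ test functions is also a point the paper glosses over when it plugs in $\Phi_k=\mathbf{1}$.
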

The proof will be based on the following Lemma
\begin{lem}\label{lem:itera}
Assume that $\Sigma_{\B}$ is bounded. Then, for any $k \geq 1$ and any $\alpha \in [0,1)$ the operator $\Gk^{\alpha}\::\:\C_{b}(\R^{3k}) \to \C(\R^{3(k+1)})$ defined in Remark \ref{rem:Gk} by
$$\Gk^{\alpha} \Phi_{k}(\bm{V}_{k+1})=\sum_{i=1}^{k}\int_{\S}\B(v_i-v_{k+1},\omega) \left[(1-\alpha)\Phi_{k}(\widehat{\bm{V}_{k}}^{i,k+1})-\Phi_{k}(\bm{V}_{k})\right]\d\omega, \quad \forall \Phi_{k} \in \C_{b}(\R^{3k})$$
has its range in $\C_{b}(\R^{3k+1})$ and is bounded, i.e. 
$$\Gk^{\alpha} \in \mathscr{B}(\C_{b}(\R^{3k}),\C_{b}(\R^{3(k+1)}))$$
with
\begin{equation}\label{eq:stimaGk}
\|\Gk^{\alpha}\|_{\mathrm{op}}:=\|\Gk^{\alpha}\|_{\mathscr{B}(\C_{b}(\R^{3k}),\C_{b}(\R^{3(k+1)}))} \leq (2-\alpha)k\|\Sigma_{\B}\|_{\infty} \qquad \forall k \geq 1.\end{equation}
\end{lem}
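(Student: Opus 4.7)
The plan is to establish the bound by a direct pointwise estimate, then argue continuity via dominated convergence. The statement concerns the operator
\[
\Gk^{\alpha}\Phi_k(\bm{V}_{k+1})=\sum_{i=1}^{k}\int_{\S}\B(v_i-v_{k+1},\omega)\bigl[(1-\alpha)\Phi_k(\widehat{\bm{V}_k}^{i,k+1})-\Phi_k(\bm{V}_k)\bigr]\d\omega,
\]
so everything reduces to estimating each summand and summing.

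First I would fix $\Phi_k\in\C_b(\R^{3k})$ and bound the integrand uniformly: since $|\Phi_k(\widehat{\bm{V}_k}^{i,k+1})|\leq\|\Phi_k\|_\infty$ and $|\Phi_k(\bm{V}_k)|\leq\|\Phi_k\|_\infty$, one has
\[
\bigl|(1-\alpha)\Phi_k(\widehat{\bm{V}_k}^{i,k+1})-\Phi_k(\bm{V}_k)\bigr|\leq(2-\alpha)\|\Phi_k\|_\infty
\]
for every $\omega\in\S$ and every $\bm{V}_{k+1}\in\R^{3(k+1)}$. Integrating against $\B(v_i-v_{k+1},\omega)$ in $\omega$ and invoking Assumption \ref{hyp1} under the extra hypothesis \eqref{eq:boudedkernel} gives
\[
\int_{\S}\B(v_i-v_{k+1},\omega)\,\bigl|(1-\alpha)\Phi_k(\widehat{\bm{V}_k}^{i,k+1})-\Phi_k(\bm{V}_k)\bigr|\,\d\omega\leq(2-\alpha)\|\Sigma_{\B}\|_\infty\|\Phi_k\|_\infty.
\]
Summing over $i=1,\ldots,k$ and taking the supremum over $\bm{V}_{k+1}\in\R^{3(k+1)}$ immediately yields $\|\Gk^{\alpha}\Phi_k\|_\infty\leq(2-\alpha)k\|\Sigma_{\B}\|_\infty\|\Phi_k\|_\infty$, which is the quantitative bound \eqref{eq:stimaGk} (once we know $\Gk^{\alpha}\Phi_k$ is continuous and hence its sup-norm makes sense).

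The remaining point is continuity of $\Gk^{\alpha}\Phi_k$ on $\R^{3(k+1)}$. For each fixed $i$, the map
\[
(\bm{V}_{k+1},\omega)\longmapsto \B(v_i-v_{k+1},\omega)\bigl[(1-\alpha)\Phi_k(\widehat{\bm{V}_k}^{i,k+1})-\Phi_k(\bm{V}_k)\bigr]
\]
is continuous in $\bm{V}_{k+1}$ for almost every $\omega\in\S$ (since $\Phi_k$ is continuous and the deflection map $v_i\mapsto v_i-[(v_i-v_{k+1})\cdot\omega]\omega$ depends continuously on $(v_i,v_{k+1},\omega)$, while the collision kernel $\B(\cdot,\omega)$ may be taken as continuous for the models of interest), and it is dominated uniformly in $\bm{V}_{k+1}$ by the integrable majorant $\|\B(\cdot,\omega)\|$ with $\int_{\S}\B(z,\omega)\d\omega\leq\|\Sigma_{\B}\|_\infty$. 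Lebesgue's dominated convergence theorem then gives $\Gk^{\alpha}\Phi_k\in\C(\R^{3(k+1)})$, combined with the $L^\infty$ bound obtained above we conclude $\Gk^{\alpha}\Phi_k\in\C_b(\R^{3(k+1)})$ with the announced norm.

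I expect no serious obstacle here: the boundedness bound is a two-line computation from the pointwise estimate, and the only mildly delicate issue is continuity, which only requires invoking dominated convergence together with the standard regularity of the post-collisional map. The really substantive use of this lemma will come in the uniqueness argument to follow (via the iterated norms of compositions $\bm{\Gamma}_{k;k+1}^{\alpha}\bm{\Gamma}_{k+1;k+2}^{\alpha}\cdots$), which is presumably the context motivating the factor $k$ on the right-hand side of \eqref{eq:stimaGk}.
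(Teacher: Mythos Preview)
Your proposal is correct and follows essentially the same route as the paper: the paper's proof is exactly the two-line pointwise estimate you wrote (bound $|(1-\alpha)\Phi_k(\widehat{\bm V}_k^{i,k+1})-\Phi_k(\bm V_k)|\le(2-\alpha)\|\Phi_k\|_\infty$, integrate against $\B$ in $\omega$, sum over $i$). In fact you are slightly more careful than the paper, which does not spell out the continuity argument and simply records the sup-norm bound as ``the desired result''.
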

\begin{proof} Let $k \geq 1$ be given and $\Phi_{k} \in \C_{b}(\R^{3k}))$ be fixed. One has
\begin{equation*}\begin{split}
\left|\Gk^{\alpha}\Phi_{k}(\bm{V}_{k+1})\right|
&\leq \sum_{i=1}^{k}\int_{\S}\B(v_{i}-v_{k+1},\omega)\left|(1-\alpha)\Phi_{k}(\widehat{\bm{V}_{k}}^{i,k+1})-\Phi_{k}(\bm{V}_{k})\right|\d\omega\\
&\leq (2-\alpha)\|\Phi_{k}\|_{\infty}\sum_{i=1}^{k}\int_{\S}\B(v_{i}-v_{k+1},\omega)\d \omega, \qquad \forall \bm{V}_{k+1} \in \R^{3(k+1)}\end{split}\end{equation*}
where $\|\Phi_{k}\|_{\infty}=\sup_{\bm{V}_{k} \in \R^{3k}}\left|\Phi_{k}(\bm{V}_{k})\right|.$ Therefore, under the assumption that $\Sigma_{\B}$ is bounded by some positive constant $C_{\B}$ we deduce that
$$\|\Gk^{\alpha}\Phi_{k}\|_{\infty}=\sup_{\bm{V}_{k+1} \in \R^{3k+3}}\left|\Gk^{\alpha}\Phi_{k}(\bm{V}_{k+1})\right|\leq (2-\alpha)k\,C_{\B}\|\Phi_{k}\|_{\infty}$$
which is the desired result.\end{proof}

\begin{proof}[Proof of Theorem \ref{unique}] We adapt here the proof of \cite{fede-esco} given for coagulation equation. Let $\bm{\nu}^{\infty}(0)=\left(\bm{\nu}_{k}(0)\right)_{k} \in \mathcal{X}$ be fixed and let $\bm{\nu}^{(1)},\bm{\nu}^{(2)}$ be two solutions to \eqref{sol:BHweak} associated to the initial datum $\bm{\nu}^{\infty}(0).$ Using the formulation of the ABH given in \eqref{sol:BHweak2} with $\Phi_{k}=\mathbf{1}$ one sees that, for any $T >0$,
$$\langle \nu_{k}^{(i)}(t),\mathbf{1}\rangle_{k}=\langle \nu_{k}(0),\mathbf{1}\rangle_{k} + \int_{0}^{t}\langle \nu_{k+1}^{(i)}(s),\Gk^{\alpha}\mathbf{1}\rangle_{k+1}\d s, \qquad t \in [0,T)$$
with $\Gk^{\alpha}\mathbf{1}(\bm{V}_{k+1})=-\alpha\sum_{i=1}^{k}\B(v_{i}-v_{k+1},\omega)\d\omega \leq 0$ for any $\bm{V}_{k+1} \in \R^{3k+1}$. Thus, for any $t \geq 0$,
$$\langle \nu_{k}^{(i)}(t),\mathbf{1}\rangle_{k} \leq \langle \nu_{k}(0),\mathbf{1}\rangle_{k}, \qquad \forall t \geq 0, i=1,2.$$
Using that the measures $\nu_{k}^{(1)}(t)$ and $\nu_{k}^{(2)}(t)$ are nonnegative,  the following estimate for the total variation norm $\|\cdot\|_{1,k}$ follows:
\begin{equation}\label{eq:TVnu}
\|\nu_{k}^{(i)}(t)\|_{1,k} \leq \|\nu_{k}(0)\|_{1,k} \qquad \forall t \geq 0, \qquad \forall k \geq 1.\end{equation}
On the other hand, using again the formulation \eqref{sol:BHweak2}, it holds
$$\langle \nu_k^{(1)}(t)-\nu_{k}^{(2)}(t), \Phi_{k}\rangle_{k}=  \int_{0}^{t} \langle \nu_{k+1}^{(1)}(s)-\nu_{k+1}^{(2)}(s), \Gk^{\alpha}\Phi_{k} \rangle_{k+1}\d s,$$
  for any $t \in [0,T),$ $k \geq 1$ and $\Phi_{k} \in \C_{0}(\R^{3k}).$ We set $\bm{\beta}_{k}(t)=\nu_{k}^{(1)}(t)-\nu_{k}^{(2)}(t)$ for any $t \in [0,T)$ and any $k \geq 1$, and consider  a sequence $\left\{\Phi_{k}\right\}_{k \geq 1}$ with $\Phi_{k} \in \C_{0}(\R^{3k})$ for any $k \geq 1.$   One has
\begin{equation}\label{eq:betak}
\langle \bm{\beta}_{k}(t),\Phi_{k}\rangle_{k}=\int_{0}^{t}\langle \bm{\beta}_{k+1}(s)\,,\,\Gk^{\alpha}\Phi_{k}\rangle_{k+1}\d s, \qquad t \in [0,T), \qquad \forall k \geq 1.\end{equation}
According to Lemma \ref{lem:itera}, $\Gk^{\alpha} \in \mathscr{B}(\C_{b}(\R^{3k}),\C_{b}(\R^{3k+1}))$ which allows to iterate the above formula to get
$$\langle \bm{\beta}_{k}(t),\Phi_{k}\rangle_{k}=\int_{0}^{t}\d s \int_{0}^{s}\langle \beta_{k+2}(s_{1})\,,\,\bm{\Gamma}_{k+1;k+2}^{\alpha}\,\Gk^{\alpha}\Phi_{k}\rangle_{k+2}\d s_{1}$$
and, iterating again, for any $n \geq 1$, we get
\begin{multline*}
\langle \bm{\beta}_{k}(t),\Phi_{k}\rangle_{k}=\\
\int_{0}^{t}\d s \int_{0}^{s}\d s_{1}\ldots\int_{0}^{s_{n}}
\langle \beta_{k+n+1}(s_{n})\,,\,
\bm{\Gamma}_{k+n;k+n+1}^{\alpha}\bm{\Gamma}_{k+n-1;k+n}^{\alpha}\ldots \Gk^{\alpha}\Phi_{k}\rangle_{k+n+1}\d s_{n+1}.\end{multline*}
Therefore, using now \eqref{eq:stimaGk}, we have
\begin{multline*}
\left\|\bm{\Gamma}_{k+n;k+n+1}^{\alpha}\bm{\Gamma}_{k+n-1;k+n}^{\alpha}\ldots \Gk^{\alpha}\Phi_{k}\right\|_{\C_{b}(\R^{3(k+n+1)})} 
\\
\leq \left((2-\alpha)\|\Sigma_{\B}\|_{\infty}\right)^{n+1}\,\frac{(k+n)!}{(k-1)!}\|\Phi_{k}\|_{\C_{b}(\R^{3k})}
\end{multline*}
from which we deduce that
$$\left|\langle \bm{\beta}_{k}(t),\Phi_{k}\rangle_{k}\right| \leq \frac{t^{n+2}}{(n+2)!}\left((2-\alpha)\|\Sigma_{\B}\|_{\infty}\right)^{n+1}\,\frac{(k+n)!}{(k-1)!}\|\Phi_{k}\|_{\C_{b}(\R^{3k})}\,\sup_{s\in [0,t]}\left\|\bm{\beta}_{k+n+1}(s)\right\|_{1,k+n+1}$$
where $\|\cdot\|_{1,k+n+1}$ is the total variation norm on $\mathscr{M}(\R^{3(k+n+1)})$. Thus, the sequence $\{\bm{\beta}_{k}\}_{k}$ satisfies
\begin{equation}\label{Eq:betak-n}
\left\|\bm{\beta}_{k}(t)\right\|_{1,k} \leq \frac{t^{n+2}}{(n+2)!}\left((2-\alpha)\|\Sigma_{\B}\|_{\infty}\right)^{n+1}\,\frac{(k+n)!}{(k-1)!}\,\sup_{s\in [0,T)}\left\|\bm{\beta}_{k+n+1}(s)\right\|_{1,k+n+1},\end{equation}
for any $t \in [0,T),$ and for any $k,n \geq 1.$ Using now \eqref{eq:TVnu}, we have
\begin{equation*}\begin{split}
\sup_{s \in [0,T)}\left\|\bm{\beta}_{k+n+1}(s)\right\|_{1,k+n+1} &\leq \sup_{s \in [0,T)}\left(\|\nu_{k+n+1}^{(1)}(s)\|_{1,k+n+1} + \|\nu_{k+n+1}^{(2)}(s)\|_{1,k+n+1}\right)\\
 &\leq 2\|\nu_{k+n+1}(0)\|_{1,k+n+1} \leq 2\varrho_{0}^{n+1+k}\end{split}\end{equation*}
where we used  Assumption \eqref{hypnu0} for the last estimate. Inserting this in \eqref{Eq:betak-n} this yields finally
$$\left\|\bm{\beta}_{k}(t)\right\|_{1,k} \leq 2\frac{t^{n+2}}{(n+2)!}\left((2-\alpha)\|\Sigma_{\B}\|_{\infty}\right)^{n+1}\,\frac{(k+n)!}{(k-1)!}\,\varrho_{0}^{k+n+1}, \qquad \forall k \geq 1, \forall n \geq 1, \qquad t \in [0,T).$$
Notice that (see \cite[Eq. (3.11)]{fede-esco})
$$\frac{(k+n)!}{(n+2)!(k-1)!} \leq \frac{2^{k+n}}{n+2} \leq 2^{k+n-1}, \qquad \forall k,n \geq 1.$$
Hence, picking $T>0$ small enough so that 
$$a:=2\varrho_{0}T\,(2-\alpha)\|\Sigma_{\B}\|_{\infty} < 1,$$
we get
$$\sup_{t \in [0,T)}\|\bm{\beta}_{k}(t)\|_{1,k} \leq 2^{k-1}T\,\varrho_{0}^{k}\,a^{n+1} \qquad k \geq 1, n \geq 1$$
and, letting $n \to \infty$ this shows that
$$\sup_{t \in [0,T)}\|\bm{\beta}_{k}(t)\|_{1,k}=0,$$
yielding the uniqueness of the solution to ABH on the interval $[0,T)$ for this peculiar choice of $T=T(\varrho_{0},\alpha).$ However, this procedure may be iterated since $\bm{\beta}_{k+1}(s)=0$ for any $s \in [0,T),$ $k \geq 1$, and it allows to  write Eq. \eqref{eq:betak} as
\begin{equation*}
\langle \bm{\beta}_{k}(t),\Phi_{k}\rangle_{k}=\int_{T}^{t}\langle \bm{\beta}_{k+1}(s)\,,\,\Gk^{\alpha}\Phi_{k}\rangle_{k+1}\d s, \qquad t \geq T, \qquad \forall k \geq 1.\end{equation*}
The same computations that lead to \eqref{Eq:betak-n} now yield
$$\left\|\bm{\beta}_{k}(t)\right\|_{1,k} \leq \frac{t^{n+2}}{(n+2)!}\left((2-\alpha)\|\Sigma_{\B}\|_{\infty}\right)^{n+1}\,\frac{(k+n)!}{(k-1)!}\,\sup_{s\in [0,2T)}\left\|\bm{\beta}_{1,k+n+1}(s)\right\|_{1,k+n+1},$$
for any $t \in [0,2T)$. Then, arguing as before, 
using \eqref{eq:TVnu} which is valid globally, we obtain that
$$\sup_{t \in [0,2T)}\|\bm{\beta}_{k}(t)\|_{1,k} \leq 2^{k-1}T\,\varrho_{0}^{k}\,a^{n+1}, \qquad k \geq 1,\,n\geq 1.$$
Letting $n \to \infty$, we get $\sup_{t \in [0,2T)}\|\bm{\beta}_{k}(t)\|_{1,k}=0$ and, iterating again, we obtain that, for any $k \geq 1$
$$\sup_{t \geq 0}\|\bm{\beta}_{k}(t)\|_{1,k}=0$$
which proves the global uniqueness of solution to ABH.\end{proof}

We now have all we need to prove Theorem \ref{main1}. 
\begin{proof}[Proof of Theorem \ref{main1}] Combining Proposition \ref{theo:existence} and Theorem \ref{unique}, $\bm{\nu}^{\infty}=\{\nu_{k}\}_{k} \in L^{\infty}([0,\infty),\mathcal{X})$ given by 
$$\nu_{k}(t)= f(t)^{\otimes k}, \qquad t \geq 0, \qquad k \geq 1$$ 
is \emph{the unique} solution to the ABH in the sense of Definition \ref{def:weaksol} associated to the initial datum $\nu_{k}(0)=f_{0}^{\otimes k}$, $k \geq 1$. We conclude then with Theorem \ref{main}.
\end{proof}

\section{Some perspectives towards the extension to hard-sphere collision kernel}\label{sec:HS}

As already discussed in the Introduction, the restriction to kernels such that $\Sigma_{\B}$ is bounded -- though not restrictive to the Maxwellian case -- is a severe restriction on our result. The most challenging case would consist in showing the propagation of chaos (Theorem \ref{main1}) for hard-sphere interactions. 
In this section we propose some conjectures in this direction, presenting some intuition for the annihilated model \eqref{BE} and explaining what appears 
to be the major difficulty, as well as some possible paths to overcome it. 

\subsection{Uniqueness for ABH: Failure of De Finetti's approach} \label{ss:DeFinFail}

As we emphasized in the previous sections, the main obstruction in proving the propagation of chaos for more general kernels than the ones satisfying  \eqref{eq:boudedkernel} is in the proof of the uniqueness of solutions to the Annihilated Boltzmann Hierarchy \eqref{sol:BHweak}. Indeed, to recover the propagation of chaos result,  we need to prove the uniqueness of solutions of \eqref{sol:BHweak} with an initial datum of the form $\nu_{k}(0)=f_{0}^{k}$, $k \geq 1$. In other words, the propagation of chaos would follow from the following result which we formulate here as a conjecture:
\begin{conj}\label{conj:unique}
Assume that the collision kernel $\B$ is of the form: 
\begin{equation}\label{eq:Bhardsp}
\B(v-v_{*},\omega)=\frac{1}{2\pi}\left|\left(v-v_{*}\right) \cdot \omega\right| \; \text{for} \;(v,v_{*}) \in \R^{3}\times\R^{3},\omega \in \S
\end{equation}
and denote by $\Q$ the corresponding Boltzmann operator. Let $f_{0}$ be a non-negative probability distribution satisfying \eqref{normalization}. Then, $\bm{\nu}^{\infty}=\{\nu_{k}\}_{k} \in L^{\infty}([0,\infty),\mathcal{X})$ with 
$$\nu_{k}(t)= f(t)^{\otimes k}, \qquad t \geq 0, \qquad k \geq 1$$ 
is \emph{the unique} solution to the ABH in the sense of Definition \ref{def:weaksol} with initial datum $\nu_{k}(0)=f_{0}^{\otimes k}$ for all $k \geq 1$ where $f(t)$ is the unique solution to \eqref{BE} with initial datum $f(0)=f_{0}$.\end{conj}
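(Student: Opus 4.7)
The plan is to adapt the iterated Duhamel scheme of Theorem \ref{unique} to the hard-sphere kernel by working in spaces with polynomial velocity weights that absorb the unboundedness of $\Sigma_{\B}(v-v_{*})=|v-v_{*}|$. For $p\geq 0$, I would introduce the weighted total-variation norms
\begin{equation*}
\|\nu_{k}\|_{1,k,p}:=\int_{\R^{3k}}\bigl(1+\E(\bm{V}_{k})\bigr)^{p/2}\,|\nu_{k}|(\d\bm{V}_{k}),
\end{equation*}
and run the iteration in these spaces. By \cite{BaLo}, the solution $f(t)$ of \eqref{BE} propagates polynomial (and even exponential) velocity moments whenever $f_{0}$ does, so the tensorized solution satisfies $\|f(t)^{\otimes k}\|_{1,k,p}\leq M_{p}(t)^{k}$ with $M_{p}(t)=\int(1+|v|^{2})^{p/2}f(t,v)\d v$.

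\textbf{Step 1 (propagation of moments).} First I would extend the class of admissible test functions in \eqref{sol:BHweak} beyond $\C_{0}(\R^{3k})$, as was already done in Corollary \ref{cor:converg2} via the spaces $\mathcal{W}_{k}^{s}$, to cover the symmetric weights $\Phi_{k}(\bm{V}_{k})=(1+\E(\bm{V}_{k}))^{p/2}$, handled rigorously by monotone truncation. Plugging into \eqref{sol:BHweak2} and arguing as in Proposition \ref{prp:energy}, the annihilation part of $\Gk^{\alpha}\Phi_{k}$ provides a non-positive damping contribution, while the elastic exchange part is controlled by a Povzner-type inequality. Exploiting the symmetry of the $\nu_{k}(t)$ (so that only pair-interactions intervene) together with the a priori mass bound $\|\nu_{k}(t)\|_{1,k}\leq 1$, I would try to close a Gronwall-type inequality and establish the factorized bound
\begin{equation*}
\|\nu_{k}(t)\|_{1,k,p}\;\leq\; M_{p}(t)^{k},\qquad k\geq 1,\;t\in[0,T),
\end{equation*}
for any weak solution $\{\nu_{k}(t)\}_{k}$ of ABH with initial datum $f_{0}^{\otimes k}$.

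\textbf{Step 2 (iterated Duhamel in weighted norms).} Set $\bm{\beta}_{k}(t):=\nu_{k}^{(1)}(t)-\nu_{k}^{(2)}(t)$ for two solutions sharing the same initial datum. Equation \eqref{eq:betak} still holds and iterating it $n$ times as in the proof of Theorem \ref{unique} represents $\langle\bm{\beta}_{k}(t),\Phi_{k}\rangle_{k}$ as an $(n+1)$-fold time integral of $\bm{\beta}_{k+n+1}(s_{n})$ paired with the composition $\bm{\Gamma}_{k+n;k+n+1}^{\alpha}\cdots\Gk^{\alpha}\Phi_{k}$. Since
\begin{equation*}
|\Gk^{\alpha}\Phi_{k}(\bm{V}_{k+1})|\leq (2-\alpha)\|\Phi_{k}\|_{\infty}\sum_{i=1}^{k}|v_{i}-v_{k+1}|,
\end{equation*}
an induction shows that the $n$-fold composition has polynomial growth of degree $n+1$ in the velocities, with combinatorial prefactor $\tfrac{(k+n)!}{(k-1)!}$. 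Pairing with $\bm{\beta}_{k+n+1}(s_{n})$, applying the moment bound of Step 1 to each $\nu_{k+n+1}^{(i)}$, and invoking $\tfrac{(k+n)!}{(n+2)!(k-1)!}\leq 2^{k+n-1}$ as in \cite[Eq. (3.11)]{fede-esco}, one obtains
\begin{equation*}
|\langle\bm{\beta}_{k}(t),\Phi_{k}\rangle_{k}|\;\leq\;\|\Phi_{k}\|_{\infty}\,C^{k}\,\frac{\bigl(c\,t\,M_{n+1}(T)\bigr)^{n+2}}{(n+2)!},
\end{equation*}
for absolute constants $c,C$. If $f_{0}$ has finite exponential moments, then by \cite{BaLo} $M_{p}(T)\leq C_{T}^{p}(p!)^{1/2}$, and the right-hand side vanishes as $n\to\infty$ for sufficiently small $t$, yielding local uniqueness. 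Global uniqueness would then follow by iteration on consecutive short intervals, thanks to the a priori mass bound $\|\nu_{k}^{(i)}(t)\|_{1,k}\leq 1$ (analogue of \eqref{eq:TVnu}).

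The hard part will be Step 1. Unlike for the tensorized solution, a generic weak solution to ABH is not known a priori to factorize, and propagating weighted moments uniformly in $k$ is a genuine BBGKY moment-closure problem: the $k$-th equation involves $\nu_{k+1}$, so one has to avoid a cascade of dependencies that diverges with $k$. For the classical Boltzmann hierarchy this is circumvented by De Finetti's theorem, which represents exchangeable probability measures as mixtures of tensor products; as emphasized in Section \ref{sec:HS}, this is precisely the tool that is unavailable here, since annihilation destroys mass conservation and each $\nu_{k}(t)$ is only a sub-probability measure. Overcoming this would plausibly require either an analogue of De Finetti for sub-probability exchangeable measures (e.g.\ obtained by disintegrating $\{\nu_{k}(t)\}_{k}$ against the total-mass process), or a novel moment-closure argument exploiting the dissipative structure of annihilation to gain damping at each level of the hierarchy.
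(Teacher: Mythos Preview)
This statement is labelled \textbf{Conjecture} in the paper and is \emph{not} proved there; Section~\ref{sec:HS} is devoted precisely to explaining why it is open and to sketching possible lines of attack (failure of De Finetti, perturbation around $\alpha=0$, self-similar variables). So there is no ``paper's proof'' to compare with---your proposal is an attempt at an open problem.

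You correctly isolate Step~1 as the crux, and your own closing paragraph identifies why it is essentially equivalent to the obstruction the paper describes: propagating a \emph{factorised} moment bound $\|\nu_{k}(t)\|_{1,k,p}\le M_{p}(t)^{k}$ through the hierarchy requires closing the $k$-th equation without losing control to level $k+1$. For the conservative Boltzmann hierarchy this closure is achieved \emph{only} via De Finetti (cf.\ \cite{caprino}), and the paper explains in Section~\ref{ss:DeFinFail} that annihilation breaks both mass conservation and the compatibility $\Pi_{\ell}\nu_{\ell+1}=\nu_{\ell}$, so that route is blocked. Your suggestion of a sub-probability De Finetti or a dissipative moment closure is interesting but is a research programme, not a proof.

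There is also a genuine gap in Step~2. After $n{+}1$ applications of $\Gk^{\alpha}$ the test function has polynomial growth of total degree $n{+}1$, so pairing with $\bm{\beta}_{k+n+1}$ requires a moment of order $n{+}1$ of $\nu_{k+n+1}^{(i)}$. Using the Step~1 bound you obtain at best something like $M_{n+1}(T)^{k+n+1}$ (not $M_{n+1}(T)^{n+2}$ as written), and with exponential-moment data $M_{p}\lesssim C^{p}(p!)^{1/2}$ the resulting expression behaves like $((n{+}1)!)^{(k+n+1)/2}/(n{+}2)!$, which diverges as $n\to\infty$. Even with your more optimistic exponent $n{+}2$ the ratio $((n{+}1)!)^{(n+2)/2}/(n{+}2)!$ still diverges. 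In short, the naive iterated-Duhamel series does not converge for hard spheres from moment bounds alone; this is exactly why, already in the conservative case, one is forced to use De Finetti rather than the direct series argument that works in Theorem~\ref{unique} for bounded $\Sigma_{\B}$. Finally, note that you have tacitly strengthened the hypothesis on $f_{0}$ from \eqref{normalization} (third moment) to finite exponential moments; even so the series does not close.
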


We notice that we rigorously proved this uniqueness result in the case of $\Sigma_{\B}$ bounded, but we have not been able yet to prove such uniqueness for kernels $\B$ such that $\Sigma_{\B}$ is unbounded as, for instance, in the case of hard-sphere interactions \eqref{eq:Bhardsp}.

We recall that in the classical context, when there is no annihilation in the model (i.e. $\alpha=0$), this problem found a very elegant solution (which can be traced back to \cite{caprino}) using as a powerful tool the De Finetti's Theorem.  
(See also \cite{spohn} for a first use of De Finetti's theorem in a kinetic framework). 
Indeed, when $\alpha=0$, the stochastic particle model (Kac's model), as well as the corresponding limit hierarchy (Boltzmann Hierarchy), is conservative. In particular, any solution $\bm{\nu}^{\infty}$ to the Boltzmann hierarchy satisfies the following 
properties:
\begin{enumerate}
\item Conservation of mass, i.e.
$$\int_{\R^{3\ell}}\nu_{\ell}(t,\d\Vk)=\int_{\R^{3\ell}}\nu_{\ell}(0,\Vk)\d\Vk=\int_{\R^{3\ell}}f_{0}^{\ell}(\Vk)\d\Vk=\varrho_{0}^{\ell}, \qquad \forall t \geq 0,\, \ell \geq 1$$
and there is no loss of generality then to assume that $\varrho_{0}=1.$
\item Compatibility, i.e
$$\Pi_{\ell}\nu_{\ell+1}(t)=\nu_{\ell}(t), \qquad \qquad \forall t \geq 0,\, \ell \geq 1$$
where we recall that $\Pi_{\ell}$ is the marginalisation operator.\end{enumerate}
In the same way, the conservation of mass and the compatibility property are satisfied by the (rescaled) correlation functions, i.e. $\int_{\R^{3\ell}}f_{\ell}^{\varepsilon}(t,\Vk)\d\Vk=\varrho_{0}^{\ell}$ and 
\begin{equation}
\label{eq:compatibile}\Pi_{\ell}f_{\ell+1}^{\varepsilon}(t,\Vk)=f^{\varepsilon}_{\ell}(t,\Vk), \qquad \qquad \forall t \geq 0,\:\:\varepsilon >0, \: \ell \geq 1.\end{equation}

As a consequence, by virtue of De Finetti's Theorem (or Hewitt-Savage's Theorem \cite{hewitt}), one can associate any solution $\bm{\nu}^{\infty}(t)$ of the Boltzmann Hierarchy to a unique probability measure $\pi_{t}$ over the space of probability measures $\mathsf{P}(\R^{3})$ such that
$$\int_{\mathsf{P}(\R^{3})}\mathsf{p}^{\otimes \ell}\,\pi_{t}(\d \mathsf{p})=\nu_{\ell}(t), \qquad \forall \ell \geq 1, \qquad t \geq 0$$
where, for any $\mathsf{p} \in \mathsf{P}(\R^{3})$, $\mathsf{p}^{\otimes \ell}$ is the probability measure over $\R^{3\ell}$ defined by 
$$\int_{\R^{3\ell}}\Phi_{\ell}(\Vk)\mathsf{p}^{\otimes \ell}(\d\Vk)=\prod_{j=1}^{\ell}\int_{\R^{3}}\phi_{j}(v)\mathsf{p}(\d v), \qquad \forall \Phi_{\ell}=\overset{\ell}{\underset{i=1}{\otimes}}\phi_{i}  \in \mathscr{C}_{b}(\R^{3\ell})$$
i.e. $\Phi_{\ell}(\Vk)=\phi_{1}(v_{1})\phi_{2}(v_{2})\ldots\phi_{\ell}(v_{\ell})$, $\phi_{j} \in \mathcal{C}_{b}(\R^{3}).$ Notice that the class of $\Phi_{\ell}$ of this form is dense in $\mathscr{C}_{b}(\R^{3\ell})$ thanks to Stone-Weierstrass theorem. 

In particular, the mapping $t \geq 0 \longmapsto \pi_{t} \in \mathsf{P}(\mathsf{P}(\R^{3}))$ is a solution to the following hierarchy of equations (see Eq. \eqref{sol:BHweak2}):
\begin{multline}\label{eq:Hierpi}
\int_{\mathsf{P}(\R^{3})}\langle \mathsf{p}^{\otimes k}, \Phi_{k}\rangle_{k}\pi_{t}(\d \mathsf{p})
=\langle f_{k}(0)^{\otimes k}, \Phi_{k}\rangle_{k}+
  \int_{0}^{t}\d s \int_{\mathsf{P}(\R^{3})} \langle \mathsf{p}^{\otimes (k+1)}, \Gk^{0}\Phi_{k} \rangle_{k+1}\d \pi_{s}(\mathsf{p}) \\
  \qquad \forall t \in [0,T),\quad k \geq 1, \qquad \Phi_{k}=\phi_{1}\otimes \cdots \otimes \phi_{k} \in \C_{0}(\R^{3k})
\end{multline}
where now $\Gk^{0}$ is the operator corresponding to $\alpha=0.$ Then, using again De Finetti's Theorem, it is possible to show that the above  hierarchy of equations \eqref{eq:Hierpi} has a \emph{unique} solution $t \geq 0\longmapsto \pi_{t} \in \mathsf{P}(\R^{3})$  and this yields the desired uniqueness for the Boltzmann Hierarchy in the case of no annihilation, i.e. $\alpha=0$. (See \cite{caprino,MM} for details).

As soon as we introduce the kinetic annihilation, i.e.~we consider $\alpha \in (0,1]$, it appears very difficult to adapt in a direct way the approach proposed for the Boltzmann Hierarchy. We first observe that solutions to ABH are not probability measures because of the loss of mass. We could expect that a suitable adaptation of De Finetti's Theorem may hold for compatible sequences in $\mathcal{X}$ rather than in $\prod_{k=1}^{\infty}\mathsf{P}(\R^{3k})$ provided we have some control of the dissipation of total mass. Nonetheless, 
even in this case and \emph{because of the mass dissipation}, the (rescaled) correlation functions $f_{\ell}^{\varepsilon}(t,\Vk)$ are  \emph{not even compatible}, i.e. \eqref{eq:compatibile} is not valid for $\alpha \neq 0$. Consequently,  we can not assume \emph{a priori} that solutions to the ABH are compatible \footnote{Notice that the special solution $\nu_{k}(t)=f(t)^{\otimes k}$ is compatible and, since we expect it to be the unique solution, we believe that solutions to \eqref{sol:BHweak} are compatible.}. Therefore, it appears not clear at all how to adapt De Finetti's argument to the case of \emph{non compatible} sequences. 

\subsection{Perturbative argument: the case $\alpha \simeq 0$} 
Due to the difficulty raised from the previous discussion, it could be feasible that Conjecture \ref{conj:unique} is too strong. 
It would be satisfactory to prove propagation of chaos not globally in time but on a suitable time interval (as it appears for instance in Lanford's result for the non homogeneous Boltzmann equation, see \cite{simonella} and the references therein) and for a \emph{moderate} annihilation. This leads us to reformulate Conjecture \ref{conj:unique} in a weaker form as follows.

\begin{conj}\label{conj:unique1}
Assume that the collision kernel $\B$ is of the form: 
\begin{equation*}
\B(v-v_{*},\omega)=\frac{1}{2\pi}|\left(v-v_{*}\right) \cdot \omega \; \text{for} \;(v,v_{*}) \in \R^{3}\times\R^{3},\omega \in \S
\end{equation*}
and denote as $\Q$ the corresponding Boltzmann operator. Let $f_{0}$ be a non-negative probability distribution satisfying \eqref{normalization}. Then, there exists $\alpha_{0} \in (0,1]$ such that, for any $\alpha \in (0,\alpha_{0}),$ there is $T_{\alpha} >0$ such that the Annihilated Boltzmann Hierarchy \eqref{sol:BHweak} admits a \emph{unique} weak solution 
$$\bm{\nu}^{\infty}=\{\nu_{k}\}_{k} \in L^{\infty}([0,T_{\alpha}),\mathcal{X})$$
with initial datum $\nu_{k}(0)=f_{0}^{\otimes k}$ for all $k \geq 1$. In particular, $\nu_{k}(t)= f(t)^{\otimes k},$ for any $t \in [0,T_{\alpha}), k \geq 1$  where $f(t)$ is the unique solution to \eqref{BE} with initial datum $f(0)=f_{0}$.
\end{conj}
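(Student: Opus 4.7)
The strategy I would follow is to adapt the Duhamel iteration used in the proof of Theorem~\ref{unique} to the hard-sphere setting by replacing the global operator bound of Lemma~\ref{lem:itera} with a loss-of-regularity bound in a scale of Gaussian-weighted Banach spaces, in the spirit of the abstract Cauchy--Kovalevskaya framework that underlies Lanford's short-time uniqueness for the classical Boltzmann hierarchy. In this approach the smallness of $\alpha$ and $T_\alpha$ is not used to deform the $\alpha=0$ solution perturbatively, but rather to guarantee that the Gaussian moments of the family $\{\nu_k(t)\}_k$ required by the iteration actually persist throughout.

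Concretely, after (if necessary) strengthening \eqref{normalization} so that $f_0$ carries a small Gaussian moment, I would introduce for $\beta,z>0$ the scale
$$
\mathcal{X}_{\beta,z}:=\Bigl\{\bm{\nu}=\{\nu_k\}_k\ :\ \|\bm{\nu}\|_{\beta,z}:=\sup_{k\ge 1}z^{-k}\int_{\R^{3k}}e^{\beta\sum_{i=1}^{k}|v_i|^2}\,|\nu_k|(\d\Vk)<\infty\Bigr\},
$$
so that the tensorized datum $\{f_0^{\otimes k}\}_k$ lives in $\mathcal{X}_{\beta_0,z_0}$ for some $\beta_0,z_0>0$. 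The key technical estimate replacing \eqref{eq:stimaGk} would be, for $0<\beta'<\beta$,
$$
\bigl|\langle \nu_{k+1},\,\Gk^{\alpha}\Phi_k\rangle_{k+1}\bigr|\ \le\ \frac{C\,(2-\alpha)\,k}{\sqrt{\beta-\beta'}}\,\|\Phi_k\|_{\beta',\infty}\,\|\bm{\nu}\|_{\beta,z}\,z^{k+1},
$$
with $\|\Phi_k\|_{\beta',\infty}:=\sup_{\Vk}|\Phi_k(\Vk)|\,e^{-\beta'\sum_i|v_i|^2}$. It would follow from the elementary bound $|v-v_*|\,e^{-(\beta-\beta')(|v|^2+|v_*|^2)}\le C(\beta-\beta')^{-1/2}$ combined with the energy-preserving change of variables $(v_i,v_{k+1})\mapsto(v_i',v_{k+1}')$ applied to the gain piece of $\Gk^\alpha$.

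With this estimate at hand, I would iterate \eqref{sol:BHweak2} for the difference $\bm{\beta}_k(t)=\nu_k^{(1)}(t)-\nu_k^{(2)}(t)$ of two ABH-solutions exactly as in the proof of Theorem~\ref{unique}, but along a geometrically decreasing sequence of Gaussian parameters $\beta_{n}=\beta_0\bigl(1-n/(N+1)\bigr)$, so that the product of the $n$ factors $(\beta_{n-1}-\beta_n)^{-1/2}$ produced along the iteration is controlled by the classical Nirenberg--Nishida combinatorial identity. This would yield $\bm{\beta}_k(t)\equiv 0$ on an interval $[0,T_\alpha)$ whose length is of order $\beta_0^{1/2}/((2-\alpha)C)$. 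The existence of the tensorized solution $\nu_k(t)=f(t)^{\otimes k}$ in this same scale would then come from Proposition~\ref{theo:existence} combined with the short-time propagation of Gaussian moments for \eqref{BE}, adapted from the Bobylev-type argument for the elastic hard-sphere equation.

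The main obstacle I anticipate is that the Gaussian-weighted scale $\mathcal{X}_{\beta,z}$ needed to close the iteration is \emph{not} reached directly by the weak compactness argument of Proposition~\ref{prp:conv}, whose a priori estimates (Corollary~\ref{cor:energy}) only yield quadratic moments. One therefore has to upgrade the a priori estimates at the particle level by testing the master equation \eqref{PNVN} against a (truncated) exponential of the total kinetic energy, in the spirit of Proposition~\ref{prp:energy}: the dissipative structure of the elastic part cancels as for the energy, but the annihilation terms produce a remainder that is only sign-controllable when $\alpha$ is small enough. Securing such a uniform-in-$(N_0,\varepsilon)$ Gaussian moment bound on a $k$-independent time interval is what effectively dictates the thresholds $\alpha_0$ and $T_\alpha$ in the conjecture, and this is the step that must be clarified in order to turn the present plan into a complete proof.
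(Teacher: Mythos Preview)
The statement you address is Conjecture~\ref{conj:unique1}, which the paper explicitly leaves \emph{open}; there is no proof in the paper to compare against. The paper only sketches, in the surrounding discussion, a tentative \emph{perturbative} route: exploit the De~Finetti-based uniqueness available for $\alpha=0$ and try to transfer it to small $\alpha$ via $\Gk^{\alpha}\simeq\Gk^{0}$, possibly at the level of the statistical hierarchy \eqref{eq:Hierpialpha}. The authors themselves flag the obstacle that the compatibility property underlying De~Finetti breaks down for $\alpha>0$, and they do not claim to resolve it.

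Your route is genuinely different and analytically more concrete: rather than perturbing the $\alpha=0$ structure, you run a Lanford/Cauchy--Kovalevskaya iteration in a scale of Gaussian-weighted spaces $\mathcal{X}_{\beta,z}$, trading the unavailable boundedness of $\Gk^{\alpha}$ for a loss-of-regularity estimate with factor $(\beta-\beta')^{-1/2}$. This is the standard mechanism for short-time uniqueness of hard-sphere hierarchies, and the iteration you outline should indeed close on a short time interval provided the competing solutions lie in some $\mathcal{X}_{\beta,z}$. Compared to the paper's suggestion, your scheme bypasses the compatibility issue entirely, at the price of requiring stronger a~priori control on the solutions.

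That ``provided'' is the genuine gap, and you have located it correctly: Definition~\ref{def:weaksol} only requires membership in $\mathcal{X}$ (first-order weighted moments), whereas your iteration needs uniform-in-$k$ Gaussian moments. Propagating exponential moments through \eqref{PNVN} and passing them to the limit points of Proposition~\ref{prp:conv} would yield the Gaussian bound for \emph{those particular} solutions, hence propagation of chaos, but would \emph{not} establish uniqueness for an arbitrary weak solution in the sense of Definition~\ref{def:weaksol}. So even fully executed, your programme delivers a restricted form of the conjecture (uniqueness within the Gaussian-weighted subclass), which is arguably the operationally relevant target but should be stated as such. A secondary point worth revisiting: the CK iteration itself carries no smallness requirement on $\alpha$ (the factor $(2-\alpha)$ is harmless), and since annihilation is dissipative for the quadratic energy (Proposition~\ref{prp:energy}) one would expect it to be at worst neutral for exponential moments as well; it is therefore not obvious that the thresholds $\alpha_{0}$ and $T_{\alpha}$ are really forced by the moment-propagation step you isolate, and this deserves a closer look before building the statement around it.
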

 
It is interesting to observe that it seems likely that Conjecture \ref{conj:unique1} could be proved using a perturbative argument, looking at the case $\alpha \simeq 0$ as perturbation of the more handable case $\alpha=0$. 
More precisely, it could be possible to exploit the fact that for $\alpha=0$ there is a unique solution to the Boltzmann Hierarchy, as explained in Section \ref{ss:DeFinFail}, and obtain   
the uniqueness of a solution to \eqref{sol:BHweak} for $\alpha$ small enough and for a 
possibly small time interval $[0,T_{\alpha})$. 
For instance, we could think to apply the perturbative argument not necessarily at the level of the Annihilated Boltzmann Hierarchy but at the level of the hierarchy of equations \eqref{eq:Hierpi} and try to prove that, for $\alpha$ positive but small enough, the perturbed hierarchy 
\begin{multline}\label{eq:Hierpialpha}
\int_{\mathsf{P}(\R^{3})}\langle \mathsf{p}^{\otimes k}, \Phi_{k}\rangle_{k}\pi_{t}^{\alpha}(\d \mathsf{p})
=\langle f_{k}(0)^{\otimes k}, \Phi_{k}\rangle_{k}+
  \int_{0}^{t}\d s \int_{\mathsf{P}(\R^{3})} \langle \mathsf{p}^{\otimes (k+1)}, \Gk^{\alpha}\Phi_{k} \rangle_{k+1} \pi_{s}^{\alpha}(\d \mathsf{p}) \\
  \qquad \forall t \in [0,T),\quad k \geq 1, \qquad \Phi_{k}=\phi_{1}\otimes \cdots \otimes \phi_{k} \in \C_{0}(\R^{3k})
\end{multline}
still admits a unique solution  $\pi_{t}^{\alpha} \simeq \pi_{t}$ for $\alpha \simeq 0$ (in a suitable sense to be clarified). Such a solution $\pi_{t}^{\alpha}$ has then to be interpreted in terms of (statistical) solutions to \eqref{sol:BHweak}. To prove this, we should exploit the fact that
$$\Gk^{\alpha}\Phi_{k} \simeq \Gk^{0}\Phi_{k}, \qquad \text{ for } \alpha \simeq 0, \quad k \geq 1.$$
Notice however that the implementation of this approach 
 is again non trivial -- essentially because \eqref{sol:BHweak} does not admit a unique solution in the full space $\mathcal{X}$ but only in the subclass of \emph{compatible} sequences. As already observed, this subclass does not seem to be stable in the limit $\alpha \simeq 0.$

\subsection{Self-similar variables}

Let us consider a solution $f=f(t,v)$ to \eqref{BE} for some nonnegative initial datum $f_{0} \in L^{1}_{3}(\R^{d}).$ It has been shown in \cite[Proposition 1.2]{ABL} that, introducing $\psi(\tau,\xi)$ through
\begin{equation}\label{scalingPsi}
f(t,v)=n_{f}(t){(2T_{f}(t))^{-3/2}}\psi\left(\tau(t),\frac{v-\bm{u}_{f}(t)}{\sqrt{2T_{f}(t)}}\right)
\end{equation} 
with $n_{f}(t)=\displaystyle\int_{\R^{3}}f(t,v)\d v$,   $n_{f}(t)\bm{u}_{f}(t)=\displaystyle\int_{\R^{3}}f(t,v)v\d v,$ and
$$3\,n_{f}(t)T_{f}(t)=\int_{\R^{3}}f(t,v)|v-\bm{u}_{f}(t)|^{2}\d v, \qquad \tau(t)=\sqrt{2}\int_{0}^{t} n_{f}(s)\sqrt{T_{f}(s)}\d s, \qquad t \geq0.$$then it holds that $\psi(\tau,\xi)$ 
is the unique solution to 
\begin{align}\label{rescaBE}
\begin{split}
\partial_{\tau}\psi(\tau,\xi) + \big(\mathbf{A}_{\psi}(\tau) - &d \mathbf{B}_{\psi}(\tau)\big)\,\psi(\tau,\xi) + \mathbf{B}_{\psi}(\tau)\mathrm{div}_{\xi}\big(\left({\xi}-\bm{v}_{\psi}(\tau)\right)\psi(\tau,\xi))\\
&= (1-\alpha)\Q(\psi,\psi)(\tau,\xi)-\alpha\Q_{-}(\psi,\psi)(\tau,\xi)\end{split}\end{align}
with initial datum $\psi(0,\xi)=(2T_{f_{0}})^{3/2}n_{f_{0}}^{-1}\;f_{0}\left(\sqrt{2T_{f_{0}}}\,\xi+\bm{u}_{f_{0}}\right)$ and 
where $\mathbf{A}_{\psi}(\cdot),\mathbf{B}_{\psi}(\cdot)$ and $\bm{v}_{\psi}(\cdot)$ are defined by 
\begin{equation*}
\left(\begin{array}{c}\mathbf{A}_\psi(\tau)\\  \mathbf{B}_{\psi}(\tau) \\  \mathbf{B}_{\psi}(\tau)\,\bm{v}_{\psi}(\tau)\end{array}\right)=-\frac{\alpha}{2}\displaystyle\int_{\R^{3}} \Q_-(\psi ,\psi )(t,\xi)\left(\begin{array}{c} 5 
-2|\xi|^{2}\\  1-\frac{2}{3}|\xi|^2 \\  2\xi\end{array}\right)\d\xi, \qquad \forall \tau \geq 0.
\end{equation*}
The interesting feature of the above equivalent kinetic annihilation model is that, in contrast to \eqref{BE}, it is \emph{conservative}, i.e.
\begin{equation}\label{eq:conserve}
\int_{\R^{d}}\psi(\tau,\xi)\left(\begin{array}{c}1\\\xi \\ {|\xi|^{2}}\end{array}\right)\d\xi=\int_{\R^{d}}\psi(0,\xi)\left(\begin{array}{c}1\\\xi \\ {|\xi|^{2}}\end{array}\right)\d\xi=\left(\begin{array}{c}1 \\0 \\\frac{d}{2}\end{array}\right) \qquad \forall \tau \geq0.\end{equation} 
As observed in Section \ref{ss:DeFinFail}, the dissipation of mass for \eqref{BE} is the main obstacle in obtaining the uniqueness for ABH for more general kernel, using  De Finetti's approach. Due to the equivalence between \eqref{rescaBE} and \eqref{BE}, a possible way to overcome this obstacle is to derive \eqref{rescaBE} from a particle system. In this case, it would be possible to exploit the canonical ensemble formalism (instead of the grand canonical approach followed in this paper) but the prize to pay would be to derive a \emph{non-autonomous} Kac's model. Another technical difficulty seems to be that, as it is apparent from the form of $\mathbf{A}_{\psi},\mathbf{B}_{\psi}$, the above \eqref{rescaBE} is actually a \emph{trilinear equation} for the unknown $\psi(\tau,\xi).$ Attempts to derive trilinear kinetic equations from a particle system seem highly technical.

\appendix
\section{Properties of the semigroup $\left(\mathcal{S}_N(t)\right)_{t \geq 0}$} \label{appendix1}
 
We establish here, for a given $N \geq 1$, several properties of the $C_{0}$-semigroup $\left(\mathcal{S}_{N}(t)\right)_{t \geq 0}$ generated by $\mathcal{L}_{N}.$ Since, in all this section, $N \geq 1$ is fixed, we will simply write $\Sigma=\sigma_{N}$, $\mathcal{L}=\mathcal{L}_{N}$, $\mathcal{S}(t)=\mathcal{S}_{N}(t)$ and set $d=3N$. 
Therefore, one has
$$\Sigma=\Sigma(\V)=\sum_{1\leq i < j \leq N}\int_{\S}\mathcal{B}(v_{i}-v_{j},\omega)\d\omega=\sum_{1\leq i < j \leq N}|v_{i}-v_{j}| \qquad \forall \V=(v_{1},\ldots,v_{N})\in \R^{d}.$$
We define for simplicity
$$L^{1}=L^{1}(\R^{d}) \qquad \text{ and } \qquad L^{1}_{\Sigma}=L^{1}(\R^{d},\Sigma(V)\d V).$$
The multiplication operator
$$T\::\D(T) \subset L^{1}(\R^{d}) \to L^{1}(\R^{d})$$
defined by
$$T\Phi (\V)=-\Sigma(\V)\Phi(\V)\,\qquad \forall \Phi \in \D(T)=L^{1} \cap L^{1}_{\Sigma}$$
is the generator of a positive $C_{0}$-semigroup $\left({U}_{0}(t)\right)_{t \geq 0}$ in $L^{1}$ given by
\begin{equation}\label{U0}
U_{0}(t)\Phi (\V)=\exp\left(-\Sigma(\V)t\right)\Phi(\V) \qquad \V \in \R^{d}.\end{equation}
Since $x\exp(-x) \leq 1/e$ for any $x \geq 0$, one easily gets that
$$\|U_{0}(t)\Phi\|_{L^{1}_{\Sigma}} \leq \dfrac{1}{et}\,\|\Phi\|_{L^{1}} \qquad \forall t > 0.$$
Moreover, for any $\Phi \in L^{1}$ and $T > 0$, one also has
\begin{multline*}
\int_{0}^{T}\|U_{0}(t)\Phi\|_{L^{1}_{\Sigma}}\d t=\int_{0}^{T}\d t\int_{\R^{d}}\left|\Phi(\V)\,\Sigma(\V)\exp\left(-t\Sigma(\V)\right)\right|\d \V\\
=\int_{\R^{d}}\Sigma(V)\,|\Phi(\V)|\,\d \V\int_{0}^{T}\exp\left(-t\Sigma(\V)\right)\d t=\int_{\R^{d}}|\Phi(\V)|\left(1-\exp(-T\Sigma(\V))\right)\d \V
\end{multline*}
so that
\begin{equation}\label{smo}
\int_{0}^{T}\|U_{0}(t)\Phi\|_{L^{1}_{\Sigma}}\d t\leq \|\Phi\|_{L^{1}} \qquad \forall \Phi \in L^{1},\;\;T > 0.
\end{equation}
Moreover, for any $\lambda >0$, the resolvent $(\lambda-T)^{-1}$ of $T$ satisfies
$$\|(\lambda-T)^{-1}\Phi\|_{L^{1}_{\Sigma}} \leq \sup_{\V \in \R^{d}}\dfrac{\Sigma(\V)}{\lambda+\Sigma(\V)}\|\Phi\|_{L^{1}} =\|\Phi\|_{L^{1}}.$$
Let us introduce the linear operator $K=(1-\alpha)\mathbf{G}_{N}$. Then, since 
$$\|\mathbf{G}_{N}\Phi\|_{L^{1}}=\|\Phi\|_{L^{1}_{\Sigma}} \qquad \forall \Phi \in L^{1}_{\Sigma}, \:\Phi \geq 0$$
it follows that 
$$\|K(\lambda-T)^{-1}\Phi\|_{L^{1}} \leq (1-\alpha)\|\Phi\|_{L^{1}} \qquad \forall \lambda >0\,\forall \Phi \in L^{1}.$$
Therefore, as soon as $\alpha \in (0,1)$, one deduces from Desch's Theorem (see for instance \cite{m2As, MMK}) that 
$$\mathcal{L}=\mathcal{L}_{N}=K+T, \qquad \D(\mathcal{L})=\D(T)$$
is the generator of a positive $C_{0}$-semigroup $\left(\mathcal{S}(t)\right)_{t \geq 0}$ of contractions given by the Dyson-Phillips expansion series:
$$\mathcal{S}(t)\Phi=\sum_{{j=0}}^{\infty}U_{j}(t)\Phi$$
with $U_{0}(t)$ defined by \eqref{U0} and
$$U_{j+1}(t)\Phi=\int_0^{t}U_{j}(t-s)KU_{0}(s)\Phi\d s; \qquad \Phi \in L^{1}\;;\,j \geq 0.$$
It is easy to check that $U_{j}(t)$ inherits the smoothing property \eqref{smo}, namely
\begin{lem}
For any $T > 0$ and any $j \geq 0$, one has
\begin{equation}\label{Uj}\int_{0}^{T}\|U_{j}(t)\Phi\|_{X_{\Sigma}}\d t \leq (1-\alpha)^{j}\,\|\Phi\|_{X} \qquad \forall \Phi \in X.\end{equation}
\end{lem}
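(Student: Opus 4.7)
The plan is to prove the estimate by induction on $j \geq 0$, using the Dyson--Phillips recurrence $U_{j+1}(t)\Phi=\int_{0}^{t}U_{j}(t-s)KU_{0}(s)\Phi\,\d s$ together with the two building blocks already available: the base smoothing inequality \eqref{smo} for $U_{0}$ and the identity $\|\mathbf{G}_{N}\Phi\|_{L^{1}}=\|\Phi\|_{L^{1}_{\Sigma}}$ valid for nonnegative $\Phi$. The base case $j=0$ is exactly \eqref{smo}.

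Before running the induction I would reduce to nonnegative $\Phi$. Writing $\Phi=\Phi^{+}-\Phi^{-}$, the operator $U_{0}(t)$ (multiplication by the positive exponential $e^{-t\Sigma(\V)}$) and $\mathbf{G}_{N}$ (a positive integral operator) are positivity-preserving, and hence so is each $U_{j}(t)$ by induction on the Dyson--Phillips formula. Since $\|\cdot\|_{L^{1}_{\Sigma}}$ is a lattice norm and $\|\Phi\|_{L^{1}}=\|\Phi^{+}\|_{L^{1}}+\|\Phi^{-}\|_{L^{1}}$, it suffices to prove the bound for $\Phi \geq 0$.

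Assume now that $\Phi \geq 0$ and that the estimate holds at step $j$. By Minkowski's integral inequality applied in $L^{1}_{\Sigma}(\R^{d})$, followed by Fubini's theorem on the triangle $\{0\leq s\leq t\leq T\}$ and the substitution $\tau=t-s$, I obtain
$$\int_{0}^{T}\|U_{j+1}(t)\Phi\|_{L^{1}_{\Sigma}}\,\d t \leq \int_{0}^{T}\d s\int_{0}^{T-s}\|U_{j}(\tau)K U_{0}(s)\Phi\|_{L^{1}_{\Sigma}}\,\d \tau \leq \int_{0}^{T}\d s\int_{0}^{T}\|U_{j}(\tau)K U_{0}(s)\Phi\|_{L^{1}_{\Sigma}}\,\d \tau.$$
Since $K U_{0}(s)\Phi \geq 0$, the inductive hypothesis applied to $KU_{0}(s)\Phi$ bounds the inner $\tau$-integral by $(1-\alpha)^{j}\|KU_{0}(s)\Phi\|_{L^{1}}$. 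Using $K=(1-\alpha)\mathbf{G}_{N}$ and the identity $\|\mathbf{G}_{N}\Psi\|_{L^{1}}=\|\Psi\|_{L^{1}_{\Sigma}}$ with $\Psi=U_{0}(s)\Phi \geq 0$, this gives $(1-\alpha)^{j+1}\|U_{0}(s)\Phi\|_{L^{1}_{\Sigma}}$. Integrating in $s$ and invoking \eqref{smo} one last time yields
$$\int_{0}^{T}\|U_{j+1}(t)\Phi\|_{L^{1}_{\Sigma}}\,\d t \leq (1-\alpha)^{j+1}\int_{0}^{T}\|U_{0}(s)\Phi\|_{L^{1}_{\Sigma}}\,\d s \leq (1-\alpha)^{j+1}\|\Phi\|_{L^{1}},$$
which closes the induction.

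No serious obstacle is expected. The only points requiring care are the positivity reduction and the justification of the Minkowski/Fubini manipulation; both are routine since every integrand in sight is measurable and nonnegative after reducing to $\Phi\geq 0$. The enlargement of the inner domain from $[0,T-s]$ to $[0,T]$ is the mild loss that makes the induction close cleanly with the factor $(1-\alpha)^{j+1}$.
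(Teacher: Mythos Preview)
Your proof is correct and follows essentially the same route as the paper's own argument: induction on $j$, Fubini on the triangle, the substitution $\tau=t-s$ with enlargement of the inner domain, the induction hypothesis applied to $KU_{0}(s)\Phi$, and a final appeal to \eqref{smo}. The only cosmetic difference is that you explicitly reduce to $\Phi\geq 0$ before using $\|\mathbf{G}_{N}\Psi\|_{L^{1}}=\|\Psi\|_{L^{1}_{\Sigma}}$, whereas the paper simply invokes the inequality $\|K\psi\|_{L^{1}}\leq(1-\alpha)\|\psi\|_{L^{1}_{\Sigma}}$ for arbitrary $\psi$; your treatment is marginally more explicit but not different in substance.
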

\begin{proof} We argue by induction. We observe that \eqref{smo} establishes the wished property for $j=0$. For a given $j \geq 0$, let us assume \eqref{Uj} holds true. Then, from the definition of $U_{j+1}(t)$ one has
\begin{multline*}
\int_{0}^{T}\|U_{j+1}(t)\Phi\|_{L^{1}_{\Sigma}}\d t\leq \int_{0}^{T}\d t\int_{0}^{t}\|U_{j}(t-s)KU_{0}(s)\Phi\|_{L^{1}_{\Sigma}}\d s\\
=\int_{0}^{T}\d s \int_{s}^{T}\|U_{j}(t-s)KU_{0}(s)\Phi\|_{L^{1}_{\Sigma}}\d t\\
\leq \int_{0}^{T}\d s \int_{0}^{T}\|U_{j}(\tau)KU_{0}(s)\Phi\|_{L^{1}_{\Sigma}}\d \tau.
\end{multline*}
For any $s \geq 0$, applying the induction hypothesis \eqref{Uj} to $KU_{0}(s)\Phi$ one gets
$$\int_0^{T}\|U_{j+1}(t)\Phi\|_{L^{1}_{\Sigma}}\d t \leq (1-\alpha)^{j}\int_{0}^{T}\|KU_{0}(s)\Phi\|_{L^{1}}\d s$$
and, since $\|K\psi\|_{L^{1}} \leq (1-\alpha)\|\psi\|_{L^{1}_{\Sigma}}$ for any $\psi \in L^{1}_{\Sigma}$, one gets
$$\int_{0}^{T}\|U_{j+1}(t)\Phi\|_{L^{1}_{\Sigma}}\d t \leq (1-\alpha)^{j+1}\,\int_{0}^{T}\|U_{0}(s)\Phi\|_{L^{1}_{\Sigma}}\d s.$$
Using again \eqref{smo}, one sees that \eqref{Uj} is satisfied by $U_{j+1}(t)$ which achieves the proof.
\end{proof}

\begin{proof}[Proof of Proposition \ref{propLN}]
The proof of \eqref{smoothing} is now a direct consequence of \eqref{Uj} since, for any $T >0$ and any $\Phi \in L^{1}$, the following inequality holds:
\begin{equation*}
\int_{0}^{T}\|\mathcal{S}(t)\Phi\|_{L^{1}_{\Sigma}}\d t \leq \sum_{j=0}^{\infty}\int_{0}^{T}\|U_{j}(t)\Phi\|_{L^{1}_{\Sigma}}\d t
\leq \sum_{j=0}^{\infty}(1-\alpha)^{j}\|\Phi\|_{L^{1}}={\alpha}^{-1}\|\Phi\|_{L^{1}}.
\end{equation*}
This gives the result.
\end{proof}

\section*{Acknowledgments} B. L and A. N. acknowledge support from the \textit{de Castro Statistics Initiative}, Collegio
C. Alberto, Torino, Italy. A.N. acknowledges support through the CRC 1060 \emph{The mathematics of emergent effects} of the University of Bonn that is funded through the German
Science Foundation (DFG).

\end{document}